  \renewcommand{\Pr}{\mbox{\rm Pr}}
  \newcommand{\Exp}{{\mathbb{E}}}
  \newcommand{\E}{\mathbb{E}}
  \newcommand{\R}{\mathbb{R}} 
  \newcommand{\N}{\mathbb{N}} 
  \newcommand{\Z}{\mathbb{Z}} 
  \newcommand{\F}{\mathbb{F}} 
  \newcommand{\pmset}[1]{\{-1,1\}^{#1}} 
  \newcommand{\bset}[1]{\{0,1\}^{#1}} 
  \DeclareMathOperator{\Anorm}{sp}
  \newcommand{\st}{:\,} 
  \newcommand{\eps}{\varepsilon}
  \newcommand{\ceil}[1]{\lceil{#1}\rceil}
  \DeclareMathOperator{\spec}{Spec}
\newcommand{\inpro}[2]{\langle #1,#2 \rangle}
\newcommand{\norm}[1]{\left\lVert #1\right\rVert}
\newcommand{\mathbbm}[1]{\mathbb{1}}
\newcommand{\specnorm}[1]{\left\lVert #1 \right\rVert_{\mathrm{sp}}}
\def\F{{\mathbb{F}}}
\def\Z{{\mathbb{Z}}}
\def\N{{\mathbb{N}}}
\def\R{{\mathbb{R}}}
\def\E{{\mathbb E}}
\def\cA{{\mathcal A}}
\def\sBC{{\{-1,1\}}}
\def\vc{{\mathrm{vc}}}
\def\PF{{\mathbb{P}\mathbb{F}}}
  \newcommand{\beq}{\begin{equation}}
  \newcommand{\eeq}{\end{equation}}
  \newcommand{\beqn}{\begin{equation*}}
  \newcommand{\eeqn}{\end{equation*}}
  \newcommand{\beqr}{\begin{eqnarray}}
  \newcommand{\eeqr}{\end{eqnarray}}
  \newcommand{\beqrn}{\begin{eqnarray*}}
  \newcommand{\eeqrn}{\end{eqnarray*}}
  \newcommand{\bmline}{\begin{multline}}
  \newcommand{\emline}{\end{multline}}
  \newcommand{\bmlinen}{\begin{multline*}}
  \newcommand{\emlinen}{\end{multline*}}
  \theoremstyle{plain}
  \newtheorem{theorem}{Theorem}[section]
  \newtheorem{lemma}[theorem]{Lemma}
  \newtheorem{corollary}[theorem]{Corollary}
  \newtheorem{question}[theorem]{Question}
  \theoremstyle{definition}
  \newtheorem{definition}[theorem]{Definition}
  \theoremstyle{remark}
  \renewenvironment{proof}[1][]{
    	\begin{trivlist}
     	\item[\hspace{\labelsep}{\em\noindent Proof#1:\/}]}
     	{{\hfill$\Box$}
    	\end{trivlist}
  }
  \newtheorem*{rep@theorem}{\rep@title}
  \newcommand{\newreptheorem}[2]{%
  \newenvironment{rep#1}[1]{%
  \def\rep@title{#2 \ref{##1}}%
  \begin{rep@theorem}}%
  {\end{rep@theorem}}}
\begin{document}

\title{Outlaw distributions and locally decodable codes\thanks{A preliminary version of this paper appeared in the proceedings of ITCS 2017~\cite{BDG_ITCS:2017}}}
\author{
Jop Bri\"{e}t\thanks{
CWI, Science Park 123, 1098 XG Amsterdam, the Netherlands.
Supported by a VENI grant and the Gravitation-grant NETWORKS-024.002.003 from the Netherlands Organisation for Scientific Research~(NWO). Email: \texttt{j.briet@cwi.nl}.} 
\and 
Zeev Dvir\thanks{Department of Computer Science and Department of Mathematics, Princeton University, Princeton, NJ 08540, USA.  Research supported by NSF grant CCF-1523816 and by the Sloan foundation. Email: \texttt{zeev.dvir@gmail.com}.} 
\and 
Sivakanth Gopi\thanks{Department of Computer Science, Princeton University, Princeton, NJ 08540, USA.  Research supported by NSF grant CCF-1523816 and by the Sloan foundation. Email: \texttt{sgopi@cs.princeton.edu}.} 
}
\date{}
\maketitle
\begin{abstract}
Locally decodable codes (LDCs) are error correcting codes that allow for decoding of a single message bit using a small number of queries to a corrupted encoding. Despite decades of study, the optimal trade-off between query complexity and codeword length is far from understood.  In this work, we give a new characterization of LDCs  using distributions over Boolean functions whose expectation is hard to approximate (in~$L_\infty$~norm) with a small number of samples. We coin the term `outlaw distributions' for such distributions since they `defy' the Law of Large Numbers. We show that the existence of outlaw distributions over sufficiently `smooth' functions implies the existence of constant query LDCs and vice versa. We give several candidates for outlaw distributions over smooth functions coming from finite field incidence geometry, additive combinatorics and from hypergraph (non)expanders.

We also prove a useful lemma showing that (smooth) LDCs which are only required to work on average over a random message and a random message index can be turned into true LDCs at the cost of only constant factors in the parameters.
\end{abstract}

\section{Introduction}

Error correcting codes (ECCs) solve the basic problem of communication over noisy channels. 
They encode a message into a codeword from which, even if the channel partially corrupts it, the message can later be retrieved.
With one of the earliest applications of the \emph{probabilistic method}, formally introduced by  Erd\H{o}s in 1947, pioneering work of Shannon~\cite{Shannon:1948} showed the existence of optimal (capacity-achieving) ECCs.
The problem of explicitly constructing such codes has fueled the development of coding theory ever since.
Similarly, the exploration of many other fascinating structures, such as Ramsey graphs, expander graphs, two source extractors, etc., began with a striking existence proof  via the probabilistic method, only to be followed by decades of catch-up work on explicit constructions.
Locally decodable codes (LDCs) are a special class of error correcting codes whose development has not followed this line.
The defining feature of LDCs is that they allow for ultra fast decoding of single message bits, a property that typical ECCs lack, as their decoders must read an entire (possibly corrupted) codeword to achieve the same.
They were first formally defined in the context of channel coding in~\cite{Katz:2000}, although they (and the closely related locally correctable codes) implicitly appeared in several previous works in other settings, such as program checking~\cite{BK95}, probabilistically checkable proofs~\cite{AS98,ALMSS98} and private information retrieval schemes (PIRs)~\cite{CKGS98}. More recently, LDCs have even found applications in Banach-space geometry~\cite{Briet:2012d} and LDC-inspired objects called local reconstruction codes found applications in fault tolerant distributed storage systems~\cite{GHSY12}. See \cite{Yekhanin:2012} for a survey of LDCs and some of the applications.

Despite their many applications, our knowledge of LDCs is very limited; the best-known constructions are far from what is currently known about their limits.
Although standard random (linear) ECCs do allow for some weak local-decodability, they are outperformed by even the earliest explicit constructions~\cite{KS07}.
All the known constructions of LDCs 
were obtained by explicitly designing such codes using some algebraic objects like low-degree polynomials or matching vectors~\cite{Yekhanin:2012}. 

In this paper, we give a characterization of LDCs in probabilistic and geometric terms, making them amenable to probabilistic constructions. On the flip side, these characterizations might also be easier to work with for the purpose of showing lower bounds. We will make this precise in the next section.
Let us first give the formal definition of an LDC.

\begin{definition}[Locally decodable code]\label{ldc:def}
For positive integers~$k, n,q$ and parameters $\eta,\delta\in (0,1/2]$, a map $C: \bset{k}\to\bset{n}$ is a $(q,\delta,\eta)$-\emph{locally decodable code} if, for every~$i\in[k]$, there exists a randomized \emph{decoder} (a probabilistic algorithm)~$\mathcal{A}_i$  such that:
 
\begin{itemize}
\item For every message $x\in\bset{k}$ and string~$y\in\bset{n}$ that differs from the codeword~$C(x)$ in at most~$\delta n$ coordinates,
\beq\label{eq:ldcdec}
\Pr[\mathcal A_i( y) = x_i] \geq \frac{1}{2} + \eta.
\eeq
\item The decoder $\mathcal{A}_i$ (non-adaptively) queries at most $q$ coordinates of $y$.\footnote{
We can assume that on input~$y\in\bset{n}$, the decoder~$\mathcal A_i$ first samples a set~$S\subseteq[n]$ of at most~$q$ coordinates according to a probability distribution depending on~$i$ only and then returns a random bit depending only on~$i$,~$S$ and the values of~$y$ at~$S$.
}
\end{itemize}
\end{definition}

\paragraph{Known results}
The main parameters of LDCs are the number of queries $q$ and the length of the encoding~$n$ as a function of $k$ and $q$, typically the parameters $\delta,\eta$ are fixed constants. The simplest example is the Hadamard code, which is a 2-query LDC with $n=2^k$. The 2-query regime is the only nontrivial case where optimal lower bounds are known: it was shown in ~\cite{Kerenidis:2004,GKST02} that exponential length is necessary. In general, Reed-Muller codes of degree $q-1$ are $q$-query LDCs of length $n=\exp(O(k^{1/q-1}))$. For a long time, these were the best constructions for constant $q$, until in a breakthrough work by~\cite{Yekhanin:2007,Efremenko:2009}, $3$-query LDCs were constructed with subexponential length $n=\exp(\exp(O(\sqrt{\log k})))$. More generally they constructed $2^r$-query LDCs with length $n=\exp(\exp(O(\log^{1/r} k)))$. For $q\geq 3$, the best-known lower bounds leave huge gaps, giving only polynomial bounds. A $3$-query LDC must have length $n\ge \tilde{\Omega}(k^2)$, and more generally for a $q$-query LDC, $n\ge \tilde{\Omega}(k^{1+1/(\ceil{q/2}-1)})$~\cite{Kerenidis:2004,Woodruff:2007a}\footnote{We use $\tilde{\Omega}(\ ), \tilde{O}(\ ), \tilde{\omega}(\ ), \tilde{o}(\ )$ to hide polylogarithmic factors through out this paper.}. LDCs where the codewords are over a large alphabet are studied because of their relation to private information retrieval schemes~\cite{CKGS98,Katz:2000}. In~\cite{DG15}, 2-query LDCs of length $n=\exp(k^{o(1)})$ over an alphabet of size $\exp(k^{o(1)})$ were constructed.
There is also some exciting recent work on LDCs where the number of queries grows with~$k$, in which case there are explicit constructions with constant-rate (that is, $n = O(k)$)  and query complexity $q=\exp(O(\sqrt{\log n}))$; in fact we can even achieve the optimal rate-distance tradeoff of traditional error correcting codes~\cite{KSY10,KMRS16,GKORS16}. We cannot yet rule out the exciting possibility that constant rate LDCs with polylogarithmic query complexity exist.

%
%

%
\subsection{LDCs from distributions over smooth Boolean functions}

Our main result shows that LDCs can be obtained from ``outlaw'' distributions over ``smooth'' functions.
The term outlaw refers to the Law of Large Numbers, which says that the average of independent samples tends to the expectation of the distribution from which they are drawn.
Roughly speaking, a probability distribution is an outlaw if many samples are needed for  a good estimation of the expectation
and
a smooth function over the $n$-dimensional Boolean hypercube is one that has no influential variables.
Paradoxically, while many instances of the probabilistic method use the fact that sample means of a small number of independent random variables tend to concentrate around the true mean, as captured for example by the Chernoff bound, our main result requires precisely the opposite. 
We show that if \emph{at least} $k$ samples from a distribution over smooth functions are needed to approximate the mean, then there exists an $O(1)$-query LDC sending $\bset{\Omega(k)}$ to $\bset{n}$, where the hidden constants depend only the smoothness and mean-estimation parameters.

To make this precise, we now formally define smooth functions and outlaw distributions. 
Given a function $f:\pmset{n}\to \R$, its \emph{spectral norm} 
is defined as
$$
\specnorm{f} =\sum_{S\subset [n]}|\widehat{f}(S)|,
$$
where $\widehat{f}(S)$ are the Fourier coefficients of $f$ (see Section~\ref{sec:prelims} for basics on Fourier analysis). 
We also consider the supremum norm, $\norm{f}_{L_\infty}=\sup\{|f(x)| \st x\in \pmset{n}\}.$
It follows from the Fourier inversion formula that $\norm{f}_{L_\infty}\le \specnorm{f}$. The \emph{$i$th discrete derivative} of $f$ is the function $(D_if)(x) = (f(x) - f(x^i))/2$, where $x^i$ is the point that differs from~$x$ on the $i$th coordinate. 
Smooth functions are functions whose discrete derivatives have small spectral norms.

\begin{definition}[$\sigma$-smooth functions]
For~$\sigma>0$, a function $f:\sBC^n\to \R$ is $\sigma$-smooth, if for every $i\in [n]$, we have $\specnorm{D_if}\le \sigma/n$.
\end{definition}

Intuition for the above definition may be gained from the fact that smooth functions have no influential variables.
The influences, $(\Exp_{x\in\pmset{n}}[(D_if)(x)^2])^{1/2}$, measure the extent to which changing the $i$th coordinate of a randomly chosen point changes the value of~$f$. Since $\norm{D_if}_{L_\infty}\le \specnorm{D_if}$, the directional derivatives of $\sigma$-smooth functions are uniformly bounded by $\sigma/n$, which is a much stronger condition than saying that the derivatives are small on average.
Outlaws are defined as follows.

\begin{definition}[Outlaw]\label{def:outlaw}
Let~$n$ be a positive integer and~$\mu$ be a probability distribution over real-valued functions on~$\pmset{n}$.
For a positive integer~$k$ and ${\eps > 0}$, say that~$\mu$ is a $(k, \eps)$-\emph{outlaw} if
for independent random $\mu$-distributed functions $f_1,\dots,f_{k}$ and $\bar f = \Exp_\mu[f]$,
\beqn
\Exp\left[\Big\|\frac{1}{k}\sum_{i=1}^k (f_i - \bar f)\Big\|_{L_\infty}\right] \geq \eps.
\eeqn
Denote by $\kappa_\mu(\epsilon)$ the largest integer~$k$ such that~$\mu$ is a $(k,\eps)$-outlaw.
\end{definition}

To approximate the true mean of an outlaw~$\mu$ to within~$\eps$ on average in the~$L_\infty$-distance, one thus needs~$\kappa_\mu(\eps)+1$ samples. Note that if $\mu$ is a distribution over $\sigma$-smooth functions, then the distribution $\tilde{\mu}$ obtained by scaling functions in the support of $\mu$ by $1/\sigma$ is a distribution over $1$-smooth functions and $\kappa_{\tilde{\mu}}(\eps/\sigma)=\kappa_{\mu}(\eps)$.

Our main result is then as follows.

\begin{theorem}[Main theorem]\label{thm:main}
Let $n$ be a positive integer and $\eps >0 $.
Let $\mu$ be a probability distribution over $1$-smooth functions on $\pmset{n}$ and $k = \kappa_\mu(\eps)$.
Then, there exists a $(q,\delta,\eta)$-LDC sending $\bset{l}$ to $\bset{n}$ where $l=\Omega(\eps^2 k/\log(1/\eps))$, $q=O(1/\eps)$, $\delta =\Omega(\eps)$ and $\eta=\Omega(\eps)$.
Additionally, if~$\mu$ is supported by degree-$d$ functions, then we can take $q=d$.
\end{theorem}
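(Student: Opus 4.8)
The plan is to build a ``smooth'' code that decodes correctly only on average over a random message and a random message index, and then feed it into the average-case-to-worst-case conversion lemma announced in the introduction. We may assume (a natural normalization; passing to odd parts preserves both $1$-smoothness and degree, and kills the mean) that every $f$ in the support of $\mu$ is mean-zero, so that $\specnorm{f}=\sum_{S\neq\emptyset}|\widehat f(S)|\le\sum_S|S|\,|\widehat f(S)|=\sum_i\specnorm{D_if}\le 1$. A symmetrization argument applied to the outlaw inequality then yields, for independent Rademacher $\epsilon_1,\dots,\epsilon_k$ and $\mu$-distributed $f_1,\dots,f_k$, that $\E\big[\,\norm{\tfrac1k\sum_i\epsilon_if_i}_{L_\infty}\,\big]\ge\eps/4$. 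Fix a tuple $f_1,\dots,f_k$ (from the support of $\mu$) achieving this bound on average over $\epsilon$, and define a code $C\colon\pmset{k}\to\pmset{n}$ by letting $C(\epsilon)$ be a canonical maximizer $x^\star(\epsilon)$ of $x\mapsto\big|\tfrac1k\sum_i\epsilon_if_i(x)\big|$, with the sign fixed (using oddness) so that $\tfrac1k\sum_i\epsilon_if_i(x^\star(\epsilon))=\norm{\tfrac1k\sum_i\epsilon_if_i}_{L_\infty}\ge 0$. The message will be a uniform $\epsilon$ and the index a uniform $i\in[k]$, and the whole construction hinges on the identity $\tfrac1k\sum_i\epsilon_if_i(x^\star(\epsilon))=\norm{\tfrac1k\sum_i\epsilon_if_i}_{L_\infty}$: it says that, on average over $i$, the bit $\epsilon_i$ is positively correlated with $f_i(x^\star(\epsilon))$.

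Next I describe the decoder for index $i$ and explain why it is local and smooth. Since $\specnorm{f_i}\le 1$, the numbers $|\widehat{f_i}(S)|$ form a sub-probability measure on subsets $S\subseteq[n]$; the decoder samples $S$ from it conditioned on $|S|\le q$, and on the ``fail'' event (probability $\le\sum_{|S|>q}|\widehat{f_i}(S)|\le 1/q$, again by $\sum_S|S|\,|\widehat{f_i}(S)|\le 1$) outputs a fair coin. Given $S$ it queries $\{x^\star_j:j\in S\}$ from the (possibly corrupted) codeword — at most $q$ coordinates — and outputs $\sign(\widehat{f_i}(S))\prod_{j\in S}x^\star_j$. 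A short computation shows that on the clean codeword this outputs $\epsilon_i$ with probability $\tfrac12+\tfrac12\big(\tfrac1k\sum_i\epsilon_if_i(x^\star(\epsilon))-O(1/q)\big)$ in expectation over the decoder's coins, which averaged over $(\epsilon,i)$ is $\tfrac12+\Omega(\eps)$ once $q=O(1/\eps)$. Crucially, coordinate $j$ is queried with probability at most $\sum_{S\ni j}|\widehat{f_i}(S)|=\specnorm{D_jf_i}\le 1/n$: the $\sigma$-smoothness of $f$ is literally the statement that the decoder's query distribution is smooth. When $\mu$ is supported on degree-$d$ functions no truncation is needed and every query set has size $\le d$, so $q=d$.

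Smoothness now buys error-tolerance for free. If the queried string differs from $C(\epsilon)$ in at most $\delta n$ coordinates, the decoder reads a corrupted coordinate with probability at most $\delta n\cdot\tfrac1n=\delta$, so its success probability decreases by at most $\delta$; taking $\delta=\Omega(\eps)$ preserves the advantage $\Omega(\eps)$. At this point I have a smooth code on $l=k$ message bits (up to an additive constant for an extra sign bit in the even-part case) with $q=O(1/\eps)$ (or $q=d$) and $\delta,\eta=\Omega(\eps)$, decoding correctly in expectation over a uniform message and a uniform index. Applying the average-case-to-worst-case lemma then produces a genuine $(q,\delta,\eta)$-LDC with the stated $q,\delta,\eta$; the shrinkage of the message length to $l=\Omega(\eps^2 k/\log(1/\eps))$ is the cumulative cost of the symmetrization step, the degree truncation, and that conversion.

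The main obstacle is the locality of the decoder. A priori $f_i$ is a dense low-degree multilinear polynomial, and recovering $\epsilon_i$ seems to require the value $f_i(x^\star(\epsilon))$, hence all of $x^\star(\epsilon)$. The resolution — the real content of the theorem — is that one only needs an extremely noisy unbiased estimate of $f_i(x^\star(\epsilon))$ (indeed a single signed character value, with the above rounding rule, already beats $\tfrac12$ by $\Omega(\eps)$ on average), and Fourier sampling supplies such an estimate from $q$ queries; moreover the hypothesis $\specnorm{D_jf}\le 1/n$ is calibrated exactly so that these $q$ queries are spread uniformly enough over $[n]$ to survive a constant fraction of adversarial errors. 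The remaining work — the symmetrization inequality, the reduction to odd mean-zero functions, and especially the average-to-worst-case conversion — is routine in spirit but is where the polynomial-in-$\eps$ and logarithmic losses in the final parameters are incurred.
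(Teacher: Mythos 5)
Your proposal is correct and follows essentially the same route as the paper's proof: symmetrization plus an averaging argument to extract fixed functions $f_1^\star,\dots,f_k^\star$ and a code $C(x)=x^\star(x)$ attaining the $L_\infty$ norm, a decoder that Fourier-samples $f_i^\star$ (with smoothness giving both the $1/n$ bound on the query distribution and, via $\sum_S|S|\,|\widehat{f}(S)|\le 1$, the degree truncation at $q=O(1/\eps)$), and finally the average-case-to-worst-case lemma. One quibble: the parenthetical reduction ``to odd parts'' is not sound as stated (if $\mu$ were supported on even functions it would annihilate the outlaw property); all you actually need is $\specnorm{f_i^\star}\le 1$, which is obtained by absorbing the constant term $\widehat{f_i^\star}(\emptyset)$ into a query-free branch of the decoder, as the paper implicitly does.
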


Note that the smoothness requirement is essential. For example the uniform distribution over the $n$ dictator functions $f_i(x)=x_i$ for $i\in [n]$ is an $(n/2,1)$-outlaw, but it cannot imply constant rate, constant query LDCs which we know do not exist. In fact we establish a converse to Theorem~\ref{thm:main}, showing that its hypothesis is essentially equivalent to the existence of LDCs in the small query complexity regime.

\begin{theorem}\label{thm:main_converse}
If $C:\bset{k}\to \bset{n}$ is a $(q,\delta, \eta)$-LDC, then there exists a probability distribution $\mu$ over $1$-smooth degree-$q$ functions on $\sBC^n$ such that $$\kappa_\mu(\eps)\ge \eta k$$ where $\eps=\eta\delta/(q2^{q/2})$.
\end{theorem}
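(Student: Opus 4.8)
The plan is to read the outlaw distribution directly off the decoders of the given LDC. First apply the standard smoothing reduction~\cite{Katz:2000}. For each $i\in[k]$, put $\mathcal A_i$ in the normal form of the footnote to Definition~\ref{ldc:def}: it samples a query set $S\sim\mathcal D_i$ with $|S|\le q$ and returns a bit depending only on $i$, $S$, and $y|_S$. Let $H_i=\{j\st\Pr_{S\sim\mathcal D_i}[j\in S]>q/(\delta n)\}$. Since $\sum_j\Pr_{S\sim\mathcal D_i}[j\in S]=\Exp[|S|]\le q$, we have $|H_i|<\delta n$, so overwriting the coordinates in $H_i$ with a fixed default value is a corruption of at most $\delta n$ coordinates. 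Hence the decoder $\mathcal A_i'$ that simulates $\mathcal A_i$ but substitutes the default value whenever $\mathcal A_i$ would read a coordinate of $H_i$ still obeys $\Pr[\mathcal A_i'(C(x))=x_i]\ge\tfrac12+\eta$ for every $x$, while querying each coordinate with probability at most $q/(\delta n)$. (Substituting a \emph{fixed} value rather than re-randomizing is what keeps this lossless in $\eta$; this is the one place $\delta$ enters.)

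Next, turn each $\mathcal A_i'$ into a bounded low-degree function on $\pmset{n}$. For a query set $S$, let $g_{i,S}\st\pmset{S}\to[-1,1]$ be the expected $\pm1$-valued guess of $\mathcal A_i'$ as a function of the observed values, and set $G_i(z)=\Exp_{S\sim\mathcal D_i'}[g_{i,S}(z|_S)]$, a function of degree at most $q$. Reinterpreting the codeword in $\pmset{n}$, the success guarantee becomes $(-1)^{x_i}G_i(C(x))\ge 2\eta$ for every $x$. Moreover $\specnorm{g_{i,S}}\le 2^{|S|/2}\norm{g_{i,S}}_{L_2}\le 2^{q/2}$ by Cauchy--Schwarz, and since $\specnorm{D_jh}=\sum_{T\ni j}|\widehat h(T)|$ and $\widehat{G_i}(T)=\sum_{S\supseteq T}\mathcal D_i'(S)\widehat{g_{i,S}}(T)$, swapping sums gives $\specnorm{D_jG_i}\le\sum_{S\ni j}\mathcal D_i'(S)\specnorm{g_{i,S}}\le 2^{q/2}\Pr_{S\sim\mathcal D_i'}[j\in S]\le q2^{q/2}/(\delta n)$. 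Thus $g_i:=\tfrac{\delta}{q2^{q/2}}G_i$ is $1$-smooth of degree at most $q$ and satisfies $(-1)^{x_i}g_i(C(x))\ge 2\eps$ for every $x$, with $\eps=\eta\delta/(q2^{q/2})$ exactly as in the statement. Let $\mu$ be the uniform distribution over the $2k$ functions $\pm g_1,\dots,\pm g_k$; it is supported on $1$-smooth degree-$q$ functions and has mean $\bar f=0$.

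It remains to check that $\mu$ is a $(k',\eps)$-outlaw for $k'=\ceil{\eta k}$, which yields $\kappa_\mu(\eps)\ge k'\ge\eta k$. Draw $f_1,\dots,f_{k'}\sim\mu$, i.e.\ $f_j=\sigma_j g_{i_j}$ with the pairs $(\sigma_j,i_j)$ independent and uniform in $\{-1,1\}\times[k]$. The crucial point is that the maximizing point for the $L_\infty$-norm may be chosen after seeing the samples: take $z^\ast=C(x^\ast)$ where, for each $v\in[k]$, $(-1)^{x^\ast_v}$ is chosen to have the same sign as $a_v-b_v$, with $a_v=\#\{j\st i_j=v,\sigma_j=1\}$ and $b_v=\#\{j\st i_j=v,\sigma_j=-1\}$ (arbitrarily if $a_v=b_v$). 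Since $(-1)^{x^\ast_v}g_v(C(x^\ast))\ge 2\eps$, we get $\sum_j f_j(z^\ast)=\sum_v(a_v-b_v)g_v(C(x^\ast))\ge 2\eps\sum_v|a_v-b_v|$, hence $\norm{\tfrac1{k'}\sum_j f_j}_{L_\infty}\ge\tfrac{2\eps}{k'}\sum_v|a_v-b_v|$. Conditioning on the multiset $\{i_1,\dots,i_{k'}\}$, the signs are fresh Rademacher variables, so $\Exp[|a_v-b_v|]=\phi(m_v)$ where $m_v=\#\{j\st i_j=v\}$ and $\phi(m)=\Exp[|\epsilon_1+\dots+\epsilon_m|]$. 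Conditioning on the first $m$ signs shows $\phi(m+1)=\Exp[\max(|\epsilon_1+\dots+\epsilon_m|,1)]\ge\phi(m)$, so $\phi$ is nondecreasing; with $\phi(0)=0$ and $\phi(1)=1$ this gives $\phi(m)\ge 1$ for $m\ge1$, whence $\Exp[\sum_v|a_v-b_v|]\ge\sum_v\Pr[m_v\ge1]=k\big(1-(1-1/k)^{k'}\big)\ge k(1-e^{-k'/k})$. Since $\eta\le1/2$ forces $k'=\ceil{\eta k}\le k$ and $t\mapsto\tfrac2t(1-e^{-t})$ is decreasing, we conclude $\Exp\big[\norm{\tfrac1{k'}\sum_j f_j}_{L_\infty}\big]\ge\tfrac2{k'/k}(1-e^{-k'/k})\,\eps\ge 2(1-e^{-1})\eps>\eps$.

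The only delicate steps are the lossless smoothing (fixing the heavy coordinates so the $\eta$ advantage is preserved, which is where $\delta$ is used) and the final probabilistic estimate; the remaining Fourier manipulations and counting are routine.
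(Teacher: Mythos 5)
Your proof is correct and follows essentially the same blueprint as the paper: apply Katz--Trevisan smoothing losslessly in~$\eta$, read off the decoders' expectation functions as degree-$q$ juntas, bound the spectral norms of their discrete derivatives by $q2^{q/2}/(\delta n)$ via Parseval/Cauchy--Schwarz, scale by $\delta/(q2^{q/2})$ to get $1$-smooth functions $g_i$ satisfying $(-1)^{x_i}g_i(C(x))\ge 2\eps$, and then verify the outlaw condition directly.

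The one genuine divergence is in the outlaw estimate. The paper takes~$\mu$ to be uniform over $\{g_1,\dots,g_k\}$ (nonzero mean $\bar g$), and for \emph{any} realization $g_{\sigma(1)},\dots,g_{\sigma(l)}$ of the $l=\eta k$ samples it exhibits a single witness message~$x$ (namely $x_j=1$ iff $j\in\sigma([l])$) at which the quantity $\frac{1}{l}\sum_i g_{\sigma(i)}(C(x))-\frac{1}{k}\sum_j g_j(C(x))\ge 2\eta-l/k=\eta$ holds \emph{pointwise}; the expectation bound is then trivial. You instead symmetrize to $\mu$ uniform over $\{\pm g_1,\dots,\pm g_k\}$ so that $\bar f=0$, which makes the target expression $\norm{\frac{1}{k'}\sum_j f_j}_{L_\infty}$ cleaner. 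However, this sacrifices the pointwise bound: if the signs $\sigma_j$ happen to cancel for some index $v$, the sum $\sum_v(a_v-b_v)g_v$ can vanish, so a probabilistic estimate is unavoidable, and you supply it via the monotonicity of $\phi(m)=\E|\eps_1+\cdots+\eps_m|$ together with a coupon-collector calculation. Both routes are valid and yield the same conclusion; the paper's asymmetric $\mu$ buys a shorter, deterministic argument, whereas your symmetric $\mu$ is arguably a more canonical object (it mirrors the symmetrization used in the forward direction, Theorem~\ref{thm:degq_dist2avg}) at the cost of a modestly more involved final estimate.
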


Theorem~\ref{thm:main_converse} can in turn convert the problem of proving lower bounds on the length of LDCs to a problem on Banach space geometry.
In particular, for a distribution~$\mu$ over 1-smooth degree-$q$ functions on~$\bset{n}$, one can upper bound~$\kappa_\mu(\eps)$ in terms of type constants of the space of $q$-linear forms on~$\ell_q^{n+1}$~\cite{Briet:2015}.

\medskip
\paragraph{Candidate outlaws}
One scenario in which outlaw distributions can be obtained is using incidence geometry in finite fields. In particular, the  following result can be derived from our main theorem (stated a bit informally here, see Section~\ref{sec:lines} for the formal version).

\begin{corollary}\label{cor:lines_ldcs_informal}
Let $p>2$ be a fixed prime. Suppose that for a random set of directions $D\subset \F_p^{n}$  of size $|D|\le k$, with probability at least $1/2$, there exists a set $B\subset \F_p^n$  of size $|B|\ge \Omega(p^n)$ which does not contain any lines with direction in $D$.
Then, there exists a $p$-query LDC sending $\bset{\Omega(k)}$ to $\bset{p^n}$. 
\end{corollary}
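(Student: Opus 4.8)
The plan is to deduce the corollary from the Main Theorem (Theorem~\ref{thm:main}) by producing, from the hypothesis, an outlaw distribution supported on $1$-smooth degree-$p$ functions on $\pmset{p^n}$. Throughout, $p$ is a fixed prime and $\beta>0$ denotes the implied constant in ``$|B|\ge\Omega(p^n)$'', so any quantity depending only on $p$ and $\beta$ is an absolute constant. Identify the $p^n$ Boolean coordinates with the points of $\F_p^n$, so that codewords lie in $\bset{p^n}$. For an affine line $\ell\subset\F_p^n$ (a set of exactly $p$ points) set $g_\ell(x)=\prod_{u\in\ell}\frac{1-x_u}{2}$, the multilinear indicator of the event ``$x_u=-1$ for all $u\in\ell$''. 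Its expansion $g_\ell=2^{-p}\sum_{S\subseteq\ell}(-1)^{|S|}\prod_{u\in S}x_u$ shows $\deg g_\ell=p$ and $\specnorm{g_\ell}=1$, and a one-line computation gives $\specnorm{D_ig_\ell}=\tfrac12$ if $i\in\ell$ and $D_ig_\ell=0$ otherwise. For a direction $d$ (a nonzero $v\in\F_p^n$ up to scaling), let $\ell\parallel d$ range over the $p^{n-1}$ lines in the parallel class of $d$ and put $f_d=\frac{1}{p^{n-1}}\sum_{\ell\parallel d}g_\ell$. Since the parallel class partitions $\F_p^n$, exactly one of its lines passes through any coordinate $i$, so $\specnorm{D_if_d}=\frac{1}{p^{n-1}}\cdot\tfrac12=\frac{1}{2p^{n-1}}$; hence $\tilde f_d:=\tfrac2p f_d$ has degree $p$ and is $1$-smooth (indeed $\specnorm{D_i\tilde f_d}=1/p^n$, matching the number of variables). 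Let $\mu$ be the law of $\tilde f_d$ for a uniformly random direction $d$.

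It remains to show that $\mu$ is a $(k,\eps)$-outlaw for a constant $\eps>0$. Draw $k$ independent uniformly random directions $d_1,\dots,d_k$; they form a random set $D$ of at most $k$ directions, and $\tilde f_{d_1},\dots,\tilde f_{d_k}$ are i.i.d.\ $\mu$-distributed with mean $\bar f=\Exp_\mu[\tilde f]=\tfrac2p\cdot\frac1L\sum_{\ell}g_\ell$, the last sum being over all $L$ lines of $\F_p^n$ (averaging $g_\ell$ over a random direction and then over a random line in its class is the same as averaging over a uniformly random line). By hypothesis, with probability at least $1/2$ there is $B=B_D\subseteq\F_p^n$ with $|B|\ge\beta p^n$ containing no line with direction in $D$. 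Let $x^B\in\pmset{p^n}$ have $x^B_u=-1$ for $u\in B$ and $x^B_u=+1$ otherwise. Then $g_\ell(x^B)=\mathbf{1}[\ell\subseteq B]$, so $\tilde f_{d_i}(x^B)=0$ for every $i$ (no line of direction $d_i$ lies in $B$), while $\bar f(x^B)=\tfrac2p\,\lambda(B)$, where $\lambda(B)$ is the fraction of lines of $\F_p^n$ contained in $B$. Hence, on this event,
\beqn
\norm{\tfrac1k\sum_{i=1}^k\big(\tilde f_{d_i}-\bar f\big)}_{L_\infty}\ \ge\ \Big|\tfrac1k\sum_{i=1}^k\tilde f_{d_i}(x^B)-\bar f(x^B)\Big|\ =\ \tfrac2p\,\lambda(B).
\eeqn

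The one substantial ingredient is a supersaturation bound: every $B\subseteq\F_p^n$ with $|B|\ge\beta p^n$ has $\lambda(B)\ge\eps_0$ for a constant $\eps_0=\eps_0(p,\beta)>0$, once $n\ge n_0(p,\beta)$ (the finitely many smaller $n$ give codeword length $p^n=O(1)$ and are vacuous). This says that a constant-density subset of $\F_p^n$ contains a constant fraction of all affine lines, which follows from the density Hales--Jewett theorem (a constant-density set contains a combinatorial, hence affine, line once $n$ is large) together with a hypergraph-removal / Varnavides averaging argument. I expect this to be the main obstacle, since this is where the additive-combinatorial content enters and care is needed to keep $\eps_0$ bounded below by an absolute constant. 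Granting it, taking expectations over $D$ in the displayed inequality gives $\Exp_D[\,\norm{\tfrac1k\sum_i(\tilde f_{d_i}-\bar f)}_{L_\infty}\,]\ge\tfrac12\cdot\tfrac2p\,\eps_0=\eps_0/p=:\eps$, so $\mu$ is a $(k,\eps)$-outlaw and $\kappa_\mu(\eps)\ge k$. Applying Theorem~\ref{thm:main} to the $1$-smooth distribution $\mu$ with this constant $\eps$, and using its degree-$p$ clause to fix the query complexity, yields a $(q,\delta,\eta)$-LDC from $\bset{l}$ to $\bset{p^n}$ with $q=p$, $\delta,\eta=\Omega(\eps)=\Omega(1)$ and $l=\Omega(\eps^2k/\log(1/\eps))=\Omega(k)$ --- the desired $p$-query LDC sending $\bset{\Omega(k)}$ to $\bset{p^n}$.

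Two bookkeeping remarks. First, the precise process used to sample the random set $D$ (e.g.\ $k$ i.i.d.\ directions versus $k$ distinct ones) is immaterial: any reasonable process can be coupled to $k$ i.i.d.\ draws, still with $|D|\le k$, and the argument used only that the $\tilde f_{d_i}$ are i.i.d.\ $\mu$-distributed and that the hypothesis applies to the resulting $D$. Second, all hidden constants are allowed to depend on the fixed prime $p$; in particular $p>2$ plays no essential role above beyond ensuring at least three queries (the case $p=2$ being degenerate), though oddness of $p$ would become relevant in variants of the construction that use the monomials $\prod_{u\in\ell}x_u$ in place of the indicators $g_\ell$.
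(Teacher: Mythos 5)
Your proof is correct and follows essentially the same route the paper takes (it proves the formal version, Corollary~\ref{cor:forbidden_directions}, by analogy with Corollary~\ref{cor:lines_ldcs}): build the outlaw $\mu$ from direction-indexed averages of $\{0,1\}$-valued line-indicator products, verify $1$-smoothness and degree $p$, lower bound $\kappa_\mu$ via the Varnavides-type supersaturation of Szemer\'edi's theorem in $\F_p^n$ (the paper's Corollary~\ref{cor:varnavides}, for which your density Hales--Jewett citation is an acceptable, if heavier, substitute), and then invoke Theorem~\ref{thm:main} with its degree-$d$ clause to fix $q=p$.
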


The assumption in Corollary~\ref{cor:lines_ldcs_informal} that~$D$ be random is essential for it to be potentially interesting for LDCs.
If we instead ask that every set of directions~$D$ satisfies the condition---as we did in the conference version of this paper---then letting~$D$ be a subspace shows that~$k$ must be smaller than a constant depending only on~$p$ and~$\eps$ by Szemer\'edi's Theorem (Theorem~\ref{thm:szemeredi} below)~\cite{Fox:personal}.

The analogue of Corollary~\ref{cor:lines_ldcs_informal} in $\Z/N\Z$ where lines correspond to arithmetic progressions and directions correspond to common differences can also be used to construct LDCs. This question was studied in~\cite{FLW16}, where they show that if $D$ is a random subset of $\Z/N\Z$ of size $\omega(N^{1-1/q})$, then almost surely every dense subset of $\Z/N\Z$ contains a $q$-term arithmetic progression with common difference in $D$. 
Our main result, together with the best-known lower bounds on LDCs~\cite{Kerenidis:2004,Woodruff:2007a} can be used to improve the bounds of~\cite{FLW16} on random differences in Szemer\'edi's theorem as shown in the following corollary.
\begin{corollary}
For every $\alpha>0$, there is $N(\alpha)$ such that for every $N>N(\alpha)$ the following is true:
if $D\subset \Z/N\Z$ is a uniformly random subset of size $\tilde\omega(N^{1-1/\ceil{q/2}})$ when $q>2$ and $\omega(\log N)$ when $q=2$, then with probability $1-o(1)$, every subset $A\subset \Z/N\Z$ of size at least $\alpha N$ contains a $q$-term arithmetic progression with common difference in the set $D$.
\end{corollary}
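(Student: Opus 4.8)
The plan is to argue by contradiction, turning a hypothetical family of random difference sets $D$ that (with non-vanishing probability) leaves some $\alpha$-dense set free of $q$-term progressions into $q$-query LDCs that are provably too short to exist. Fix $\alpha>0$, write $m=m_N$ for the prescribed size of $D$, and let $p_N$ be the probability of the bad event
$$
\mathcal{B}_N=\big\{\,\exists\,A\subseteq\Z/N\Z\st|A|\ge\alpha N,\ A\text{ has no }q\text{-term AP with common difference in }D\,\big\}.
$$
We must show $p_N\to 0$. If not, then $\limsup_N p_N\ge\beta$ for some fixed $\beta>0$, so on a subsequence $(N_j)$ we have $p_{N_j}\ge\beta$; along this subsequence $D$ is a uniformly random subset of $\Z/N\Z$ of size at most $m$ which, with probability at least $\beta$, admits a set $B\subseteq\Z/N\Z$ with $|B|\ge\alpha N$ containing no $q$-term arithmetic progression with common difference in $D$.

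Next I would invoke the $\Z/N\Z$-analogue of Corollary~\ref{cor:lines_ldcs_informal}. This is obtained exactly as the $\F_p^n$ statement in Section~\ref{sec:lines}, replacing lines by $q$-term arithmetic progressions and directions by common differences; since for $d\ne 0$ the relevant monomials $\prod_{j=0}^{q-1}x_{a+jd}$ have degree $q$, the outlaw distribution one builds (on functions of the form $x\mapsto\Exp_a\prod_{j=0}^{q-1}x_{a+jd}$, suitably normalized to be $1$-smooth) is supported on degree-$q$ functions, so the ``degree-$d$'' clause of Theorem~\ref{thm:main} yields a code with exactly $q$ queries rather than $q=O(1/\eps)$. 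The only deviation from Corollary~\ref{cor:lines_ldcs_informal} is that the success probability $\tfrac12$ is replaced by the constant $\beta$: this merely multiplies the outlaw parameter $\eps$ by $\beta$ and hence affects only the hidden constants. The upshot is a $q$-query LDC $C\colon\bset{l}\to\bset{N}$ with $l=\Omega_{\alpha,\beta}(m)$ and $\delta,\eta$ absolute positive constants.

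Finally I would apply the LDC length lower bounds of~\cite{Kerenidis:2004,Woodruff:2007a}. For $q\ge 3$, writing $s=\ceil{q/2}$, a $q$-query LDC from $\bset{l}$ to $\bset{N}$ needs $N\ge\tilde{\Omega}\big(l^{\,s/(s-1)}\big)$, hence $l\le\tilde{O}\big(N^{(s-1)/s}\big)=\tilde{O}\big(N^{1-1/\ceil{q/2}}\big)$ and so $m\le\tilde{O}\big(N^{1-1/\ceil{q/2}}\big)$; for $q=2$ the exponential bound $N\ge 2^{\Omega(l)}$ forces $m\le O(\log N)$. For $j$ large this contradicts the hypothesis that $m=m_{N_j}$ is $\tilde{\omega}(N_j^{1-1/\ceil{q/2}})$ (respectively $\omega(\log N_j)$) --- indeed the tilde in the hypothesis is there precisely to absorb the polylogarithmic slack of the LDC lower bound. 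Hence $p_N\to 0$, which is the assertion. The step I expect to require the most care is not any of these inequalities but the faithful transcription of the reduction of Section~\ref{sec:lines} to the cyclic-group setting, together with the minor relaxation of its success probability to a general constant; once that analogue is recorded, the corollary follows in essentially one line from the known LDC lower bounds.
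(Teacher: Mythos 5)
Your argument is correct and is precisely the route the paper has in mind: the introduction states that the corollary follows from ``[o]ur main result, together with the best-known lower bounds on LDCs,'' and your contrapositive (bad event with constant probability $\Rightarrow$ a $\Z/N\Z$ analogue of Corollary~\ref{cor:lines_ldcs_informal} $\Rightarrow$ a $q$-query LDC of length $N$ and message length $\Omega(m)$ $\Rightarrow$ contradiction with the Kerenidis--de~Wolf / Woodruff bounds) is exactly that. The exponent arithmetic $N\ge\tilde\Omega(l^{\,s/(s-1)})\Rightarrow l\le\tilde O(N^{1-1/s})$ for $s=\ceil{q/2}$, the $q=2$ exponential bound, and the role of the tilde in absorbing the polylog slack are all handled correctly, and the scaling/probability-$\beta$ modification of the outlaw parameter is exactly as routine as you say.

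One small point worth making explicit: the hypothesis samples $D$ as a uniformly random \emph{subset of size $m$}, while the outlaw/Corollary-\ref{cor:forbidden_directions} machinery (and your sketch of its $\Z/N\Z$ analogue) uses $m$ i.i.d.\ uniform differences $d_1,\dots,d_m$. Since the range of $m$ here is large enough that the i.i.d.\ sample has many collisions, these two models are not close in total variation, so this needs an argument. It is, however, easy to supply: the bad event ``some $\alpha$-dense $A$ avoids $q$-APs with difference in $D$'' is monotone decreasing in $D$, and a uniform size-$m'$ subset with $m'\le m$ is always at least as likely to be bad as a uniform size-$m$ subset (couple by adding random elements). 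Conditioning the i.i.d.\ draw on the number of distinct elements therefore gives $\Pr_{\text{iid}}[\text{bad}]\ge\Pr_{\text{size-}m}[\text{bad}]\ge\beta$, which is what the outlaw reduction needs. With that bridge filled in, your proof is complete. (Also, your $\delta,\eta$ are constants depending on $\alpha,\beta,q$ rather than absolute, but since these are all fixed this does not affect the application of the LDC lower bounds.)
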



Another setting in which our approach leads to interesting open problems is in relation to pseudorandom hypergraphs. Consider a partition of the complete bipartite graph $K_{n,n}$ into~$n$ perfect matchings. It is known that picking $k = O(\log n)$ of these matchings at random will give us a pseudorandom (expander) graph (of degree $k$). For some particular partitions (e.g., given by an Abelian group) 
 this bound is tight. The questions arising from our approach can be briefly summarized as follows: Can one find an $n$-vertex hypergraph~$H$ (say three uniform to be precise) and a partition of~$H$ into matchings so that, to get a pseudorandom hypergraph (defined appropriately) one needs  at least $k$  random matchings. This would give a code sending $\Omega(k)$-bit messages with encoding length $O(n)$ and so, becomes interesting when~$k$ is super poly-logarithmic in~$n$.
 We elaborate on this in Section~\ref{sec:ARLDCs}

\subsection{Techniques}

Our proof of Theorem~\ref{thm:main} proceeds in two steps.
The first step consists of turning an outlaw over smooth functions into a seemingly crude type of LDC that is only required to work on average over a uniformly distributed message and a uniformly distributed message index.
We call such codes \emph{average-case smooth codes} (see below).
The second step consists of showing that such codes are in fact not much weaker than honest LDCs. 

\medskip
\paragraph{From outlaws to average-case smooth codes}

The key ingredient for the first step is \emph{symmetrization}, a basic technique from high-dimensional probability.
We briefly sketch how this is used (we refer to Section~\ref{sec:outlaw2smoothcode} for the full proof).
Suppose that~$f_1,\dots,f_k$ are independent smooth functions distributed according to a $(k,\eps)$-outlaw with expectation~$\bar f$.
We introduce an independent copy\footnote{in this context sometimes referred to as a ``ghost copy'' as it will later disappear again} $f_i'$  of~$f_i$ for each $i\in[k]$ and consider the symmetrically distributed random functions~$f_i - f_i'$.
Since $\bar f = \Exp[f'_i]$ for each $i\in[k]$, Jensen's inequality and Definition~\ref{def:outlaw} imply that
\beqn
\Exp\big[\|(f_1 - f_1') + \cdots + (f_k - f_k')\|_{L_\infty}\big]
\geq
\Exp\big[\|(f_1 - \Exp[f_1']) + \cdots + (f_k - \Exp[f_k')\|_{L_\infty}\big]
\geq \eps k.
\eeqn
Since the random functions $f_i  -f_i'$ are independent and symmetric, we get that for independent uniformly random signs $x_1,\dots, x_k\in\pmset{}$, the above left-hand side equals
\beqn
\Exp\big[\|x_1(f_1 - f_1') + \cdots + x_k(f_k - f_k')\|_{L_\infty}\big].
\eeqn
The triangle inequality and the Averaging Principle
then give that there exist \emph{fixed} smooth functions $f_1^\star,\dots,f_k^\star$ such that on average over the random signs, we have
\beq\label{eq:eqeq}
\Exp\big[ \|x_1f_1^\star + \cdots + x_kf^\star_k\|_{L_\infty}\big] \geq \eps k/2.
\eeq
To get an average-case smooth code out of this, we view each sequence $x = (x_1,\dots,x_k)$ as a $k$-bit message and choose an arbitrary $n$-bit string for which the $L_\infty$-norm in~\eqref{eq:eqeq} is achieved to be the its encoding,~$C(x)$.
This gives a map $C:\pmset{k}\to\bset{n}$ satisfying
\beqn
\Exp\big[x_1f^\star_1(C(x)) + \cdots x_kf^\star_k(C(x))\big] \geq \eps k/2.
\eeqn
Equivalently, for uniform~$x$ and~$i$, we have $\Pr[f^\star_i(C(x)) = x_i] \geq \tfrac12+\tfrac\eps4$.
Finally, we use the smoothness property to transform the~$f_i^\star$ into decoders with the desired properties.
This is done in Section~\ref{sec:outlaw2smoothcode}.
It is in the application of the Averaging Principle where the probabilistic method appears in our construction of LDCs.
\medskip

\paragraph{Average-case smooth codes are LDCs}

Katz and Trevisan~\cite{Katz:2000} observed that LDC decoders must have the property that they select their queries according to distributions that do not favor any particular coordinate.
The intuition for this is that if they did favor a certain coordinate, then corrupting that coordinate would cause the decoder to err with too high a probability.
If instead, queries are sampled according to a ``smooth'' distribution, they will all fall on uncorrupted coordinates with good probability provided the fraction of corrupted coordinates~$\delta$ and query complexity~$q$ aren't too large.
The following definition allows us to make this intuition precise.

\begin{definition}[Smooth code]\label{def:smoothcode}
For positive integers~$k, n,q$ and parameters $\eta\in (0,1/2]$ and $c>0$, a map $C: \bset{k}\to\bset{n}$ is a \emph{$(q,c,\eta)$-smooth code} if, for every $i\in[k]$, there exists a randomized decoder $\mathcal A_i:\bset{n}\to\bset{}$ such that
\begin{itemize}
\item For every $x\in\bset{k}$,
	\beq\label{eq:smoothcode}
	\Pr\big[x_i=\cA_i\big(C(x)\big)\big]\ge \frac{1}{2}+\eta.
	\eeq
\item The decoder $\mathcal{A}_i$ (non-adaptively) queries at most $q$ coordinates of $C(x)$.
\item For each $j\in[n]$, the probability that $\mathcal A_i$ queries the coordinate~$j\in[n]$ is at most~$c/n$.
\end{itemize}
\end{definition}

The formal version of Katz and Trevisan's observation is as follows.

\begin{theorem}[Katz--Trevisan]
\label{thm:ldctosmooth}
If~$C:\bset{k}\to\bset{n}$ is a $(q,\delta,\eta)$-LDC, 
then~$C$ is also a $(q,q/\delta,\eta)$-smooth code. 
Conversely, if $C:\bset{k}\to\bset{n}$ is a $(q,c,\eta)$-smooth code, then~$C$ is also a $(q,\eta/2c,\eta/2)$-LDC. 
\end{theorem}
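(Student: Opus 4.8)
The plan is to prove the two directions of Theorem~\ref{thm:ldctosmooth} separately, both by directly analyzing the decoder's query distribution.

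\textbf{LDC $\Rightarrow$ smooth code.} Suppose $C$ is a $(q,\delta,\eta)$-LDC with decoders $\mathcal A_i$. By the footnote in Definition~\ref{ldc:def}, $\mathcal A_i$ first samples a query set $S\subseteq[n]$ with $|S|\le q$ from a distribution $p_i$ depending only on $i$, then outputs a bit depending on $i$, $S$, and $y|_S$. For each $j\in[n]$, let $\beta_{i,j}=\Pr_{S\sim p_i}[j\in S]$; then $\sum_j \beta_{i,j}\le q$. I would call a coordinate $j$ ``heavy'' for $i$ if $\beta_{i,j}>q/(\delta n)$. By averaging, the number of heavy coordinates is at most $\delta n$. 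Now modify $\mathcal A_i$ to a new decoder $\mathcal A_i'$ that first samples $S\sim p_i$, and if $S$ contains a heavy coordinate, outputs a uniform random bit; otherwise it runs $\mathcal A_i$. The new decoder queries each coordinate with probability $\le q/(\delta n)$ by construction (on heavy coordinates it still "queries" them but we can instead simply not query them — more cleanly, resample or abort). The key point: on input $C(x)$ itself (zero corruption), we want $\Pr[\mathcal A_i'(C(x))=x_i]\ge \tfrac12+\eta$. To see this, consider the adversary who corrupts exactly the heavy coordinates (at most $\delta n$ of them) to arbitrary values. The original decoder $\mathcal A_i$ still succeeds with probability $\ge \tfrac12+\eta$ on this corrupted string $y$; but conditioned on $S$ avoiding all heavy coordinates, $\mathcal A_i$ sees the same bits on $y$ and on $C(x)$, so its behavior is identical. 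Hence removing/neutralizing the heavy-coordinate branch costs nothing relative to the worst case, giving a $(q,q/\delta,\eta)$-smooth code. I should be careful to phrase the ``neutralize heavy coordinates'' step so that the resulting decoder genuinely never queries a heavy coordinate (e.g. replace $S$ by $S$ minus heavy coordinates, or abort to a random bit, whichever is notationally cleanest) while not hurting the success probability.

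\textbf{Smooth code $\Rightarrow$ LDC.} Suppose $C$ is a $(q,c,\eta)$-smooth code with decoders $\mathcal A_i$ whose query sets $S\sim p_i$ satisfy $\Pr_{S\sim p_i}[j\in S]\le c/n$ for all $j$. Fix a message $x$ and a corruption $y$ differing from $C(x)$ in at most $\delta n$ coordinates, where $\delta=\eta/(2c)$. Let $T=\{j: y_j\neq C(x)_j\}$, so $|T|\le \delta n$. Then by the union bound and smoothness of $p_i$,
\[
\Pr_{S\sim p_i}[S\cap T\neq\emptyset]\le \sum_{j\in T}\Pr_{S\sim p_i}[j\in S]\le |T|\cdot \frac{c}{n}\le \delta c=\frac{\eta}{2}.
\]
Conditioned on $S\cap T=\emptyset$, the decoder reads identical values on $y$ and on $C(x)$, so it outputs $x_i$ with probability $\ge \tfrac12+\eta$. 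Hence
\[
\Pr[\mathcal A_i(y)=x_i]\ge \Big(\tfrac12+\eta\Big)-\Pr[S\cap T\neq\emptyset]\ge \tfrac12+\eta-\tfrac{\eta}{2}=\tfrac12+\tfrac{\eta}{2},
\]
where the first inequality uses that on the bad event the decoder still outputs $x_i$ with probability $\ge 0$. This exhibits $C$ as a $(q,\eta/(2c),\eta/2)$-LDC.

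I expect the second direction to be essentially immediate once the union-bound computation is set up, so the main obstacle is the first direction — specifically, formalizing the passage from an LDC decoder (which is only promised to work against \emph{worst-case} corruption) to a decoder that never queries heavy coordinates, while arguing the success probability is preserved. The right way to see it is the observation that corrupting the (few) heavy coordinates is a legal adversarial strategy, so the decoder must already tolerate it; hence its conditional behavior on the event ``$S$ avoids heavy coordinates'' is what carries the $\tfrac12+\eta$ guarantee, and that conditional behavior depends only on uncorrupted bits and so equals its behavior on the clean codeword. One minor subtlety to handle cleanly: after conditioning on avoiding heavy coordinates, the probability of that event could in principle be small, so I should instead argue unconditionally — define $\mathcal A_i'$ to output a uniform bit when $S$ hits a heavy coordinate, and compare $\Pr[\mathcal A_i'(C(x))=x_i]$ directly with $\Pr[\mathcal A_i(y)=x_i]$ term by term over the sampled $S$, noting they agree on every $S$ avoiding heavy coordinates and $\mathcal A_i'$ contributes exactly $\tfrac12$ on the rest while $\mathcal A_i(y)$ contributes at most... — here one actually wants the reverse inequality, so the clean argument is: $\Pr[\mathcal A_i'(C(x))=x_i]=\Pr[\mathcal A_i(\tilde y)=x_i]$ where $\tilde y$ is $C(x)$ with heavy coordinates replaced by a fixed pattern and $\mathcal A_i'$'s random-bit branch is simulated, and since corrupting heavy coordinates is admissible this is $\ge\tfrac12+\eta$. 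I would spell this out carefully in the final writeup.
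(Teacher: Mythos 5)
The paper attributes this result to Katz and Trevisan and does not include a proof of it, so there is nothing in the paper to compare against; I assess your argument on its own terms. Your second direction (smooth code $\Rightarrow$ LDC) is correct and cleanly written. Your first direction has the right skeleton --- the averaging bound giving at most $\delta n$ heavy coordinates, and the idea of neutralizing them --- but, as you yourself notice at the end, the initial construction (``output a uniform random bit when $S$ hits a heavy coordinate'') does not yield the bound. Concretely, $\Pr[\mathcal A_i'(C(x))=x_i]$ is then a mixture of the conditional success probability on light $S$ with $\tfrac12$ on heavy $S$, whereas the LDC guarantee on a corrupted $\tilde y$ only controls a mixture of that same light-$S$ quantity with some heavy-$S$ conditional probability $p'$ that may well exceed $\tfrac12$. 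Since $\Pr[S \text{ hits heavy}]$ need not be small (each heavy $j$ has $\beta_{i,j}\le 1$, not $\le 1/n$), swapping $p'$ for $\tfrac12$ can wipe out the entire $\eta$ advantage.

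The fix you gesture toward is the right one, but you should commit to redefining $\mathcal A_i'$: have it simulate $\mathcal A_i$ exactly, substituting a fixed default value in place of every heavy coordinate it would have queried, instead of bailing out to a coin flip. Then $\mathcal A_i'(C(x))$ is distributed identically to $\mathcal A_i(\tilde y)$, where $\tilde y$ equals $C(x)$ off the heavy set and the fixed defaults on it. This $\tilde y$ is within Hamming distance $\delta n$ of $C(x)$, so the LDC guarantee gives $\Pr[\mathcal A_i'(C(x))=x_i]\ge\tfrac12+\eta$; and $\mathcal A_i'$ never actually queries a heavy coordinate and only reduces the query probability on light ones, so every coordinate is queried with probability at most $q/(\delta n)$. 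With that one change in the definition of $\mathcal A_i'$ your proof is complete; the phrase ``$\mathcal A_i'$'s random-bit branch is simulated'' in your sketch does not parse and is exactly the spot that needs tightening.
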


Our second step in the proof of Theorem~\ref{thm:main} is a stronger form of the converse part of Theorem~\ref{thm:ldctosmooth}. We show that even smooth codes that are only required to work \emph{on average} can be turned into LDCs, losing only a constant factor in the rate and success probability.

\begin{definition}[Average-case smooth code]
A code as in Definition~\ref{def:smoothcode} is a \emph{$(q,c,\eta)$-average-case} smooth code if instead of the first item,~\eqref{eq:smoothcode} is required to hold only on average over uniformly distributed $x\in\bset{k}$ and uniformly distributed~$i\in[k]$,
which is to say that
\beqn
\Pr\big[x_i = \mathcal A_i\big(C(x)\big)\big] \geq \frac{1}{2} + \eta,
\eeqn
where the probability is taken over $x$,~$i$ and the randomness used by~$\mathcal A_i$.
\end{definition}



\begin{lemma}\label{lem:ave2ldc}
Let $C:\bset{k}\to \bset{n}$ be a $(q,c,\eta)$-average-case smooth code. 
Then, there exists an $(q,\Omega(\eta/c), \Omega(\eta))$-LDC sending $\bset{l}$ to $\bset{n}$ where $l=\Omega(\eta^2 k/\log(1/\eta))$.
\end{lemma}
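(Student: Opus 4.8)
The plan is to go through two reductions. First, I would boost the average-case guarantee so that it holds for \emph{most} message/index pairs rather than merely on average; second, I would prune bad message bits to obtain a smooth code in the worst-case sense of Definition~\ref{def:smoothcode}, and finish by invoking the converse direction of the Katz--Trevisan theorem (Theorem~\ref{thm:ldctosmooth}). The only subtlety is that the average-case hypothesis allows the decoding advantage to be unevenly distributed across coordinates $i$ and messages $x$, and each individual reduction is cheap, but we must be careful that the message-restriction step does not destroy the smoothness of the query distributions (it does not, since restricting to a subset of message coordinates only changes which $\mathcal{A}_i$ we keep, not how any surviving $\mathcal{A}_i$ queries).

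For the first step I would define, for each $i\in[k]$, the quantity $\eta_i = \Pr_{x,\mathcal{A}_i}[x_i = \mathcal{A}_i(C(x))] - \tfrac12$, so that $\Exp_i[\eta_i]\ge \eta$. Each $\eta_i\le \tfrac12$, so by a standard averaging (reverse Markov) argument the set $G = \{i : \eta_i \ge \eta/2\}$ has size $|G|\ge \Omega(\eta k)$. For $i\in G$, the decoder $\mathcal{A}_i$ already succeeds with probability $\ge \tfrac12 + \eta/2$ averaged over $x$ only; but \eqref{eq:ldcdec}-style worst-case decoding needs the guarantee for \emph{every} $x$. The fix is the classical trick of encoding a random linear (or random) scrambling of the message: pick a random $k\times l$ Boolean matrix $R$ (or a random shift), and define a new code $C':\bset{l}\to\bset{n}$ by $C'(m) = C(Rm + s)$ for a random offset $s$; then for a fixed $m$, the string $Rm+s$ is a (nearly) uniformly random element of $\bset{k}$, so the average-case success of $\mathcal{A}_i$ over uniform $x$ translates into success for every fixed $m$ in expectation over the choice of $(R,s)$. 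A union bound over the $l$ new message bits and a Chernoff concentration over the randomness of $\mathcal{A}_i$ (repeated $O(\log(1/\eta)/\eta^2)$ times and taking majority) then lets us fix a single good $(R,s)$, provided $l = O(\eta^2 k / \log(1/\eta))\cdot \Omega(1)$; this is exactly where the claimed rate loss $l = \Omega(\eta^2 k/\log(1/\eta))$ comes from. One must check that each new decoder still queries only $q\cdot O(\log(1/\eta)/\eta^2)$... — to keep $q$ unchanged one instead argues more carefully: rather than amplifying by repetition, observe that $\eta_i \ge \eta/2$ already gives, for a $\ge \Omega(\eta)$-fraction of messages $x$, a per-message advantage $\ge \Omega(\eta)$, and the random-scrambling union bound can be run directly on the event ``$m$ is mapped to a good $x$'' without amplification, costing only $l=\Omega(\eta^2 k/\log(1/\eta))$ by a first-moment/union-bound computation. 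Either way, the query bound stays $q$ and the new (worst-case) advantage is $\Omega(\eta)$.

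After this step we have a code $C':\bset{l}\to\bset{n}$ with $l = \Omega(\eta^2 k/\log(1/\eta))$, a set $G'\subseteq[l]$ of size $\Omega(l)$ of message indices, and for each $i\in G'$ a decoder $\mathcal{A}_i$ that for \emph{every} $m\in\bset{l}$ satisfies $\Pr[\mathcal{A}_i(C'(m)) = m_i]\ge \tfrac12+\Omega(\eta)$, queries at most $q$ coordinates, and queries each coordinate $j\in[n]$ with probability at most $c/l \le O(c/l) = O(c/n)\cdot (n/l)$. Here I need the smoothness parameter to come out as a constant times $c$; since the marginal query probabilities of the surviving $\mathcal{A}_i$ are unchanged by the restriction (they were already $\le c/n$ in the original code — note the original definition bounds them by $c/n$, not $c/k$), the restricted code is a $(q, c, \Omega(\eta))$-smooth code on message length $|G'| = \Omega(l)$ after re-indexing $G'$ as $[|G'|]$. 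Finally, apply the converse half of Theorem~\ref{thm:ldctosmooth}: a $(q,c,\eta')$-smooth code is a $(q, \eta'/2c, \eta'/2)$-LDC. With $\eta' = \Omega(\eta)$ this yields a $(q, \Omega(\eta/c), \Omega(\eta))$-LDC on $\Omega(l) = \Omega(\eta^2 k/\log(1/\eta))$ message bits, which is the statement.

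The main obstacle I anticipate is the bookkeeping in the first reduction: making the random-scrambling union bound go through with the \emph{stated} rate $l=\Omega(\eta^2 k/\log(1/\eta))$ while keeping the query complexity at exactly $q$ (no amplification), and simultaneously ensuring that passing to a fixed good $(R,s)$ does not wreck the per-coordinate query bound $c/n$ (it cannot, since $(R,s)$ only relabels messages, not codeword coordinates). The $\log(1/\eta)$ factor strongly suggests that the intended argument does use a mild $O(\log(1/\eta))$-fold repetition of $\mathcal{A}_i$ followed by majority vote to turn advantage $\Omega(\eta)$ into failure probability $\poly(\eta)$, small enough to union-bound over $l$ messages with $l$ as large as $\eta^2 k/\log(1/\eta)$; in that case the final query complexity is $O(q\log(1/\eta))$ rather than $q$, and the clean ``$q$ unchanged'' claim would instead follow from a more careful direct first-moment bound. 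I would present the repetition version first for clarity and then remark on how to avoid the repetition if one insists on preserving $q$ exactly.
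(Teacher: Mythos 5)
The averaging step over the index $i$ (extracting a set $G$ of size $\Omega(\eta k)$ with per-index advantage $\eta_i\ge\eta/2$) is fine, but the heart of your argument --- converting the average over messages $x$ into a worst-case guarantee by random scrambling plus a union bound --- does not work, and this is exactly the hard part of the lemma. For a fixed scrambling $(R,s)$ and a fixed new message $m$, the ``failure'' event is that $Rm+s$ lands on an $x$ for which $\mathcal A_i$ has small (or negative) advantage; the hypothesis only guarantees that the good $x$'s have density $\Omega(\eta)$ (indeed, all of the advantage $\eta_i$ could be concentrated on an $\eta$-fraction of messages), so each of the $2^l$ new messages fails with probability as large as $1-\Omega(\eta)$, and no union bound over $2^l$ messages can succeed. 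Repetition-plus-majority does not rescue this: Chernoff amplification only suppresses the decoder's \emph{internal} randomness, whereas the obstruction is that the expected decoder output $f_i(C(x))=\E[\mathcal A_i(C(x))]$ may simply have the wrong sign for most $x$, which no amount of repetition changes (and it would in any case blow up the query complexity). More fundamentally, any approach that only relabels messages of $C$ is doomed on the concentrated example above, since then only an $\eta$-fraction of the original codewords are decodable at all.

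The paper's proof escapes this by \emph{not} restricting to codewords of $C$. It considers the set $T=\{(f_1(z),\dots,f_k(z)): z\in\sBC^n\}$ of values of the expected decoding functions over \emph{all} strings $z$, shows the average-case hypothesis forces $T$ to have Gaussian width $\Omega(\eta k)$, and then invokes the Mendelson--Vershynin theorem (Theorem~\ref{thm:vcdim}) to find a shifted combinatorial cube of dimension $l=\Omega(\eta^2 k/\log(1/\eta))$ inside $T$ at scale $\Omega(\eta)$. The new codewords $C'(x)$ are chosen as preimages of the cube vertices --- strings $z$ that need not be in the image of $C$ --- and the decoders are corrected for the shift $s$ by suitable randomized post-processing, yielding a genuine worst-case $(q,c,\Omega(\eta))$-smooth code, after which Theorem~\ref{thm:ldctosmooth} finishes as you say. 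The factor $\log(1/\eta)$ in the rate loss comes from the Dudley-type entropy integral in Theorem~\ref{thm:vcdim}, not from repetition. So your final Katz--Trevisan step is right, but the reduction feeding into it needs to be replaced by this Gaussian-width/VC-dimension argument (or something of comparable strength).
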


The idea behind the proof of Lemma~\ref{lem:ave2ldc} is as follows.
We first switch the message and codeword alphabets to~$\pmset{}$ and let $f_i:\pmset{k}\to [-1,1]$ be the expected decoding function $f_i(z)=\E[\cA_i(z)]$.
The properties of~$C$ then easily imply that the set $T\subseteq [-1,1]^k$ given by $T=\{(f_1(z),\dots,f_k(z)):z\in \sBC^k\}$ has large \emph{Gaussian width}, in particular it holds that for a standard $k$-dimensional Gaussian vector~$g$, we have $\Exp[\sup_{t\in T}\langle g, t\rangle]  \gtrsim \eps k$.\footnote{We write $A\gtrsim B$ and $A=\Omega(B)$ interchangeably to mean that $A\ge c B$ for some absoulte constant $c>0$ independent of all parameters involved.}
Next, we employ a powerful result of~\cite{Mendelson:2003} showing that~$T$ contains an $l$-dimensional hypercube-like structure with edge length some absolute constant~$c\in (0,1]$, for $l \gtrsim k$.
Roughly speaking, this implies that~$C$ is a smooth code on~$\pmset{l}$ whose decoding probability depends on~$\eps$ and~$c$.
Finally, we obtain an LDC via an application of Theorem~\ref{thm:ldctosmooth}.
The full proof is given in Section~\ref{sec:avg2ldc}.

\subsection{Organization}
Section~\ref{sec:prelims} contains some preliminaries in Fourier analysis over the Boolean cube. In Section~\ref{sec:outlaw2smoothcode}, we prove our main theorem (Theorem~\ref{thm:main}) by first showing that outlaw distributions over smooth functions imply existence of average-case smooth codes and using Lemma~\ref{lem:ave2ldc} to convert them to LDCs. In Section~\ref{sec:avg2ldc}, we prove Lemma~\ref{lem:ave2ldc} showing how to convert average-case smooth-codes to LDCs. In Section~\ref{sec:ldc2outlaw}, we show the converse to our main theorem (Theorem~\ref{thm:main_converse}) showing how to get outlaw distributions over smooth functions from LDCs. Finally in Section~\ref{sec:candidates}, we give some candidate constructions of outlaw distributions over smooth functions using incidence geometry and hypergraph pseudorandomness.

\section{Preliminaries}
\label{sec:prelims}

%
%
%

We recall a few basic definitions and facts from  analysis over the $n$-dimensional Boolean hypercube~$\pmset{n}$.
Equipped with the coordinate-wise multiplication operation, the hypercube forms an Abelian group whose group of characters is formed by the functions $\chi_S(x) = \prod_{i\in S}x_i$ for all  $S\subseteq[n]$.
The characters form a complete orthonormal basis for the space of real-valued functions on $\pmset{n}$ endowed with the inner product $\langle f,g\rangle = \Exp_{x\in\pmset{n}}[f(x)g(x)]$,
where we use the notation $\Exp_{a\in S}$ to denote the expectation with respect to a uniformly distributed element~$a$ over a set~$S$.
The \emph{Fourier transform} of a function $f:\pmset{n}\to\R$ is the function $\widehat f:2^{[n]}\to \R$ defined by $\widehat{f}(S) = \langle f,\chi_S\rangle$.
The Fourier inversion formula (which follows from orthonormality of the character functions) asserts that
\beqn
f = \sum_{S\subseteq [n]} \widehat{f}(S)\chi_S.
\eeqn
\emph{Parseval's Identity} relates the $L_2$-norms of~$f$ and its Fourier transform by
\beqn
\big(\Exp_{x\in\pmset{n}}[f(x)^2]\big)^{1/2}
=
\Big(\sum_{S\subseteq [n]}|\widehat{f}(S)|^2\Big)^{1/2}.
\eeqn
A function~$f$ has \emph{degree}~$q$ if~$\widehat{f}(S) = 0$ when $|S| > q$ and
the \emph{degree-$q$ truncation} of~$f$, denoted~$f^{\le q}$, is the degree-$q$ function defined by
\beqn
f^{\le q}=\sum_{|S|\le q}\widehat{f}(S)\chi_S.
\eeqn 
A function~$f$ is a \emph{$q$-junta} if it depends only on a subset of~$q$ of its variables, or equivalently, if there exists a subset $T\subseteq [n]$ of size~$|T|\leq q$ such that~$\widehat{f}(S) = 0$ for every $S\not\subseteq T$. The \emph{$i$th discrete derivative} $D_if$ is the function $(D_if)(x) = (f(x) - f(x^i))/2$, where $x^i$ is the point that differs from~$x$ on the $i$th coordinate.
It is easy to show that the $i$th discrete derivative in of a function~$f$ is given by
\beqn
D_if = \sum_{S\ni i} \widehat{f}(S) \chi_S.
\eeqn
Hence, it follows that $\|D_if\|_{\spec} = \sum_{S\ni i}|\widehat f(S)|$.


%
\section{From outlaws to LDCs}
\label{sec:outlaw2smoothcode}

In this section we prove Theorem~\ref{thm:main}.
For convenience, in the remainder of this paper, we switch the message and codeword alphabets of all codes from $\bset{n}$ to $\pmset{n}$.
We begin by showing that outlaw distributions over degree-$q$ functions give $q$-query average-case smooth codes.
Combined with Lemma~\ref{lem:ave2ldc}, this implies the second part of Theorem~\ref{thm:main}.

\begin{theorem}
\label{thm:degq_dist2avg}
Let $\mu$ be a probability distribution on $1$-smooth degree-$q$ functions on $\sBC^n$, let $\eps \in (0,1]$ and let $k = \kappa_\mu(\eps)$. 
Then, there exists a $(q,1,\eps/4)$-average-case smooth code sending $\sBC^k$ to $\sBC^n$.
\end{theorem}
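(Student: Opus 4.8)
The plan is to follow exactly the symmetrization sketch given in the Techniques section, turning it into a rigorous argument. First I would set up notation: given the $(k,\eps)$-outlaw $\mu$ over $1$-smooth degree-$q$ functions, with $k=\kappa_\mu(\eps)$, let $f_1,\dots,f_k$ be i.i.d.\ $\mu$-distributed and $\bar f=\E_\mu[f]$. Introduce i.i.d.\ ghost copies $f_i'$. By Jensen's inequality (conditioning on $f_1,\dots,f_k$ and averaging over the $f_i'$, pulling the expectation inside $\|\cdot\|_{L_\infty}$) together with Definition~\ref{def:outlaw}, one gets
\[
\E\Big[\Big\|\textstyle\sum_{i=1}^k (f_i-f_i')\Big\|_{L_\infty}\Big]\;\ge\;\E\Big[\Big\|\textstyle\sum_{i=1}^k (f_i-\bar f)\Big\|_{L_\infty}\Big]\;\ge\;\eps k.
\]
Since each $f_i-f_i'$ is symmetric and the summands are independent, inserting i.i.d.\ Rademacher signs $x_i$ does not change the distribution of the sum, so the left-hand side equals $\E\big[\big\|\sum_i x_i(f_i-f_i')\big\|_{L_\infty}\big]$. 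By the triangle inequality this is at most $\E\big[\|\sum_i x_i f_i\|_{L_\infty}\big]+\E\big[\|\sum_i x_i f_i'\|_{L_\infty}\big]=2\,\E\big[\|\sum_i x_i f_i\|_{L_\infty}\big]$. Hence $\E_{x,f}\big[\|\sum_i x_i f_i\|_{L_\infty}\big]\ge \eps k/2$, and by the averaging principle we may fix $f_1^\star,\dots,f_k^\star$ in the support of $\mu$ (so each is $1$-smooth, degree $q$) with $\E_x\big[\|\sum_i x_i f_i^\star\|_{L_\infty}\big]\ge \eps k/2$.

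Next I would extract the code. For each $x\in\sBC^k$ choose $C(x)\in\sBC^n$ achieving (or nearly achieving) the supremum defining $\|\sum_i x_i f_i^\star\|_{L_\infty}$; after possibly replacing each $f_i^\star$ by $-f_i^\star$ for a fixed choice of signs absorbed into the decoder, we may drop the absolute value, so $\E_x\big[\sum_i x_i f_i^\star(C(x))\big]\ge \eps k/2$, i.e.\ $\E_{x,i}\big[x_i f_i^\star(C(x))\big]\ge \eps/2$. Now I would convert each $f_i^\star$ into a bounded-valued randomized decoder. Since $\|f_i^\star\|_{L_\infty}\le \specnorm{f_i^\star}$ and $f_i^\star$ has degree $q$, write $f_i^\star=\sum_{|S|\le q}\widehat{f_i^\star}(S)\chi_S$; a standard random-sampling decoder picks a set $S$ with probability $|\widehat{f_i^\star}(S)|/\Anorm(f_i^\star)$ (where $\Anorm(f_i^\star)=\specnorm{f_i^\star}$) and outputs $\sign(\widehat{f_i^\star}(S))\chi_S(y)\in\{-1,1\}$, which queries at most $q$ coordinates and has conditional expectation exactly $f_i^\star(y)/\specnorm{f_i^\star}$. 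However, $\specnorm{f_i^\star}$ could be large, so I instead use the smoothness to bound things per-coordinate: the key point is that the query distribution must put mass at most $1/n$ (here $c=1$) on each coordinate $j$, and $\Pr[j\in S]=\sum_{S\ni j}|\widehat{f_i^\star}(S)|/\Anorm(f_i^\star)=\specnorm{D_j f_i^\star}/\Anorm(f_i^\star)\le (1/n)\cdot(1/\text{(normalizing factor)})$. To make the denominator work out to exactly $1/n$, I would not normalize by $\specnorm{f_i^\star}$ but design the decoder so that the marginal on each coordinate is at most $1/n$: one clean way is to include the empty set / a ``do nothing, output a fresh random bit'' option with enough weight so that the total spectral mass on nonempty sets is at most $1$; since $\sum_j \specnorm{D_j f_i^\star}\le q\cdot(1/n)\cdot n = q$ wait — rather, $\specnorm{D_j f_i^\star}\le 1/n$ for all $j$ by $1$-smoothness, and $\sum_{S\neq\emptyset}|S|\,|\widehat{f_i^\star}(S)| = \sum_j \specnorm{D_j f_i^\star}\le 1$, so after renormalizing the nonempty-set part by its total mass $m_i:=\sum_{S\neq\emptyset}|\widehat{f_i^\star}(S)|\le 1$ (padding the rest with a coin-flip output) the coordinate marginals are $\le \specnorm{D_j f_i^\star}/1 \le 1/n$, giving $c=1$. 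The padding replaces $f_i^\star$ by $m_i f_i^\star$ plus noise with zero correlation, so the average success bound degrades only by the factor $m_i\le 1$ — actually it can only help in the sense that $\E_{x,i}[x_i f_i^\star(C(x))]\ge\eps/2$ still holds, and after the decoder's conditional expectation equals $f_i^\star(C(x)) - \widehat{f_i^\star}(\emptyset)\cdot(\text{const})$; I would handle the constant term $\widehat{f_i^\star}(\emptyset)=\bar f^\star$ by noting it contributes $0$ to $\E_x[x_i(\cdot)]$ since $x_i$ is mean zero. Putting it together, $\E_{x,i}\big[x_i\,\E[\cA_i(C(x))]\big]\ge \eps/2$, which rearranges to $\Pr_{x,i,\text{coins}}[\cA_i(C(x))=x_i]\ge \tfrac12+\tfrac\eps4$, i.e.\ a $(q,1,\eps/4)$-average-case smooth code.

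The main obstacle, and the step I would be most careful with, is the conversion of $f_i^\star$ into a genuine decoder with the smooth-code query guarantee $c=1$: one must simultaneously (a) produce a $\{-1,1\}$-valued output whose expectation tracks $f_i^\star$, (b) query at most $q$ coordinates, (c) ensure each coordinate is queried with probability at most $1/n$, and (d) keep the average correlation $\ge \eps/2$ (up to the factor giving $\eps/4$). The tension is that naive normalization by the full spectral norm $\specnorm{f_i^\star}$ (which could be as large as $\sim\sqrt{\binom nq}$) would destroy the correlation, whereas the $1$-smoothness hypothesis $\specnorm{D_j f_i^\star}\le 1/n$ is precisely what is needed to control the per-coordinate marginals \emph{without} paying that price — the constant ($= \sum_j\specnorm{D_jf_i^\star}\le 1$, using that $f_i^\star$ has degree $q$ so $\sum_{S}|S||\widehat f(S)|=\sum_j\specnorm{D_jf}$ and each term is $\le1/n$) comes out to be $1$, matching the target $c=1$. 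Everything else (Jensen, symmetrization, the averaging principle, dropping absolute values) is routine, and the degree-$q$ hypothesis only enters to guarantee the decoder queries at most $q$ coordinates.
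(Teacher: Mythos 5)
Your proposal is correct and follows essentially the same route as the paper's proof: symmetrization with ghost copies and Jensen, Rademacher insertion and the triangle inequality, the averaging principle to fix $f_1^\star,\dots,f_k^\star$, a code $C(x)$ realizing the $L_\infty$-norm, and a decoder that samples a Fourier character with probability proportional to $|\widehat{f_i^\star}(S)|$ so that $1$-smoothness bounds the per-coordinate query probability by $1/n$. If anything you are slightly more careful than the paper's write-up on two details it glosses over (outputting $\sign(\widehat{f_i^\star}(S))\chi_S$ rather than $\chi_S$, and the harmless constant term $\widehat{f_i^\star}(\emptyset)$), while your ``drop the absolute value'' step is justified exactly as loosely as the paper's own equation defining $C$.
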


\begin{proof}
The proof uses a symmetrization argument.
Let $\mathcal F = (f_1,\dots,f_k)$ and $\mathcal F' = (f_1',\dots,f_k')$ be two $k$-tuples of independent $\mu$-distributed random variables and let $\bar f = \Exp_\mu[f]$.
Then, by definition of~$\kappa_\mu(\eps)$ and Jensen's inequality,
\begin{align*}
\eps 
&\le 
\E_{\mathcal F}\Big[\Big\|\frac{1}{k}\sum_{i=1}^k (f_i - \bar f)\Big\|_{L_\infty}\Big]\\
&=
\E_{\mathcal F}\Big[\Big\|\frac{1}{k}\sum_{i=1}^k \big(f_i - \Exp_{\mathcal F'}[f_i']\big)\Big\|_{L_\infty}\Big]\\
&\le 
\E_{\mathcal{F,F'}}\Big[\Big\|\frac{1}{k}\sum_{i=1}^k (f_i - f_i')\Big\|_{L_{\infty}}\Big].
\end{align*}
The random variables $f_i - f_i'$ are symmetrically distributed, which is to say that they have the same distribution as their negations $f_i' - f_i$.
Since they are independent, it follows that
for every $x\in\pmset{k}$, the random variable $x_1(f_1 - f_1') + \cdots + x_k(f_k - f_k')$ has the same distribution as $(f_1 - f_1') + \cdots + (f_k - f_k')$.
Therefore,
\begin{align*}
\E_{\mathcal{F,F'}}\Big[\Big\|\frac{1}{k}\sum_{i=1}^k (f_i - f_i')\Big\|_{L_{\infty}}\Big]
&= 
\E_{x\in \sBC^k}\Big[\E_{\mathcal{F,F'}}\Big[\Big\|\frac{1}{k}\sum_{i=1}^k x_i(f_i - f_i')\Big\|_{L_{\infty}}\Big]\Big]\\
&\le
2\E_{\mathcal{F}}\Big[\E_{x\in \sBC^k}\Big[\Big\|\frac{1}{k}\sum_{i=1}^k x_if_i\Big\|_{L_{\infty}}\Big]\Big].
\end{align*}
Applying the Averaging Principle to the outer expectation, we find that there exist $1$-smooth degree-$q$ functions~$f_1^\star,\dots,f_k^\star:\pmset{n}\to\R$ such that 
\beq\label{eq:fstareps}
\E_{x\in \sBC^k}\Big[\Big\|\frac{1}{k}\sum_{i=1}^k x_if_i^\star\Big\|_{L_\infty}\Big]
\ge \frac{\eps}{2}.
\eeq

Define the code $C:\sBC^k\to \sBC^n$ such that for each $x\in\pmset{k}$, we have
\beq\label{eq:Cdef}
\frac{1}{k}\sum_{i=1}^k x_if_i^\star(C(x))=\Big\|\frac{1}{k}\sum_{i=1}^k x_if_i^\star\Big\|_{L_{\infty}}.
\eeq

For each~$i\in[k]$, define the decoder $\cA_i$ as follows.
Let $\nu_i:2^{[n]}\to[0,1]$ be the probability distribution defined by $\nu_i(S) = |\widehat{f_i^\star}(S)|/\|\widehat{f_i^\star}\|_{\Anorm}$.
Given a string~$z\in\pmset{n}$, with probability $1 -\|\widehat{f_i^\star}\|_{\Anorm}$, the decoder~$\mathcal A_i$ returns a uniformly random sign, and with probability~$\|\widehat{f_i^\star}\|_{\Anorm}$, it samples a set $S\subseteq[n]$ according to~$\nu_i$ and returns $\chi_S(z)$.
This is a valid probability distribution since for any $1$-smooth function~$f$, we have
\beqn
\specnorm{f}=\sum_{S\subset [n]}|\widehat{f}(S)|\le \sum_{S\subset [n]}|S||\widehat{f}(S)|=\sum_{i=1}^n\sum_{S\ni i} |\widehat{f}(S)| \le n\cdot \frac{1}{n}=1.
\eeqn
Then, $\mathcal A_i$ queries at most~$q$ coordinates of~$z$ and since $f_i^\star$ is $1$-smooth, the probability that it queries any coordinate $j\in[n]$ is at most $\|D_jf_i^\star\|_{\Anorm} \le 1/n$.
 We also have $\E[\cA_i(z)] = f_i^\star(z)$.
Therefore, by~\eqref{eq:fstareps} and~\eqref{eq:Cdef}, we have
\begin{align*}
\E_{x\in \sBC^k, i\in [k]}\left[\Pr[x_i=\cA_i(C(x))]\right]&=\frac{1}{2}+\frac{1}{2}\E_{x\in \sBC^k, i\in [k]}\left[x_i\E[\cA_i(C(x))]\right]\\
&=\frac{1}{2}+\frac{1}{2}\E_{x\in \sBC^k, i\in [k]}\left[x_if_i^\star(C(x))\right]\\
&=
\frac{1}{2}+\frac{1}{2}
\E_{x\in \sBC^k}\Big[\Big\|\frac{1}{k}\sum_{i=1}^k x_if_i^\star\Big\|_{L_\infty}\Big]\\
&\ge \frac{1}{2}+\frac{\eps}{4}.
\end{align*}
Hence, $C$ is a~$(q, 1, \eps/4)$-average-case smooth code.
\end{proof}


The final step before the proof of Theorem~\ref{thm:main} is to show that for any distribution~$\mu$ over smooth functions, there exists a distribution~$\tilde\mu$ over smooth functions of bounded degree that is not much more concentrated than~$\mu$.


\begin{lemma}
\label{lem:truncate_smooth_dist}
Let $\mu$ be a probability distribution over $1$-smooth functions and let  $\eps>0$.
Then, there exists a probability distribution $\tilde{\mu}$ over $1$-smooth functions of degree $q=4/\epsilon$ such that $\kappa_{\tilde{\mu}}(\epsilon/2)\ge \kappa_\mu(\epsilon)$.
\end{lemma}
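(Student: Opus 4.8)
The plan is to obtain $\tilde\mu$ simply by truncating each function in the support of $\mu$ to degree $q = 4/\eps$. That is, given a $\mu$-distributed function $f$, let $\tilde f = f^{\le q}$, and let $\tilde\mu$ be the law of $\tilde f$. First I would check that $\tilde\mu$ is supported on $1$-smooth functions: truncation can only delete Fourier coefficients, so for each $i$ we have $\specnorm{D_i \tilde f} = \sum_{S \ni i, |S| \le q} |\widehat f(S)| \le \sum_{S \ni i} |\widehat f(S)| = \specnorm{D_i f} \le 1/n$. Hence $\tilde f$ is $1$-smooth, and it has degree at most $q$ by construction.

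The core of the argument is to control how much mass $f$ loses under truncation, \emph{in $L_\infty$ and uniformly over the support of $\mu$}. The key estimate is that for any $1$-smooth $f : \pmset n \to \R$, the high-degree part $f - f^{\le q}$ has small spectral norm: indeed
\[
\specnorm{f - f^{\le q}} = \sum_{|S| > q} |\widehat f(S)| \le \frac{1}{q}\sum_{|S| > q} |S|\,|\widehat f(S)| \le \frac{1}{q}\sum_{S} |S|\,|\widehat f(S)| = \frac{1}{q}\sum_{i=1}^n \sum_{S \ni i} |\widehat f(S)| = \frac{1}{q}\sum_{i=1}^n \specnorm{D_i f} \le \frac{n}{q}\cdot\frac1n = \frac1q = \frac{\eps}{4}.
\]
Since $\norm{\cdot}_{L_\infty} \le \specnorm{\cdot}$, this gives $\norm{f - f^{\le q}}_{L_\infty} \le \eps/4$ for every $f$ in the support of $\mu$. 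The same bound applies to the mean: $\norm{\bar f - \overline{\tilde f}}_{L_\infty} = \norm{\bar f - (\bar f)^{\le q}}_{L_\infty} \le \eps/4$, using linearity of expectation and of truncation (so $\overline{\tilde f} = \E_\mu[f^{\le q}] = (\E_\mu[f])^{\le q} = (\bar f)^{\le q}$).

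Now let $k = \kappa_\mu(\eps)$ and let $f_1, \dots, f_k$ be independent $\mu$-distributed, with $\tilde f_j = f_j^{\le q}$. By the triangle inequality in $L_\infty$,
\[
\Big\| \frac1k \sum_{j=1}^k (\tilde f_j - \overline{\tilde f}) \Big\|_{L_\infty} \ge \Big\| \frac1k \sum_{j=1}^k (f_j - \bar f) \Big\|_{L_\infty} - \frac1k \sum_{j=1}^k \norm{f_j - \tilde f_j}_{L_\infty} - \norm{\bar f - \overline{\tilde f}}_{L_\infty} \ge \Big\| \frac1k \sum_{j=1}^k (f_j - \bar f) \Big\|_{L_\infty} - \frac{\eps}{4} - \frac{\eps}{4}.
\]
Taking expectations and using that $\mu$ is an $(k,\eps)$-outlaw gives $\E\big[\norm{\tfrac1k\sum_j(\tilde f_j - \overline{\tilde f})}_{L_\infty}\big] \ge \eps - \eps/2 = \eps/2$, so $\tilde\mu$ is a $(k, \eps/2)$-outlaw, i.e.\ $\kappa_{\tilde\mu}(\eps/2) \ge k = \kappa_\mu(\eps)$, as desired.

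I do not expect a genuine obstacle here; the only thing to be careful about is that the truncation error must be bounded \emph{in $L_\infty$} (not just on average), which is exactly where $1$-smoothness is used — it converts the per-coordinate derivative bound into a global spectral-norm bound on the high-degree tail via the double-counting identity $\sum_S |S|\,|\widehat f(S)| = \sum_i \specnorm{D_i f}$. One should also double-check the interchange of expectation and truncation (both are linear operations on the finite-dimensional Fourier space, so this is immediate) and that $\eps/2$ rather than some other constant comes out of the two error terms, which fixes the choice $q = 4/\eps$.
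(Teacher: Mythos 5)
Your proof is correct and follows the same approach as the paper's: truncate each function to degree $q = 4/\eps$, use the $1$-smoothness to bound the spectral norm of the high-degree tail by $1/q = \eps/4$ (hence the $L_\infty$ truncation error), and finish with the triangle inequality to lose only $\eps/2$. You are a bit more explicit than the paper in separating the two error terms (per-sample and mean), but the argument and the key estimate $\sum_{|S|>q}|\widehat f(S)| \le \frac1q\sum_i\specnorm{D_i f}$ are the same.
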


\begin{proof}
We first establish that smooth functions have low-degree approximations in the supremum norm.
If $f:\sBC^n\to \R$ is $1$-smooth, then
\beqn
q\sum_{|S|>q}|\widehat{f}(S)|\le \sum_{S\subset [n]}|S||\widehat{f}(S)|=\sum_{i=1}^n\sum_{S\ni i} |\widehat{f}(S)| = \sum_{i=1}^n \specnorm{D_if}\le 1.
\eeqn
It follows that the degree-$q$ truncation~$f^{\leq q}$ satisfies
\beq\label{eq:truncdist}
\norm{f-f^{\le q}}_{L_\infty}\le \sum_{|S|>q} |\widehat{f}(S)|\le \frac{1}{q}=\frac{\epsilon}{4}.
\eeq
Define $\tilde{\mu}$ as follows: sample $f$ according to~$\mu$ and output~$f^{\le q}$. Clearly, $\tilde{\mu}$ is also a distribution over $1$-smooth functions. For $k=\kappa_\mu(\eps)$, we have 
\beqn
\Exp_{f_1,\dots,f_k \sim \mu}\left[\Big\|\frac{1}{k}\sum_{i=1}^k \big(f_i - \Exp[f_i]\big)\Big\|_{L_\infty}\right] \geq \eps.\eeqn
Hence, by the triangle inequality and~\eqref{eq:truncdist}, we have
\beqn
\Exp_{f_1,\dots,f_k \sim \tilde{\mu}}\left[\Big\|\frac{1}{k}\sum_{i=1}^k \big(f_i - \Exp[f_i]\big)\Big\|_{L_\infty}\right] \geq \frac{\epsilon}{2},
\eeqn
giving the claim.
\end{proof}

\begin{proof}[Proof of Theorem~\ref{thm:main}]
By applying Lemma~\ref{lem:truncate_smooth_dist} to $\mu$, we get a distribution $\tilde{\mu}$ over $1$-smooth degree $q=O(1/\epsilon)$ functions with $k'=\kappa_{\tilde{\mu}}(\epsilon/2)\ge \kappa_{\mu}(\epsilon)=k$. By Theorem~\ref{thm:degq_dist2avg}, we get a $(q,1,\Omega(\epsilon))$-average-case smooth code $C':\sBC^{k'}\to \sBC^n$. Finally we use Lemma~\ref{lem:ave2ldc} to convert~$C'$ to a $(q,\Omega(\epsilon),\Omega(\epsilon))$-LDC $C:\sBC^\ell \to \sBC^n$ where $\ell=\Omega(\epsilon^2k'/\log(1/\epsilon))$. 
For the last part of the theorem we can simply apply Theorem~\ref{thm:degq_dist2avg} directly.
\end{proof}

\section{From average-case smooth codes to LDCs}
\label{sec:avg2ldc}
In this section, we prove Lemma~\ref{lem:ave2ldc}. For this, we need the notion of the Vapnik--Chervonenkis dimension (VC-dimension).

\begin{definition}[VC-dimension]\label{def:vcdim}
For $T\subset [-1,1]^k$ and $w>0$, $\vc(T,w)$ is defined as the size of the largest subset $\sigma\subset [k]$ such that there exists a shift $s\in [-1,1]^k$ satisfying the following: for every $x\in \sBC^\sigma$, there exists $t\in T$ such that for every $i\in \sigma$, $(t_i-s_i)x_i\ge w/2$.
\end{definition}

Observe that if~$T$ is convex, then $\vc(T,w)$ is the maximum dimension of a shifted hypercube with edge lengths at least $w$ contained in~$T$ . 

\begin{definition}[Gaussian width]
Let~$g$ be a $k$-dimensional standard Gaussian vector, with independent standard normal distributed entries.
The \emph{Gaussian width} of a set~$T\subseteq \R^k$ is defined as
\beqn
E(T)=\E_g[\sup_{t\in T} \inpro{g}{t}].
\eeqn
\end{definition}

It is easy to see that a large VC-dimension implies a large Gaussian width.
The following theorem shows the converse: containing a hypercube-like structure is the only way to have large Gaussian width.

\begin{theorem}[\cite{Mendelson:2003}]\label{thm:vcdim}
Let $T\subset [-1,1]^k$. Then, the Gaussian width of $T$ is bounded as
$$E(T) \lesssim \sqrt{k} \int_{\alpha E(T)/k}^1 \sqrt{\vc(T,w) \log (1/w)} dw$$ for some absolute constant $\alpha >0$.
\end{theorem}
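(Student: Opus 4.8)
The plan is to prove Theorem~\ref{thm:vcdim} via a chaining argument combined with a dual form of Sudakov minoration. The key point is that the Gaussian width $E(T)$ controls, and is controlled by, the metric entropy of $T$ at all scales, while the VC-dimension quantity $\vc(T,w)$ is precisely what lower bounds the metric entropy of $T$ in the $\ell_\infty$-type combinatorial sense relevant here. So I would first recall the \emph{dual Sudakov / Pajor--Talagrand} type inequality: for a convex body $T$, the number of translates of a cube $[-w,w]^\sigma$ (or the combinatorial analogue measured by $\vc(T,w)$) needed to capture the Gaussian structure at scale $w$ satisfies $\log N(T,w) \lesssim \vc(T,w)\log(1/w)$, up to the usual constants. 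The precise statement we want is that the covering numbers of $T$ at scale $w$ in the relevant norm obey $\log N(T,w)\lesssim \vc(T,w)\log(1/w)$; this is a standard consequence of combinatorial dimension bounds for $[-1,1]^k$-valued classes (the fat-shattering / $\vc$ estimates of Mendelson--Vershynin), which is exactly the content being cited from \cite{Mendelson:2003}.

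Next I would feed this entropy bound into Dudley's entropy integral. Dudley's inequality gives
\[
E(T) \lesssim \int_0^{\infty} \sqrt{\log N(T,w)}\, dw \lesssim \int_0^{\operatorname{diam}(T)} \sqrt{\vc(T,w)\log(1/w)}\, dw.
\]
Since $T\subseteq[-1,1]^k$, the diameter is at most $2\sqrt k$, but more carefully one normalizes so the relevant scale ranges in $(0,1]$ after pulling out a factor of $\sqrt k$: writing the integral in terms of the normalized scale and using that $\vc(T,w)$ is nondecreasing as $w\to 0$, one gets $E(T)\lesssim \sqrt k\int_0^1\sqrt{\vc(T,w)\log(1/w)}\,dw$. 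The lower cutoff $\alpha E(T)/k$ in the statement comes from truncating the entropy integral: below scale $\sim E(T)/k$ the contribution to the Gaussian width is negligible because at very fine scales $T$ essentially looks like a single point relative to the Gaussian fluctuations — quantitatively, the trivial bound $\sqrt{\log N(T,w)} \le \sqrt{\vc(T,w)\log(1/w)}$ combined with the a priori bound $\vc(T,w)\le k$ shows that $\int_0^{\alpha E(T)/k}\sqrt{k}\sqrt{\vc(T,w)\log(1/w)}\,dw$ is at most a small multiple of $E(T)$ itself (here using $\vc(T,w)\le k$ and that $w\log(1/w)$ integrates to something like $w$, so the tail integral is $O(\sqrt k \cdot \sqrt k \cdot \sqrt{E(T)/k\cdot\log}) = o(E(T))$ for an appropriate constant $\alpha$), so it can be absorbed into the left-hand side.

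The main obstacle, and the part that genuinely relies on \cite{Mendelson:2003} rather than classical Gaussian-process theory, is establishing the entropy bound $\log N(T,w)\lesssim \vc(T,w)\log(1/w)$ with the \emph{combinatorial} $\vc(T,w)$ from Definition~\ref{def:vcdim} rather than an ordinary metric-entropy dimension. This is a real-valued analogue of the Vapnik--Chervonenkis / Sauer--Shelah lemma, and the delicate point is the logarithmic loss: naive discretization of $[-1,1]$ to scale $w$ and applying Sauer--Shelah to the resulting finite-valued class gives $\log N \lesssim \vc \cdot \log(k/w \cdot \text{stuff})$, and removing the spurious dependence on $k$ (replacing it by $\log(1/w)$) requires the more refined argument of Mendelson--Vershynin using a probabilistic extraction/random coordinate selection step. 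Since the theorem is quoted from \cite{Mendelson:2003}, I would cite that entropy estimate as a black box and spend the bulk of the proof on the Dudley integral bookkeeping and the truncation argument that produces the lower limit $\alpha E(T)/k$; if a self-contained treatment were desired, the hardest sub-lemma to reprove would be this dimension-to-entropy inequality.
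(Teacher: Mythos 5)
The paper does not give a proof of this theorem; it quotes it verbatim from Mendelson--Vershynin \cite{Mendelson:2003}, so there is no in-paper argument to compare against. That said, your sketch correctly identifies the two ingredients one would use to derive the stated Gaussian-width bound from the machinery of \cite{Mendelson:2003}: (i) the main theorem of Mendelson--Vershynin, namely the entropy estimate $\log N(T, w, L_2) \lesssim \vc(T, cw)\log(1/w)$ with constants independent of the ambient dimension (this is indeed the hard, genuinely combinatorial step, and you are right that the point is to avoid the $\log k$ one would get from naive discretization plus Sauer--Shelah); and (ii) Dudley's entropy integral to pass from covering numbers to Gaussian width, together with the rescaling $u = w\sqrt{k}$ that identifies the $\ell_2$ scale on $[-1,1]^k$ with the normalized $L_2$ scale in $(0,1]$.

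The one genuine gap is your justification of the lower limit $\alpha E(T)/k$. You argue that the tail $\int_0^{w_0}\sqrt{k}\sqrt{\vc(T,w)\log(1/w)}\,dw$ can be absorbed into $E(T)$ by using the trivial bound $\vc(T,w)\le k$. But that computation gives roughly $k\,w_0\sqrt{\log(1/w_0)}$, and with $w_0 = \alpha E(T)/k$ this is $\alpha E(T)\sqrt{\log\bigl(k/(\alpha E(T))\bigr)}$; the extra $\sqrt{\log}$ factor is unbounded, so this cannot be absorbed with an \emph{absolute} constant $\alpha$, which is what the theorem asserts. The standard fix does not use the entropy bound at all on the tail. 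Instead, one starts the chaining from a $u_0$-net $T_0$ of $T$ in $\ell_2$: writing $t = \pi(t) + (t-\pi(t))$ with $\pi(t)\in T_0$ the nearest net point gives
\begin{equation*}
E(T) \;\le\; E(T_0) + \E\|g\|_2\cdot u_0 \;\le\; C\int_{u_0}^{\mathrm{diam}(T)}\sqrt{\log N(T,u)}\,du \;+\; \sqrt{k}\,u_0.
\end{equation*}
Choosing $u_0 = E(T)/(2\sqrt{k})$ makes the last term $E(T)/2$, which absorbs into the left-hand side; after the change of variables $u=w\sqrt k$ this produces exactly the lower limit $w_0 = E(T)/(2k)$, i.e.\ $\alpha = 1/2$ works. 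So the truncation is a consequence of $\E\|g\|_2\le\sqrt{k}$ and the net decomposition, not of the bound $\vc(T,w)\le k$. With that correction, your outline matches the standard derivation of this corollary of \cite{Mendelson:2003}.
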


Finally, we use that fact that, as for LDCs, we can assume that on input~$y\in\bset{n}$, the decoder~$\mathcal A_i$ of a smooth code first samples a set~$S\subseteq[n]$ of at most~$q$ coordinates according to a probability distribution that depends on~$i$ only and then returns a random sign depending only on~$i$,~$S$ and the values of~$y$ at~$S$.

\begin{proof}[Proof of Lemma~\ref{lem:ave2ldc}]
The proof works by showing that the average-case smooth code property implies that the image of the (average) decoding functions should have large Gaussian width. We then use Theorem~\ref{thm:vcdim} to find a hypercube like structure inside the image, which we use to construct a smooth code. Finally we use Theorem~\ref{thm:ldctosmooth} to convert the smooth code to an LDC.

Recall the switch of the message and codeword alphabets to~$\pmset{}$.
For each $i\in[k]$, let $f_i:\pmset{n}\to [-1,1]$ be the expected decoding function $f_i(z)=\E[\cA_i(z)]$.
Let $g$ be a standard $k$-dimensional Gaussian vector and 
$T=\{(f_1(z),\dots,f_k(z)):z\in \sBC^n\}$.
By the definition of average-case smooth code we have 
\beqn
2\eta k\le \E_{x\in \sBC^k}\left[\sum_{i=1}^k x_if_i(C(x))\right]\le \E_{x\in \sBC^k}\left[\sup_{t\in T} \inpro{x}{t} \right] \lesssim \E_{g}\left[\sup_{t\in T} \inpro{g}{t} \right].
\eeqn 
(See for instance~\cite[Lemma~3.2.10]{Talagrand:2014} for the last inequality.)
By Theorem~\ref{thm:vcdim}, for some constant $\alpha>0$, we have 
\begin{align*}
\eta k \lesssim \sqrt{k} \int_{\alpha\eta}^1 \sqrt{\vc(T,w)\log(1/w)} dt \le \sqrt{k} \cdot \sqrt{\vc(T,\alpha\eta) \log(1/\alpha\eta)}
\end{align*} 
 where we used the fact that $\vc(T,w)$ is decreasing in $w$. So for $\tau=\alpha\eta$, we have $\vc(T,\tau)\gtrsim \eta^2 k/\log(1/\eta)$. By the definition of VC-dimension, there exists a subset $\sigma\subset [k]$ of size $|\sigma| \ge \vc(T,\tau)$ and a shift $s\in [-1,1]^k$ such that for every $x\in \sBC^\sigma$ there exists $t\in T$ such that $(t_i-s_i)x_i\ge \tau/2$ for every $i\in \sigma$.
 
Now we will define the code $C':\sBC^\sigma \to \sBC^n$. Given $x\in \sBC^\sigma$, there exists $t(x)\in T$ such that $(t(x)_i-s_i)x_i\ge \tau/2$ for every $i\in \sigma$. Define $C'(x)\in \pmset{n}$ to be one of the preimages of $t(x)$ under $f$, that is,
 $$\left(f_1(C'(x)),\dots,f_k(C'(x))\right)=t(x).$$ Let $W_p$ denote a $\sBC$-valued random variable with mean $p$. The decoding algorithms $\cA'_i(y)$ run $\cA_i(y)$ internally and give their output as follows: 
\[
\cA'_i(y)=
\begin{cases}
\text{Output } W_{(1-s_i)/2} & \text{if }\cA_i(y) \text{ returns } 1\\
\text{Output } -W_{(1+s_i)/2} & \text{if }\cA_i(y) \text{ returns } -1
\end{cases}
\]
 Therefore, for every $x\in \pmset{\sigma}$ and for every $i\in \sigma$,
\begin{align*}
x_i\E[\cA'_i(C'(x))] &=x_i\E\left[\frac{(1+\cA_i(C'(x)))}{2}W_{(1-s_i)/2}-\frac{(1-\cA_i(C'(x)))}{2}W_{(1+s_i)/2}\right]\\
&=\frac{x_i}{2}\E\left[\cA_i(C'(x))-s_i\right]\\
&=\frac{x_i}{2}(f_i(C'(x))-s_i)\\
&=\frac{x_i}{2}(t(x)_i-s_i)\\
&\ge \frac{\tau}{4}\gtrsim \eta.
\end{align*}
Since the probability that $\cA'_i(C'(x))$ queries any particular location of $C'(x)$ is still at most $c/n$, it follows that $C'$ is a $(q,c,\Omega(\eta))$-smooth code. By Theorem~\ref{thm:ldctosmooth}, $C'$ is also a $(q,\Omega(\eta/c),\Omega(\eta))$-LDC. 
  \end{proof}

\section{From LDCs to outlaws}
\label{sec:ldc2outlaw}

In this section we prove Theorem~\ref{thm:main_converse}, the converse of our main result.

\begin{proof}[Proof of Theorem~\ref{thm:main_converse}]
By Theorem~\ref{thm:ldctosmooth}, the map $C:\pmset{k}\to\pmset{n}$ is also a $(q,q/\delta,\eta)$-smooth code. For each~$i\in[k]$, let~$\mathcal B_i$ be its decoder for the $i$th index.
Let $\nu_i:2^{[n]}\to [0,1]$ be the probability distribution used by~$\mathcal B_i$ to sample a set~$S\subseteq[n]$ of at most~$q$ coordinates and 
 let~$f_{i,S}:\pmset{n}\to [-1,1]$ be function whose value at $y\in\pmset{n}$ is the expectation of the random sign returned by $\mathcal B_i(y)$ conditioned on the event that it samples~$S$. Since this value depends only on the coordinates in~$S$, the function~$f_{i,S}$ is a $q$-junta.

Fix an $i\in[k]$ and let~$f_i:\pmset{n}\to [-1,1]	$ be the function given by $f_i =\E_{S\sim\nu_i}[f_{i,S}]$.
Then, since a $q$-junta has degree at most~$q$, so does~$f_i$.
We claim that~$f_i$ is $\delta/(q2^{q/2})$-smooth.
Since the functions~$f_{i,S}:\pmset{n}\to\pmset{}$ are $q$-juntas, it follows from Parseval's identity that they have spectral norm at most $2^{q/2}$. 
Moreover, for each~$j\in[n]$, we have $\Pr_{S\sim\nu_i}[j\in S] \leq q/(\delta n)$.
Hence, since~$f_{i,S}$ depends only on the coordinates in~$S$, we have
\beqn
\specnorm{D_jf_i}
\leq
\sum_{S\ni j}\nu_i(S)\specnorm{f_{i,S}}
\leq
\frac{q2^{q/2}}{\delta n},
\eeqn
which gives the claim.
By~\eqref{eq:smoothcode}, it holds for every~$x\in\pmset{k}$ and every~$i\in[k]$ that
\beq\label{eq:smoothcode2}
x_if_i\big(C'(x)\big)\geq 2\eta.
\eeq

Define the distribution~$\mu$ to correspond to the process of sampling $i\in[k]$ uniformly at random and returning~$f_i$.
Let $\bar g = (f_1 + \cdots + f_k)/k$ be the mean of~$\mu$.
We show that $\kappa_\mu(\eta) \geq \eta k$.
To this end, let $l = \eta k$, let $\sigma:[l]\to[k]$ be an arbitrary map and define the functions~$g_1,\dots,g_l$ by $g_i = f_{\sigma(i)}$.
Let~$x\in\pmset{k}$ be such that for each~$i\in[l]$, we have $x_{\sigma(i)} = 1$ and $x_j = -1$ elsewhere.
It follows from~\eqref{eq:smoothcode2} that
$f_{\sigma(i)}\big(C(x)\big) \in [2\eta,1]$ for every $i\in[l]$ and that $f_{i}\big(C(x)\big) \leq 0$ for every other~$i\in[k]$.
Hence,
\begin{align*}
\Big\|
\frac{1}{l}\sum_{i=1}^l(g_i - \bar g)
\Big\|_{L_\infty}
&\geq
\Big(
\frac{1}{l}\sum_{i=1}^l(g_i - \bar g)
\Big)
\big(C(x)\big)\\
&=
\frac{1}{l}\sum_{i=1}^lf_{\sigma(i)}\big(C(x)\big)
-
\frac{1}{k}\sum_{i=1}^kf_{i}\big(C(x)\big)\\
&\geq
2\eta - \frac{l}{k}= \eta.
\end{align*}
If~$\sigma$ maps each element in~$[l]$ to a uniformly random element in~$[k]$, then $g_1,\dots,g_l$ are independent, $\mu$-distributed and satisfy
\begin{align*}
\Exp\left[
\Big\|
\frac{1}{l}\sum_{i=1}^l(g_i - \bar g)
\Big\|_{L_\infty}
\right]
\geq
\eta,
\end{align*}
which shows that $\kappa_\mu(\eta) \geq l$. Finally we can scale all the functions in $\mu$ to make them $1$-smooth, and get a distribution $\tilde{\mu}$ over $1$-smooth functions with $\kappa_{\tilde{\mu}}(\eta \delta/(q2^{q/2}))\ge \eta k$.
\end{proof}

\section{Candidate outlaws}
\label{sec:candidates}
In this section we elaborate on the candidate outlaws mentioned in the introduction.

\subsection{Incidence geometry}
\label{sec:lines}

We begin by describing a variant of Corollary~\ref{cor:lines_ldcs_informal} based on a slightly different assumption and show conditions under which this assumption holds.
Let~$p$ be an odd prime, let~$\F_p$ be a finite field with~$p$ elements and let~$n$ be a positive integer.
For $x,y\in \F_p^n$, the \emph{line} with origin~$x$ in direction~$y$, denoted~$\ell_{x,y}$, is the sequence $(x + \lambda y)_{\lambda \in \F_p}$.
A line is nontrivial if~$y\ne 0$.

\begin{corollary}\label{cor:lines_ldcs}
For every odd prime~$p$ and $\eps \in (0,1]$, there exist a positive integer $n_1(p,\eps)$ and a $c = c(p,\eps)\in (0,1/2]$ such that the following holds.
Let $n \geq n_1(p,\eps)$ and $k$ be positive integers.
Assume that for independent uniformly distributed elements $z_1,\dots,z_k\in \F_p^n$, with probability at least $1/2$, there exists a set $B\subseteq \F_p^n$ of size~$\eps p^n$ such that
every nontrivial line through the set~$\{z_1,\dots,z_k\}$ contains at most $p-2$ points of~$B$.
Then, there exists a $(p-1, c,c)$-LDC sending $\bset{l}$ to $\bset{p^n}$, where  $l=\Omega(c^2 k/\log(1/c))$.
\end{corollary}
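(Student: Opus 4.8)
The plan is to derive Corollary~\ref{cor:lines_ldcs} from the Main Theorem (Theorem~\ref{thm:main}) by exhibiting a suitable outlaw distribution over smooth functions built from the hypothesized line-avoiding sets. First I would set up the relevant Boolean functions. Identify the cube $\pmset{m}$ with $m=p^n$ by indexing coordinates by points of $\F_p^n$. For a direction $y\in\F_p^n\setminus\{0\}$ and an origin $x$, the line $\ell_{x,y}$ is a $p$-point arithmetic-progression-like set; since each nontrivial line can be written with $p$ distinct origins (one per point), I would work with unordered lines, of which there are $N=p^{n-1}(p^n-1)/(p-1)$. For each line $\ell$ I would define a low-degree "line indicator'' function $g_\ell:\pmset{p^n}\to\R$, of degree at most $p-1$, whose value detects whether the restriction of the input string to the $p$ coordinates of $\ell$ is all-ones; e.g.\ a normalized version of $\prod_{a\in\ell}\frac{1+z_a}{2}$, which has degree exactly $|\ell|=p$, or — to get degree $p-1$ — a symmetric-function surrogate that is large exactly when all $p$ coordinates are $+1$ and is controlled otherwise. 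The distribution $\mu$ would then be: pick $k$ uniformly random directions $z_1,\dots,z_k\in\F_p^n$ (as in the hypothesis) and output the function $f = \frac{1}{N_k}\sum_{\ell} g_\ell$ where the sum is over lines whose direction lies in $\{z_1,\dots,z_k\}$, suitably normalized.

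Next I would verify smoothness. A single coordinate $a\in\F_p^n$ lies on relatively few lines among those counted (each direction gives $p^{n-1}$ lines, only $p^{n-2}$ of which pass through $a$ if $n\ge 2$, roughly), and each $g_\ell$ has spectral norm bounded by an absolute constant $c_p$ depending only on $p$ (it is an $O(p)$-junta). Hence $\specnorm{D_af}$ is bounded by $c_p$ times the fraction of relevant lines through $a$, which after choosing the normalization is $O_p(1/p^n)$; rescaling by the appropriate $p$-dependent constant makes $\mu$ a distribution over $1$-smooth functions of degree at most $p-1$. This is where the constant $c=c(p,\eps)$ and $n_1(p,\eps)$ enter — I need $n$ large enough that the combinatorial ratios behave, and a $p$-dependent rescaling that will ultimately shrink $\eps$ by a $p$-dependent factor.

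Then I would lower-bound $\kappa_\mu$. The hypothesis says that with probability $\ge 1/2$ over $z_1,\dots,z_k$, there is a set $B$ of density $\eps$ such that every nontrivial line through $\{z_i\}$ meets $B$ in at most $p-2$ points. Encode $B$ as the string $w\in\pmset{p^n}$ with $w_a=+1$ iff $a\in B$. For every line $\ell$ with direction in $\{z_i\}$, the restriction of $w$ to $\ell$ is not all-ones, so $g_\ell(w)$ takes a small (or zero) value, while the mean $\bar f=\Exp_\mu[f]$ is the average of $g_\ell$ over \emph{all} lines, which on a density-$\eps$ string is noticeably positive (a uniformly random line is entirely inside $B$ with probability $\approx \eps^p$, and with the symmetric-surrogate normalization $\E_w[g_\ell] \approx$ density$^{p}$, so on the fixed string $w$ of density exactly $\eps$ the average $\bar f(w)$ is bounded below by a $p,\eps$-dependent constant). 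Thus $(f-\bar f)(w)$ is negative with absolute value $\gtrsim_{p,\eps} 1$ on this $w$ for the good event, giving $\Exp[\|f_1-\bar f\|_{L_\infty}]$ — wait, more carefully: I need the average of \emph{many} samples. Here I use that $\mu$ is already "$k$-fold averaged'': a single $\mu$-sample uses $k$ random directions, and by the same argument a single sample $f\sim\mu$ already satisfies $\|f-\bar f\|_{L_\infty}\ge c'$ with probability $\ge 1/2$, so $\Exp[\|f-\bar f\|_{L_\infty}]\ge c'/2$, witnessing that $\mu$ is a $(1,c'/2)$-outlaw \emph{at sample size $1$} but over functions that each already encode $k$ directions. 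To get $\kappa$ scaling with $k$ I would instead define the outlaw to be the distribution over single lines (pick one random direction, one line), show directly that $\kappa_{\mu'}(\eps') \gtrsim_{p,\eps} k$ using the same $w$: for $k$ independent random directions, the average $\frac1k\sum (g_{\ell_i}-\bar g)$ evaluated at $w$ is $\le -c'$ on the good event (every sampled line avoids being all-ones in $B$), hence $\Exp[\|\tfrac1k\sum(g_{\ell_i}-\bar g)\|_{L_\infty}]\ge c'/2$, so $\kappa_{\mu'}(c'/2)\ge k$.

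Finally I would invoke Theorem~\ref{thm:main}: with $\mu'$ a distribution over $1$-smooth degree-$(p-1)$ functions and $\kappa_{\mu'}(\eps')\ge k$ for $\eps' = \eps'(p,\eps)$ an absolute-in-$k$ constant, Theorem~\ref{thm:main} yields a $(q,\delta,\eta)$-LDC from $\bset{l}$ to $\bset{p^n}$ with $q=p-1$ (the degree bound), $\delta,\eta = \Omega(\eps')$, and $l = \Omega(\eps'^2 k/\log(1/\eps'))$; setting $c = c(p,\eps) := \Omega(\eps')$ gives exactly the stated conclusion. The main obstacle I anticipate is the careful design of the line-indicator functions $g_\ell$ so that simultaneously (i) the degree is at most $p-1$ rather than $p$, (ii) the spectral norm is $O_p(1)$, and (iii) there is a genuine additive gap at the $L_\infty$ level between $\bar f(w)$ (density-$\eps$ string) and $f(w)$ (all lines through the sampled directions avoid all-ones on $B$) — this requires that "line has $\le p-2$ points in $B$'' forces $g_\ell(w)$ to be bounded away from its typical positive value, which is why the hypothesis asks for $p-2$ and not merely "not contained in $B$''. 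Getting constants $n_1(p,\eps)$ and $c(p,\eps)$ to come out cleanly, and tracking the $p$-dependence through the rescaling to $1$-smoothness, is the bookkeeping-heavy part but poses no conceptual difficulty.
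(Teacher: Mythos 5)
Your high-level plan — build an outlaw distribution out of line-indicator functions and feed it to Theorem~\ref{thm:main} — is the same as the paper's, but two of the three places you flag as ``bookkeeping'' are in fact where the proof lives, and your treatment of them has a genuine gap.

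First, you conflate two distinct statements. Corollary~\ref{cor:lines_ldcs} samples $k$ uniformly random \emph{points} $z_1,\dots,z_k\in\F_p^n$ and asks about all lines \emph{incident to} at least one $z_i$; Corollary~\ref{cor:forbidden_directions} (the formal version of the informal one) is the version with random \emph{directions}. You read the hypothesis correctly in one sentence but build your distribution $\mu'$ by ``pick one random direction, one line,'' which matches the direction version, not the statement at hand. These really are different geometric conditions, and the discrepancy is not cosmetic: it is tied to the degree. A line has $p$ points, so a naive indicator $\prod_{a\in\ell}\tfrac{1+z_a}{2}$ has degree $p$ and yields only a $p$-query LDC (which is what Corollary~\ref{cor:forbidden_directions} gives). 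To get the claimed $p-1$ queries you must drop one point from the product, and the only canonical point to drop is the \emph{origin}. The paper therefore defines, for each point $x$, the degree-$(p-1)$ function $F_x(f)=\E_{y\ne 0}\prod_{\lambda\in\F_p^*}\phi(f)(x+\lambda y)$ and takes $\mu$ to be uniform over $\{F_x\}_{x\in\F_p^n}$. This is exactly why the corollary is parameterized by random points (so the origin to exclude is the sampled $z_i$) and why it asks for ``at most $p-2$ points in $B$'' rather than ``not contained in $B$'': after excluding $z_i$, you still need one of the remaining $p-1$ points of each line to be outside $B$ to force $F_{z_i}(\phi^{-1}(1_B))=0$. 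Your proposed ``symmetric-function surrogate'' for a degree-$(p-1)$ line indicator is not described, and there is no obvious candidate that keeps both spectral norm $O_p(1)$ \emph{and} the exact vanishing needed to create the $L_\infty$ gap; the drop-the-origin trick is the missing idea.

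Second, the outlaw lower bound needs more than the heuristic $\E_x[F_x(\phi^{-1}(1_B))]\approx\eps^p$. For an arbitrary set $B$ of density $\eps$ the right statement is the Varnavides-type density version of Szemer\'edi (Corollary~\ref{cor:varnavides} in the paper), which gives $\ge c(p,\eps)$; this is precisely where $n_1(p,\eps)$ and the constant $c(p,\eps)$ come from, and $\eps^p$ is neither an upper nor a lower bound for general $B$. Finally, a small slip: for a fixed direction, exactly one of the $p^{n-1}$ parallel lines passes through a given point (not $p^{n-2}$), so the fraction is $p^{-(n-1)}$; this doesn't change the conclusion that the functions are $O_p(1)$-smooth, but the count you wrote is off.
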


The proof uses the following version of Szemer\'edi's Theorem~\cite[Theorem~1.5.4]{Tao:2012} and its standard ``Varnavides-type'' corollary (see for example~\cite[Exercise~10.1.9]{Tao:2006}).

\begin{theorem}[Szemer\'edi's theorem]\label{thm:szemeredi}
For every odd prime~$p$ and any~$\eps\in (0,1]$, there exists a positive integer $n_0(p,\eps)$ such that the following holds.
Let $n \geq n_0(p,\eps)$ and let~$S \subseteq \F_p^n$ be a set of size~$|S|  \geq \eps p^n$.
Then, $S$ contains a nontrivial line.
\end{theorem}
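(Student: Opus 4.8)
The plan is to deduce Theorem~\ref{thm:szemeredi} from the \emph{density Hales--Jewett theorem} by identifying combinatorial lines with the subclass of affine lines whose direction vector has entries in $\{0,1\}$. Recall that density Hales--Jewett asserts: for every alphabet size $p$ and every $\eps\in(0,1]$ there is an integer $n_{\mathrm{DHJ}}(p,\eps)$ so that for $n\ge n_{\mathrm{DHJ}}(p,\eps)$, every $S\subseteq[p]^n$ with $|S|\ge\eps p^n$ contains a combinatorial line, i.e.\ a nonempty ``wildcard'' set $W\subseteq[n]$ together with fixed values $(c_i)_{i\notin W}$ such that the $p$ points $z^{(0)},\dots,z^{(p-1)}$ defined by $z^{(j)}_i=c_i$ for $i\notin W$ and $z^{(j)}_i=j$ for $i\in W$ all lie in $S$.

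The reduction itself is routine. Identify the alphabet $[p]$ with $\F_p=\{0,1,\dots,p-1\}$ as a set, so that $[p]^n$ and $\F_p^n$ have the same underlying ground set of size $p^n$ and ``density $\ge\eps$'' is the same requirement in both. Given a combinatorial line with wildcard set $W\ne\emptyset$ and fixed values $(c_i)_{i\notin W}$, set $x\in\F_p^n$ by $x_i=c_i$ for $i\notin W$ and $x_i=0$ for $i\in W$, and let $y=\1_W\in\F_p^n$ be the $0/1$ indicator vector of $W$. Then $y\ne 0$ and $z^{(j)}=x+jy$ for every $j\in\F_p$, so the combinatorial line is precisely the nontrivial line $\ell_{x,y}$; since $\F_p$ is a field and $y\ne 0$, its $p$ points $x+\lambda y$ are automatically distinct. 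Hence any $S$ containing a combinatorial line contains a nontrivial line, and applying density Hales--Jewett to $S$ with parameters $p$ and $\eps$ proves Theorem~\ref{thm:szemeredi} with $n_0(p,\eps)=n_{\mathrm{DHJ}}(p,\eps)$.

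Should one prefer an argument that does not invoke density Hales--Jewett, the alternative I would carry out is a density-increment proof in the finite-field model, which is the route of~\cite{Tao:2012}. Writing $\1_S=\eps+f$ with $f$ the balanced function, the generalized von Neumann inequality bounds the deviation of the progression count $\E_{x,y\in\F_p^n}\big[\prod_{\lambda\in\F_p}\1_S(x+\lambda y)\big]$ from its ``random'' value $\eps^p$ by a constant (depending on $p$) times a power of $\|f\|_{U^{p-1}(\F_p^n)}$. If this Gowers norm is small the count is at least, say, $\eps^p/2$, which is bounded away from the $p^{-n}$-order contribution of trivial ($y=0$) progressions, so $S$ genuinely contains a nontrivial line. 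Otherwise the inverse theorem for the $U^{p-1}$-norm over $\F_p^n$ (here $p-1<p$, so only classical polynomial phases of degree at most $p-2$ are involved) produces such a phase correlating with $f$; passing to an atom of the bounded-complexity factor it generates, $S$ has density at least $\eps+\Omega_\eps(1)$ there, and iterating $O_\eps(1)$ times — each step costing only a bounded drop in effective dimension, which is why the hypothesis $n\ge n_0(p,\eps)$ is needed — yields the conclusion.

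I expect the main obstacle to be simply that the theorem is genuinely deep: both routes bottom out at a hard input — density Hales--Jewett in the first, and the $U^{p-1}$ inverse theorem over $\F_p^n$ (with its attendant theory of polynomial phases) in the second — while everything else (the combinatorial-to-affine translation, the generalized von Neumann inequality, the pigeonhole/iteration) is entirely standard. Since the application in this paper needs only the existence of $n_0(p,\eps)$ and no quantitative bound on it, the cleanest course is to cite density Hales--Jewett and record the one-line identification above.
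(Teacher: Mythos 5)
Your proof is correct. Note that the paper does not prove this statement at all: it imports it verbatim as Theorem~1.5.4 of the cited reference \cite{Tao:2012}, so there is no in-paper argument to compare against. Of your two routes, the second (density increment via the generalized von Neumann inequality for the $U^{p-1}$ norm and the inverse theorem over $\F_p^n$) is essentially the proof in the cited source, and your sketch of it is accurate, including the point that the trivial $y=0$ contribution is of order $p^{-n}$ and is beaten once $n\ge n_0(p,\eps)$. Your first route, reduction to the density Hales--Jewett theorem, is a genuinely different and equally valid derivation: the identification of a combinatorial line with wildcard set $W\ne\emptyset$ with the affine line $\ell_{x,\1_W}$ is exactly right, and since the paper only needs the qualitative existence of $n_0(p,\eps)$, citing DHJ plus this one-line translation is a perfectly adequate (if heavier-hammer) substitute. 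What the Fourier-analytic route buys, besides being the one actually cited, is better alignment with the quantitative, finite-field-model machinery used elsewhere in Section~\ref{sec:lines}; what the DHJ route buys is that it avoids the inverse-theorem technology entirely. Either way, the deduction of Corollary~\ref{cor:varnavides} and the rest of the section is unaffected.
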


\begin{corollary}
\label{cor:varnavides}
For every odd prime~$p$ and any~$\eps\in (0,1]$, there exists a positive integer $n_1(p,\eps)$ and a $c(p,\eps)\in(0,1]$ such that the following holds.
Let $n \geq n_1(p,\eps)$ and let~$S \subseteq \F_p^n$ be a set of size~$|S|  \geq \eps p^n$.
Then, $S$ contains at least $c(p,\eps)p^{2n}$ nontrivial lines, that is,
$$\Pr_{x\in \F_p^n, y\in \F_p^n\smallsetminus\{0\}}\left[\{(x+\lambda y)_{\lambda=0}^{p-1}\}\subset S\right]\ge c(p,\epsilon).$$
\end{corollary}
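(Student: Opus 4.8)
The plan is to derive Corollary~\ref{cor:varnavides} from Szemer\'edi's theorem (Theorem~\ref{thm:szemeredi}) by the standard averaging/sampling trick used to turn a ``one pattern exists'' statement into a ``positive density of patterns exists'' statement. The key point is that lines in $\F_p^n$ are preserved under affine maps, so we can pull back the statement of Theorem~\ref{thm:szemeredi} from a lower-dimensional cube into $\F_p^n$ along random affine coordinate projections.

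Concretely, fix $\eps \in (0,1]$, let $m = n_0(p,\eps/2)$ be the dimension supplied by Theorem~\ref{thm:szemeredi} for density $\eps/2$, and set $n_1(p,\eps) = m$. Given $S\subseteq \F_p^n$ with $|S|\ge \eps p^n$ and any $n\ge m$, consider a random affine map $\phi:\F_p^m \to \F_p^n$ of the form $\phi(u) = a + \sum_{j=1}^m u_j v_j$, where $a\in\F_p^n$ is uniform and $v_1,\dots,v_m\in\F_p^n$ are uniform; equivalently, sample a uniform random $m$-dimensional affine subcube of $\F_p^n$ together with a coordinate frame on it. The preimage $\phi^{-1}(S)\subseteq\F_p^m$ has expected density exactly $|S|/p^n \ge \eps$ (since $\phi(u)$ is uniform in $\F_p^n$ for each fixed $u$), so by Markov's inequality, with probability $\gtrsim \eps$ over $\phi$ the set $\phi^{-1}(S)$ has density at least $\eps/2$. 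When that happens, Theorem~\ref{thm:szemeredi} guarantees that $\phi^{-1}(S)$ contains a nontrivial line in $\F_p^m$; its image under $\phi$ is then a nontrivial line in $\F_p^n$ contained in $S$, with direction $\sum_j y_j v_j$ where $y\ne 0$ is the direction of the $\F_p^m$-line (this is nonzero with probability $1$ over the $v_j$, or one handles the degenerate event separately). Thus a random affine copy of $\F_p^m$ inside $\F_p^n$ contains a nontrivial line of $S$ with probability $\gtrsim_{p,\eps} 1$, and since each such affine copy contains at most $p^{2m}$ lines (choose origin and direction within the copy), a counting/double-averaging argument shows that the fraction of pairs $(x,y)\in \F_p^n\times(\F_p^n\setminus\{0\})$ with $\ell_{x,y}\subseteq S$ is at least some $c(p,\eps) = \Omega_{p,\eps}(p^{-2m})>0$, which is exactly the claimed bound (noting $p^{2n}$ versus $p^{2n}-p^n$ pairs differ by a harmless constant factor).

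To make the last counting step precise, one can set up a bipartite incidence count: let $N$ be the number of nontrivial lines of $S$ and let $L$ be the number of (affine copy $\phi$, nontrivial $S$-line $\ell\subseteq \operatorname{im}\phi$) pairs. On one hand $L \ge \gtrsim_{p,\eps} p^{(m+1)n}$ (roughly a constant fraction of all affine copies each contribute at least one), using that the number of affine maps $\phi$ is $p^{(m+1)n}$. On the other hand $L \le N \cdot (\text{number of }\phi\text{ whose image contains a fixed line }\ell)$, and the number of affine $\phi:\F_p^m\to\F_p^n$ whose image contains a fixed line is at most $p^{mn}\cdot p^{m}$ (a crude bound: the image is determined by $a,v_1,\dots,v_m$, and containing $\ell$ forces $a$ and one of the $v_j$'s into a line's worth of choices). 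Combining gives $N \gtrsim_{p,\eps} p^{2n}$ after absorbing the $p$- and $m$-dependent constants into $c(p,\eps)$.

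I do not expect a genuine obstacle here — this is the textbook ``Varnavides averaging'' argument cited in the excerpt — but the one place to be careful is bookkeeping the degenerate events: a random affine map need not be injective, and the pulled-back line could in principle be trivial or collapse under $\phi$. These events have probability bounded away from $1$ (indeed $o(1)$ as $n\to\infty$, or simply $\le 1 - c$ for a fixed constant) and can be discarded at the cost of shrinking $c(p,\eps)$, so they do not affect the conclusion. The only quantitative cost is that $c(p,\eps)$ depends on $n_0(p,\eps/2)$ and hence is (necessarily, given the Szemer\'edi dependence) ineffective in $\eps$, which is fine for the intended application in Corollary~\ref{cor:lines_ldcs}.
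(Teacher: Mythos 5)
Your high-level plan---pull back Theorem~\ref{thm:szemeredi} along random affine copies of $\F_p^m$ and double count---is the standard Varnavides argument and is indeed what the paper has in mind (it cites this corollary to~\cite{Tao:2006} rather than proving it). However, the crucial counting estimate at the end is wrong, and as stated the bounds you give do not yield the conclusion. You lower-bound $L \gtrsim_{p,\eps} p^{(m+1)n}$ and upper-bound $L \le N\cdot X$, where $X$ is the number of affine $\phi$ whose image contains a fixed nontrivial line $\ell_{x,y}$, and you assert $X \le p^{mn}\cdot p^{m}$. Plugging these in gives only $N \gtrsim p^{(m+1)n - mn - m} = p^{n-m}$, which falls short of the required $\Omega_{p,\eps}(p^{2n})$ by a factor of about $p^{n+m}$---this is not something that can be ``absorbed into $c(p,\eps)$.'' The bound $p^{mn+m}$ is essentially the count of $\phi$ whose image contains the single point $x$; it ignores the much stronger constraint that the image also contain the direction $y\ne 0$, i.e.\ $y\in\operatorname{span}(v_1,\dots,v_m)$. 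For i.i.d.\ uniform $v_j$, $\Pr[y\in\operatorname{span}(v_j)] \le \sum_{c\in\F_p^m\setminus\{0\}}\Pr[\textstyle\sum_j c_j v_j = y] = (p^m-1)p^{-n}$, so the number of admissible $(v_1,\dots,v_m)$ is at most $(p^m-1)p^{(m-1)n}$; and then $a$ must lie in $x+\operatorname{span}(v_j)$, which has at most $p^m$ elements. The correct bound is therefore $X \le p^{2m}p^{(m-1)n}$, which gives $N \ge L/X \gtrsim p^{2n-2m}$, and since $m=n_0(p,\eps/2)$ depends only on $p,\eps$ this does give $N\ge c(p,\eps)p^{2n}$.

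A secondary but genuine issue is your treatment of degenerate $\phi$. With $n_1(p,\eps)=m$ and $n=m$, the probability that the linear part of $\phi$ has nontrivial kernel is a $p$-dependent constant (around $0.44$ for $p=3$), not $o(1)$ and not necessarily below the $\eps/2$ you win from Markov when $\eps$ is small, so a union bound can fail. Worse, this is not an ``independent'' nuisance you can just discard: if $\ker\phi\ne 0$ then $\phi^{-1}(S)$ is $\ker\phi$-periodic, hence \emph{automatically} contains lines in kernel directions---and these are exactly the lines whose image under $\phi$ collapses to a point, so Theorem~\ref{thm:szemeredi} may hand you precisely a useless line. The clean fix is to sample $\phi$ uniformly from \emph{injective} affine maps $\F_p^m\to\F_p^n$; the density argument is unaffected (since $a$ remains uniform and independent of the linear part, $\phi(u)$ is still uniform for every fixed $u$), and every nontrivial line in $\phi^{-1}(S)$ then maps to a nontrivial line in $S$. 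Alternatively, take $n_1(p,\eps)=m+\lceil\log_p(4/\eps)\rceil$ so that $\Pr[\phi\text{ not injective}]\le 2p^{m-n}\le\eps/4$, and a union bound with the Markov step leaves probability $\ge\eps/4$ that $\phi$ is injective with $\phi^{-1}(S)$ of density $\ge\eps/2$.
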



\begin{proof}[Proof of Corollary~\ref{cor:lines_ldcs}]
Abusing notation, we identify functions $f:\F_p^n\to\pmset{}$ with vectors in~$\pmset{\F_p^n}$.
Let $\phi:\pmset{} \to\bset{}$ be the  map  $\phi(\alpha) = (\alpha+1)/2$.
For a function $f:\F_p^n\to\pmset{}$, let $\phi(f):\F_p^n\to\bset{}$ be the function $\phi(f)(x) = \phi(f(x))$ and for $f:\F_p^n\to\bset{}$, define $\phi^{-1}(f):\F_p^n\to\pmset{}$ analogously.

For every $x\in \F_p^n$, let~$F_{x}:\pmset{\F_p^n}\to\R$ be the degree-$(p-1)$ function
\beq\label{eq:Fxdef}
F_x(f) = \Exp_{y\in\F_p^n\smallsetminus\{0\}}\Bigg[\prod_{\lambda\in\F_p^*}\phi(f)(x + \lambda y)\Bigg].
\eeq
Then, for a set~$B\subseteq\F_p^n$, the value $F_x(\phi^{-1}(1_B))$ equals the fraction of all nontrivial lines~$\ell_{x,y}$ through~$x$ of which~$B$ contains the $p-1$ points $\{x + \lambda y\st \lambda\in \F_p^*\}$.
If~$B$ has size at least~$\eps p^n$, it follows from Corollary~\ref{cor:varnavides} that 
$\Exp_{x\in\F_p^n}[F_x(\phi^{-1}(1_B))] \geq c(p,\eps)$.
Moreover, since the monomials in the expectation of~\eqref{eq:Fxdef} involve disjoint sets of variables and can be expanded as
\beqn
\prod_{\lambda\in\F_p^*}\phi(f)(x + \lambda y)
=
\frac{1}{2^q}\sum_{S\subseteq \F_p^*}\prod_{\lambda\in S}f(x + \lambda y),
\eeqn
it follows that each~$F_x$ is $2(1 - p^{-n})$-smooth.

Let~$\mu$ be the uniform distribution over~$F_x$.
We claim that~$\kappa_\mu(c(p,\eps)) \geq k$, which implies the result by Theorem~\ref{thm:main} since~$\mu$ is supported by degree $(p-1)$-functions.
For every set~$A\subseteq \F_p^n$, let~$B_A\subseteq\F_p^n$ be a maximal set
such that
every nontrivial line through~$A$ contains at most $p-2$ points of~$B_A$,
and let~$f_A = \phi^{-1}(1_{B_A})$.
Let~$z$ be a uniformly distributed random variable over~$\F_p^n$, let  $z_1,\dots,z_k$ be independent copies of~$z$ and let~$A = \{z_1,\dots,z_k\}$.
Then, $F_{z_1},\dots,F_{z_k}$ are independent $\mu$-distributed random functions.
Moreover, in the event that both $|B_A| \geq \eps p^n$ and every nontrivial line through~$A$ meets~$B_A$ in at most~$p-2$ points, we have
\beqn
|(F_{z_i} - \Exp[F_z])(f_A)| 
=
\Exp\big[F_z(\phi^{-1}(1_{B_A}))\big] - F_{z_i}(\phi^{-1}(1_{B_A}))
 \geq c(p,\eps)
\eeqn
for every~$i\in[k]$.
Since this event happens with probability at least~$1/2$, we have
\begin{align*}
\Exp\Big[\Big\|\frac{1}{k}\sum_{i=1}^k\big(F_{z_i} - \Exp[F_{z}]\big)\Big\|_{L_\infty}\Big]
&\geq
\Exp\Big[\Big|\frac{1}{k}\Big(\sum_{i=1}^k\big(F_{z_i} - \Exp[F_{z}]\big)\Big)(f_A)\Big|\Big]
\geq
\frac{c(p,\eps)}{2},
\end{align*}
which gives the claim.
\end{proof}

The proof of the formal version of Corollary~\ref{cor:lines_ldcs_informal} (given below) is similar to that of Corollary~\ref{cor:lines_ldcs}, so we omit it.
In the following, $\PF_p^{n-1}$ is the projective space of dimension $n-1$, which is the space of directions in $\F_p^n$.
The formal version of Corollary~\ref{cor:lines_ldcs_informal} is then as follows.

\begin{corollary}\label{cor:forbidden_directions}
For every odd prime~$p$ and $\eps \in (0,1]$, there exist a positive integer $n_1(p,\eps)$ and a $c = c(p,\eps)\in (0,1/2]$ such that the following holds.
Let $n \geq n_1(p,\eps)$ and $k$ be positive integers.
Suppose that for independent uniformly distributed elements $z_1,\dots,z_k\in\PF_p^{n-1}$, with probability at least~$1/2$, there exists a set $B\subset \F_p^n$  of size $|B|\ge \eps p^n$ which does not contain any lines with direction in $\{z_1,\dots,z_k\}$.
Then, there exists a $(p, c,c)$-LDC sending $\bset{l}$ to $\bset{p^n}$, where $l=\Omega(c^2 k/\log(1/c))$.
\end{corollary}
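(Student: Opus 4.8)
The plan is to mimic the proof of Corollary~\ref{cor:lines_ldcs}, replacing the role of "a line with a prescribed origin" by "a line with a prescribed direction". Fix an odd prime~$p$, a parameter $\eps \in (0,1]$, and set $c = c(p,\eps)/2$ where $c(p,\eps)$ is the constant from the Varnavides-type Corollary~\ref{cor:varnavides}. Identify functions $f:\F_p^n\to\pmset{}$ with vectors in~$\pmset{\F_p^n}$, and use the affine bijection $\phi(\alpha)=(\alpha+1)/2$ between $\pmset{}$ and $\bset{}$ as before. For each \emph{direction} $d\in\PF_p^{n-1}$ (represented by a nonzero vector, the choice not mattering since we will run over all of $\F_p$), I would define the degree-$(p-1)$ function $G_d:\pmset{\F_p^n}\to\R$ by
\beqn
G_d(f) = \Exp_{x\in\F_p^n}\Bigg[\prod_{\lambda\in\F_p}\phi(f)(x + \lambda d)\Bigg],
\eeqn
so that for a set $B\subseteq\F_p^n$, the value $G_d(\phi^{-1}(1_B))$ is exactly the fraction of lines in direction~$d$ that are entirely contained in~$B$. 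Expanding each $p$-fold product $\prod_{\lambda}\phi(f)(x+\lambda d) = 2^{-p}\sum_{S\subseteq\F_p}\prod_{\lambda\in S}f(x+\lambda d)$ and averaging over~$x$, the monomials indexed by subsets $S$ of size $r$ contribute Fourier mass on sets of size at most $r$ (with multiplicity from the averaging over $x$), and the $i$th discrete derivative picks up only those monomials in which the variable $f$ evaluated at some point equal to the $i$th coordinate appears; a counting over $\lambda\in\F_p$ and $S\ni$ that index shows $\specnorm{D_i G_d} = O(1/p^n)$, i.e. each $G_d$ is $O(1)$-smooth. Let $\mu$ be the uniform distribution over $\{G_d : d\in\PF_p^{n-1}\}$, which is supported on degree-$(p-1)$ functions; after rescaling it becomes a distribution over $1$-smooth degree-$(p-1)$ functions.

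Next I would show $\kappa_\mu(c) \ge k$ (up to the rescaling of the smoothness parameter, which changes $c$ by the same absolute-constant factor, giving the claimed $(p,c,c)$-LDC via Theorem~\ref{thm:main}). For every finite set $A\subseteq\PF_p^{n-1}$ of directions, fix a maximal set $B_A\subseteq\F_p^n$ containing no line with direction in~$A$, and put $f_A=\phi^{-1}(1_{B_A})$. Let $z_1,\dots,z_k$ be independent uniform elements of $\PF_p^{n-1}$ and $A=\{z_1,\dots,z_k\}$; then $G_{z_1},\dots,G_{z_k}$ are independent $\mu$-distributed. In the hypothesised event (probability $\ge 1/2$) that $|B_A|\ge\eps p^n$ and $B_A$ contains no line with direction in~$A$, Corollary~\ref{cor:varnavides} gives $\Exp_{d}[G_d(f_A)] \ge c(p,\eps)$ (the expectation over a uniform direction of the fraction of monochromatic lines is a lower bound on the total density of lines inside $B_A$, which Varnavides bounds below), while $G_{z_i}(f_A)=0$ for every $i\in[k]$ because $B_A$ contains no line in direction $z_i$. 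Hence on that event $\bigl|\frac1k\sum_i (G_{z_i}-\Exp[G_z])(f_A)\bigr| \ge c(p,\eps)$, and evaluating the $L_\infty$ norm at the single point $f_A$ shows
\beqn
\Exp\Big[\Big\|\tfrac1k\textstyle\sum_{i=1}^k\big(G_{z_i}-\Exp[G_z]\big)\Big\|_{L_\infty}\Big] \ge \tfrac12 c(p,\eps) = c,
\eeqn
so $\mu$ is a $(k,c)$-outlaw. Theorem~\ref{thm:main} then yields a $(p,\Omega(c),\Omega(c))$-LDC from $\bset{l}$ to $\bset{p^n}$ with $l=\Omega(c^2 k/\log(1/c))$, and absorbing constants gives the stated parameters; the hypothesis $n\ge n_1(p,\eps)$ is exactly what Corollary~\ref{cor:varnavides} needs.

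The routine but slightly fiddly part is the smoothness computation: one must check that expanding the $p$-fold product and averaging over the origin $x$ does not blow up the spectral norm of the discrete derivatives, i.e. that $\specnorm{D_i G_d}\le O(1/p^n)$ uniformly in $i$ and $d$. This works because averaging over $x\in\F_p^n$ of a monomial $\prod_{\lambda\in S}f(x+\lambda d)$ produces a $\{0,\pm1\}$-combination of characters of total $\ell_1$-mass $1$ (it is itself an average of characters, hence has spectral norm $1$), and the fraction of these contributing to $D_i$ is $O(|S|/p^n)=O(1/p^n)$; summing the $2^p=O(1)$ monomials and the $p=O(1)$ shifts keeps everything $O(1/p^n)$. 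The only genuine subtlety, and the main obstacle to be careful about, is making sure that the version of Varnavides' corollary invoked (Corollary~\ref{cor:varnavides}, which counts \emph{all} nontrivial lines) correctly lower-bounds the direction-averaged quantity $\Exp_d[G_d(f_A)]$ — i.e. that counting lines grouped by direction versus by origin gives the same density — which it does since both equal the probability that a uniformly random pair $(x,d)$ with $d\neq 0$ yields a line inside $B_A$. Beyond that, the proof is structurally identical to that of Corollary~\ref{cor:lines_ldcs}, which is why the paper says it is omitted.
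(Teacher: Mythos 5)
Your proof is correct and is essentially the adaptation of the proof of Corollary~\ref{cor:lines_ldcs} that the paper intends (the paper omits it, stating it is similar). One small slip: you repeatedly call $G_d$ a \emph{degree-$(p-1)$} function, but since the product $\prod_{\lambda\in\F_p}\phi(f)(x+\lambda d)$ has $p$ factors, $G_d$ has degree $p$, not $p-1$ (this is precisely why the corollary yields a $p$-query LDC rather than the $(p-1)$-query LDC of Corollary~\ref{cor:lines_ldcs}, which only multiplies over $\F_p^*$). Your later application of Theorem~\ref{thm:main} with $q=p$ is consistent, so the typo does not propagate. The smoothness computation is sound: with all Fourier coefficients of $G_d$ nonnegative, one gets $\specnorm{D_iG_d} = p\cdot 2^{p-1}/(p^n\cdot 2^p) = p/(2p^n)$, so $G_d$ is $(p/2)$-smooth; and the translation between origin-averaged and direction-averaged line densities needed to invoke Corollary~\ref{cor:varnavides} holds exactly as you say, since both equal $\Pr_{x,y\neq 0}[\ell_{x,y}\subseteq B_A]$.
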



%
\paragraph{Feasible parameters for Corollary~\ref{cor:lines_ldcs}}

Proving lower bounds on~$k$ for which the assumption of Corollary~\ref{cor:lines_ldcs} holds true thus allows one to infer the existence of $(p-1)$-query LDCs with rate~$\Omega(k/N)$ for $N = p^n$, provided~$p$ and~$\eps$ are constant with respect to~$n$.
We establish the following bounds, which imply the (well-known) existence of $(p-1)$-query LDCs with message length $k = \Omega((\log N)^{p-2})$.

\begin{theorem}\label{thm:RM2}
For every odd prime~$p$ there exists an~$\eps(p)\in (0,1]$ such that the following holds.
For every set $A\subseteq \F_p^n$ of size $|A| \leq {n+p-3\choose p-2}-1$, there exists a set $B\subseteq \F_p^n$ of size $\eps(p) p^n$ such that every line through~$A$ contains at most~$p-2$ points of~$B$.
\end{theorem}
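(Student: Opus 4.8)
The plan is to construct the set $B$ explicitly as the zero set (or a level set) of a low-degree polynomial, exploiting the fact that a nonzero polynomial of degree $d = p-2$ restricted to a line is a univariate polynomial of degree at most $p-2$, hence has at most $p-2$ roots. Concretely, I would let $V$ be the space of polynomials in $\F_p[x_1,\dots,x_n]$ of degree at most $p-2$ (in the sense of total degree, or more precisely using monomials with each variable appearing to bounded degree so that they are linearly independent as functions on $\F_p^n$); this space has dimension exactly $\binom{n+p-3}{p-2}$ counting monomials of total degree $\le p-2$ in $n$ variables — wait, more carefully $\binom{n+p-2}{p-2}$, so I would instead work with the subspace vanishing appropriately, or simply note that the dimension is large enough that the linear conditions "$P(a)=0$ for all $a\in A$" with $|A|\le \binom{n+p-3}{p-2}-1$ leave a nonzero solution $P$. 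Then set $B = \{x\in\F_p^n : P(x)\neq 0\}$ — no, rather I want $B$ to be the complement-type set whose intersection with any line is small; since $P$ restricted to any nontrivial line $\ell_{x,y}$ is a univariate polynomial of degree $\le p-2$ that is not identically zero (this is the point requiring care, handled below), it has at most $p-2$ roots on that line, so the set $\{x : P(x)=0\}$ meets every line in at most $p-2$ points, and I take $B = \{x : P(x)=0\}$.

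The steps in order: (1) Fix the linear space $V$ of functions $\F_p^n\to\F_p$ spanned by monomials $\prod x_i^{e_i}$ with $\sum e_i \le p-2$ and each $e_i \le p-1$; verify these are linearly independent as functions and count $\dim V = \binom{n+p-2}{p-2}$. Since $|A| < \binom{n+p-3}{p-2} \le \dim V$, there is a nonzero $P\in V$ vanishing on all of $A$; but I must be slightly more careful to ensure $P$ does not vanish identically on $\F_p^n$ as a *function*, which is automatic since $V$ consists of functions with distinct monomial expansions of degree $< p$ in each variable, so a nonzero element of $V$ is a nonzero function. (2) Show that for every nontrivial line $\ell_{x,y}$, the restriction $\lambda \mapsto P(x+\lambda y)$ is a nonzero univariate polynomial of degree $\le p-2$: the degree bound is clear from total degree $\le p-2$; nonvanishing needs an argument — if $P$ vanished on all of $\ell_{x,y}$ that is fine, I actually only need that $P$ restricted to the line has at most $p-2$ roots *unless it is identically zero on the line*, in which case that line contributes $p$ points to the zero set. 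So I need to be more careful here. (3) Resolve the identically-zero-on-a-line issue by instead taking $B$ to be a random translate or a carefully chosen level set, OR by observing that the union of lines entirely contained in $\{P=0\}$ is itself a low-dimensional-looking object and can be avoided by a density argument.

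The main obstacle, and the place I expect to spend real effort, is exactly point (3): a degree-$(p-2)$ polynomial can vanish identically on many lines (e.g. $P$ could be a product of linear forms, whose zero set is a union of hyperplanes, each containing tons of full lines). To fix this I would instead use a *generic* or *perturbed* construction: take $B_0 = \{x : P(x) = 0\}$ and note $|B_0| \ge$ some constant fraction of $p^n$ by a standard bound (a nonzero degree-$d$ polynomial over $\F_p$ with $d \le p-2 < p$ has at most $(1 - 1/p)p^n \cdot (\text{something})$ — actually Schwartz–Zippel gives $|\{P\ne 0\}| \le \frac{d}{p}p^n$ so $|B_0| \ge (1 - \frac{p-2}{p})p^n = \frac{2}{p}p^n$, giving $\eps(p) \ge 2/p$); then handle the "bad lines" (those lying entirely in $B_0$) by removing from $B_0$ a single point from each bad line — but there could be too many bad lines. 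The cleaner route, which I would ultimately take, is to choose $B$ as a generic fiber: pick a uniformly random $c\in\F_p$ and set $B = \{x : P(x) = c\}$; for a fixed line $\ell_{x,y}$ on which $P$ restricts to a nonconstant polynomial, $\{P = c\}\cap \ell$ has $\le p-2$ points for every $c$; for a line on which $P$ is *constant* equal to some value $c_\ell$, it is bad only for the single choice $c = c_\ell$. Averaging over $c$, the expected number of bad lines is (number of constant-restriction lines)$/p$, and one shows this is a negligible fraction, then removes one point per remaining bad line while keeping $|B| \ge \eps(p)p^n$. Alternatively — and this is probably what the authors do — use the translation trick: replace the "every line" condition by noting that it suffices to forbid lines through $A$ only, and $A$ is small, so the count of lines to worry about is at most $|A|\cdot p^{n}/(\text{something})$, small enough that a union bound over a random translate of $B_0$ kills all the bad ones. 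I would write up whichever of these makes the constant $\eps(p)$ cleanest, flagging that $\eps(p) = \Theta(1/p)$ suffices.
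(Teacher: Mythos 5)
Your proposal has the right raw ingredients --- interpolate a nonzero polynomial $P$ of degree $p-2$ vanishing on $A$, and use the fact that a nonzero univariate polynomial of degree at most $p-2$ has at most $p-2$ roots --- and your ``generic fiber'' branch ($B$ a level set $\{P=c\}$ with $c\neq 0$) is essentially the paper's construction: the paper takes a \emph{homogeneous} interpolant $f$ of degree $p-2$ (Lemma~\ref{lem:hom-interpolation}, which is why the bound on $|A|$ is ${n+p-3\choose p-2}-1$), uses the DeMillo--Lipton--Schwartz--Zippel bound (Lemma~\ref{lem:DMLSZ}) to find a value $a\in\F_p^*$ whose fiber $B=f^{-1}(a)$ has size at least $\eps(p)p^n$, and then shows via Lemmas~\ref{lem:zerodirection} and~\ref{lem:zero-one} that every line through $Z(f)\supseteq A$ meets $B$ in at most $p-2$ points. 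As written, however, your argument has two genuine gaps. First, your size bound for $B$ is backwards: Schwartz--Zippel upper-bounds the \emph{zero} set, $|Z(P)|\le \tfrac{d}{p}p^n$, so it gives no lower bound on $|Z(P)|$ whatsoever (and none exists in general: $P=x_1^2+x_2^2$ over $\F_7$ has degree $2\le p-2$ and zero set of density $p^{-2}$). The correct use is the opposite one: the upper bound on $|Z(P)|$ shows the nonzero locus has density at least $1/p$, and pigeonholing over the $p-1$ nonzero values produces a single fiber $P^{-1}(a)$, $a\neq 0$, of density at least $1/(p(p-1))$.

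Second, and more importantly, you never actually resolve the ``bad lines'' issue; you propose averaging over $c$ and deleting a point from each bad line while conceding that you cannot control how many bad lines there are. The observation that dissolves the problem is that the theorem only concerns lines \emph{through $A$}: if $\ell_{x,y}$ passes through some $v\in A$, then $g(\lambda)=P(x+\lambda y)$ satisfies $g(\lambda_0)=P(v)=0$ for some $\lambda_0$, so for $a\neq 0$ the polynomial $g-a$ is never the zero polynomial; since it has degree at most $p-2$, it has at most $p-2$ roots, i.e.\ $|\ell\cap P^{-1}(a)|\le p-2$ holds deterministically, with no randomization over $c$, no point deletion, and no count of constant-restriction lines. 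With that observation your affine-interpolation route closes up (and is even slightly more elementary than the paper's, which reaches the same conclusion through homogeneity: constancy of the restriction forces the leading coefficient $f(y)$ to vanish, and Lemma~\ref{lem:zero-one} then excludes such lines). Without it, and with the reversed Schwartz--Zippel step, the proposal does not yet constitute a proof.
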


The proof uses some basic properties of polynomials over finite fields.
For an $n$-variate polynomial $f\in \F_p[x_1,\dots,x_n]$ denote $Z(f) = \{x\in\F_p^n\st f(x) = 0\}$.
The starting point of the proof is the following standard result (see for example~\cite{Tao:2014}), showing that small sets can be `captured' by zero-sets of nonzero, homogeneous polynomials of low degree.

\begin{lemma}[Homogeneous Interpolation]\label{lem:hom-interpolation}
For every $A\subseteq \F_p^n$ of size $|A| \leq {n+d-1\choose d} - 1$,
there exists a nonzero homogeneous polynomial $f\in\F_p[x_1,\dots,x_n]$ of degree~$d$ such that $A\subseteq Z(f)$.
\end{lemma}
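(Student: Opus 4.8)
The plan is to prove the lemma by a dimension count --- essentially a linear-algebra pigeonhole argument. First I would introduce the $\F_p$-vector space $V$ of \emph{formal} homogeneous polynomials of degree~$d$ in $\F_p[x_1,\dots,x_n]$, i.e.\ the $\F_p$-span of the monomials $x_1^{e_1}\cdots x_n^{e_n}$ with $e_1+\dots+e_n=d$ and each $e_i\ge 0$. A standard stars-and-bars count shows that the number of such monomials, and hence $\dim_{\F_p} V$, equals exactly $\binom{n+d-1}{d}$. It is important here to regard elements of $V$ as formal polynomials rather than as the functions $\F_p^n\to\F_p$ they induce: this keeps the dimension count exact and independent of~$p$, since the only identifications of the form $x_i^p=x_i$ relate monomials of \emph{different} total degrees and so never occur within the single degree~$d$.

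Next I would consider the evaluation map $\mathrm{ev}\colon V\to\F_p^{A}$ given by $\mathrm{ev}(f)=\bigl(f(a)\bigr)_{a\in A}$, which is $\F_p$-linear. Its domain has dimension $\binom{n+d-1}{d}$, while its codomain $\F_p^{A}$ has dimension $|A|\le\binom{n+d-1}{d}-1$. By the rank--nullity theorem, $\dim\ker(\mathrm{ev})\ge\binom{n+d-1}{d}-|A|\ge 1$, so $\ker(\mathrm{ev})$ contains a nonzero polynomial $f$. By construction such an $f$ is a nonzero homogeneous polynomial of degree~$d$ with $f(a)=0$ for every $a\in A$, i.e.\ $A\subseteq Z(f)$, which is exactly the claim. (Phrased dually: the $|A|$ linear functionals $f\mapsto f(a)$ on the $\binom{n+d-1}{d}$-dimensional space $V$ cannot be linearly independent, hence have a common nonzero root in $V$.)

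There is no genuine obstacle in this argument; the only points deserving a sentence of care are the formal-versus-functional distinction noted above --- which is what makes the dimension $\binom{n+d-1}{d}$ correct --- and the harmless observation that if $0\in A$ then $f(0)=0$ holds automatically for every $f\in V$ (as $d\ge 1$), so that constraint is vacuous and can only help. Thus the bound $|A|\le\binom{n+d-1}{d}-1$ is precisely what the counting argument needs, and no stronger input (such as properties of $\F_p$ or structure of $A$) is required.
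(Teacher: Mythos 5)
Your proof is correct: the stars-and-bars count of degree-$d$ monomials gives $\dim V = \binom{n+d-1}{d}$, and rank--nullity applied to the evaluation map yields a nonzero formal polynomial in the kernel, which is exactly what the lemma asserts. The paper itself does not prove this lemma but cites it as a standard interpolation result, and your dimension-counting argument is precisely that standard proof, with the formal-versus-functional distinction correctly handled.
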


The next two lemmas show that if $f$ is nonzero, homogeneous and degree~$d$, and if $a\in\F_p^*$ is such that $f^{-1}(a)$ is nonempty, then  lines through~$Z(f)$ meet~$f^{-1}(a)$ in at most~$d$ points.

\begin{lemma}\label{lem:zerodirection}
Let $f\in\F_p[x_1,\dots,x_n]$ be a nonzero homogeneous polynomial of degree~$d$.
Let $a\in\F_p^*$ be such that the set $f^{-1}(a)$ is nonempty.
Then, every line that meets~$f^{-1}(a)$ in~$d+1$ points must have direction in~$Z(f)$.
\end{lemma}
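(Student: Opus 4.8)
\textbf{Proof plan for Lemma~\ref{lem:zerodirection}.}
The plan is to restrict the polynomial~$f$ to a line and use the fundamental fact that a nonzero univariate polynomial of degree at most~$d$ over~$\F_p$ has at most~$d$ roots. Concretely, let $\ell = (u + \lambda v)_{\lambda\in\F_p}$ be a line (with $v\ne 0$) that meets $f^{-1}(a)$ in at least $d+1$ points, and consider the univariate polynomial $g(\lambda) = f(u + \lambda v) \in \F_p[\lambda]$. Since $f$ has degree~$d$, the polynomial $g$ has degree at most~$d$. First I would observe that the points of $\ell$ lying in $f^{-1}(a)$ correspond exactly to the roots of $g(\lambda) - a$, which also has degree at most~$d$; by hypothesis there are at least $d+1$ such roots, so $g(\lambda) - a$ is the zero polynomial, i.e.\ $g(\lambda) = a$ identically on~$\F_p$. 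However, I need a little care here: having $d+1$ roots forces a degree-$\le d$ polynomial to vanish identically \emph{as a polynomial}, which is what we want (not merely that it vanishes as a function on $\F_p$), and indeed counting distinct roots of a polynomial of formal degree $\le d$ gives this.

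Next, having established $g \equiv a$ as a polynomial in $\F_p[\lambda]$, I would extract the direction. Write $f = \sum_{i=0}^{d} f_i(u,v)\lambda^i$ where $f_i(u,v)$ collects the coefficient of $\lambda^i$; since $f$ is homogeneous of degree~$d$, the top coefficient is precisely $f_d = f(v)$ (the coefficient of $\lambda^d$ in $f(u + \lambda v)$ equals $f(v)$ because only the degree-$d$ part of each monomial of $f$ contributes the highest power of $\lambda$, and evaluating the degree-$d$ form at $\lambda v$ gives $\lambda^d f(v)$). Since $g \equiv a$ is constant, all coefficients $f_i(u,v)$ for $i \geq 1$ must vanish; in particular $f(v) = f_d(u,v) = 0$, so the direction $v$ lies in $Z(f)$, which is the claim.

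The main obstacle — really the only subtle point — is the identification of the leading coefficient of $f(u+\lambda v)$ in~$\lambda$ with $f(v)$, which genuinely uses homogeneity of~$f$; without homogeneity the leading term could be lower degree or the coefficient could mix $u$ and $v$. I would make this precise by writing $f$ as a sum of monomials of degree exactly~$d$, expanding $(u_{j_1} + \lambda v_{j_1})\cdots(u_{j_d} + \lambda v_{j_d})$ for each monomial, and noting the $\lambda^d$ coefficient is $\prod v_{j_t}$, so summing over monomials the $\lambda^d$ coefficient of $f(u+\lambda v)$ is exactly $f(v)$. Everything else is the elementary root-counting argument, and the nonemptiness of $f^{-1}(a)$ is only needed to guarantee that a line meeting it in $d+1$ points actually exists for the statement to be non-vacuous (it is not used in the deduction itself).
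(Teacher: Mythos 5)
Your proposal is correct and follows essentially the same argument as the paper: restrict $f$ to the line, note that a degree-$\le d$ polynomial taking the value $a$ at $d+1$ points must be identically $a$, and then use homogeneity to identify the vanishing $\lambda^d$-coefficient with $f(v)$. Your added verification that the leading coefficient of $f(u+\lambda v)$ equals $f(v)$ is a correct expansion of a step the paper merely asserts.
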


\begin{proof}
The univariate polynomial~$g(\lambda) = f(x + \lambda y)$ formed by the restriction of~$f$ to  a line~$\ell_{x,y}$ has degree at most~$d$.
By the Factor Theorem, such a polynomial must be the constant polynomial $g(\lambda) = a$ to assume the value~$a$ for~$d+1$ values of~$\lambda$.
Since~$f$ is homogeneous, the coefficient of~$\lambda^d$, which must be zero, equals $f(y)$, giving the result.
\end{proof}

The following lemma is essentially contained in~\cite{BrietRao:2016}.

\begin{lemma}[Bri\"{e}t--Rao]\label{lem:zero-one}
Let $f\in\F_p[x_1,\dots,x_n]$ be a nonzero homogeneous polynomial of degree~$d$.
Let $a\in\F_p^*$ be such that $f^{-1}(a)$ is nonempty.
Then, there exists no line that intersects~$Z(f)$, meets~$f^{-1}(a)$ in at least~$d$ points and has direction in~$Z(f)$.
\end{lemma}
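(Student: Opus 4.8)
The plan is to argue by contradiction via the restriction of $f$ to the offending line, in the same spirit as the proof of Lemma~\ref{lem:zerodirection} but extracting one extra degree from the hypothesis that the direction lies in $Z(f)$. Suppose some line $\ell_{x,y}$ intersects $Z(f)$, meets $f^{-1}(a)$ in at least $d$ points, and has direction $y\in Z(f)$. First I would dispose of the degenerate cases: if $d=0$ then $f$ is a nonzero constant and $Z(f)=\emptyset$, so no line intersects it; and if $y=0$ the line is the single point $\{x\}$, which cannot lie in both $Z(f)$ and $f^{-1}(a)$ since $a\neq 0$. So I may assume $d\geq 1$ and $y\neq 0$, and then the $p$ points $x+\lambda y$ (for $\lambda\in\F_p$) are pairwise distinct; hence there are at least $d$ distinct $\lambda$ with $f(x+\lambda y)=a$, and there is some $\lambda_0$ with $f(x+\lambda_0 y)=0$.

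Next I would analyze the univariate polynomial $g(\lambda)=f(x+\lambda y)\in\F_p[\lambda]$, which has degree at most $d$. Since $f$ is homogeneous of degree $d$, expanding each degree-$d$ monomial $\prod_i(x_i+\lambda y_i)^{\alpha_i}$ and collecting the top power of $\lambda$ shows that the coefficient of $\lambda^d$ in $g$ is exactly $f(y)$; as $y\in Z(f)$ this coefficient vanishes, so $\deg g\leq d-1$. The polynomial $g-a$ then has degree at most $d-1$ but at least $d$ roots, so over the field $\F_p$ it must be identically zero; that is, $g\equiv a$. But this gives $f(x+\lambda_0 y)=a\neq 0$, contradicting $f(x+\lambda_0 y)=0$. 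Therefore no such line exists.

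I do not expect any genuine obstacle here; the argument is short. The only spots needing a little care are the two degenerate cases above and the routine verification that the leading coefficient of $g$ equals $f(y)$, which one obtains by taking the $\lambda y_i$ term from each factor in each degree-$d$ monomial of $f$.
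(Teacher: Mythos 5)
Your proof is correct and follows essentially the same route as the paper's: restrict $f$ to the offending line, note that homogeneity plus $y\in Z(f)$ kills the degree-$d$ coefficient of $g(\lambda)=f(x+\lambda y)$, and then derive a contradiction from $g-a$ having too many roots for its degree. The only cosmetic differences are that the paper first shifts along the line so that $x\in Z(f)$ (making $g(0)=0$ and the $d$ roots lie in $\F_p^*$) and concludes by observing $g-a$ is a nonzero polynomial of degree $\le d-1$ with $d$ roots, whereas you conclude $g\equiv a$ and contradict the intersection with $Z(f)$ directly; you also spell out the degenerate cases $d=0$ and $y=0$, which the paper leaves implicit.
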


\begin{proof}
For a contradiction, suppose there exists a line $\ell_{x,y}$ through~$Z(f)$ that meets~$f^{-1}(a)$ in~$d$ points and has direction~$y\in Z(f)$.
Observe that for every $\lambda\in\F_p$, the shifted line $\ell_{x + \lambda y, y}$ also meets~$f^{-1}(a)$ in~$d$ points.
Hence, without loss of generality we may assume that the line starts in~$Z(f)$, that is~$x\in Z(f)$.
Let $g(\lambda) = a_0 + a_1\lambda + \cdots + a_d\lambda^d = f(x + \lambda y)\in\F_p[\lambda]$ be the restriction of~$f$ to~$\ell_{x,y}$.
It follows that $a_0 = g(0) = f(x) = 0$ and, since~$f$ is homogeneous, that~$a_d = f(y) = 0$.
Moreover, there exist distinct elements $\lambda_1,\dots,\lambda_d \in \F_p^*$ such that $g(\lambda_i) = f(x+\lambda_i y) = a$ for every $i\in[d]$. Then $g(\lambda)-a$ is a degree $d-1$ polynomial with $d$ distinct roots. But it cannot be the zero polynomial since it takes value $-a$ when $\lambda=0$.
\end{proof}

%

The final ingredient for the proof of Theorem~\ref{thm:RM2} is the DeMillo--Lipton--Schwartz--Zippel Lemma, as it appears in~\cite{CohenTal:2014}.

\begin{lemma}[DeMillo--Lipton--Schwartz--Zippel]\label{lem:DMLSZ}
Let~$f\in\F_p[x_1,\dots,x_n]$ be a nonzero polynomial of degree~$d$ and denote $r = |\F_p|$.
Then,
\beqn
|Z(f)| \leq \Big(1 - \frac{1}{r^{d/(r-1)}}\Big)r^n.
\eeqn
\end{lemma}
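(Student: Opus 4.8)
The plan is to deduce the lemma from a sharper, \emph{dimension-free} estimate, which is the statement that actually inducts well. Since $Z(f)$ depends only on $f$ as a function on $\F_p^n$, I may assume that $f$ is \emph{reduced}, i.e.\ that every variable occurs with individual degree at most $p-1$: passing from $f$ to its reduction modulo $(x_1^p-x_1,\dots,x_n^p-x_n)$ changes neither $Z(f)$ nor, since it can only lower $\deg f$ and the claimed bound is increasing in $d$, the validity of the conclusion. Write $r=|\F_p|=p$ and express the degree uniquely as $d=a(r-1)+b$ with integers $a\ge 0$ and $0\le b\le r-2$. I will prove the bound
\[
\Pr_{x\in\F_p^n}\big[f(x)\ne 0\big]\ \ge\ \Big(1-\tfrac{b}{r}\Big)\,r^{-a},
\]
which is the same as $|Z(f)|\le r^n-(r-b)r^{\,n-a-1}$. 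This implies the lemma, since $\bigl(1-\tfrac br\bigr)r^{-a}\ge r^{-d/(r-1)}$: taking logarithms, this reduces to $1-\tfrac br\ge r^{-b/(r-1)}$, and the function $t\mapsto\log(1-t/r)+\tfrac{t}{r-1}\log r$ is concave on $[0,r-1]$ and vanishes at both endpoints $t=0$ and $t=r-1$, hence is $\ge 0$ on $[0,r-1]$, in particular at $t=b$.

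I prove the displayed bound by induction on $n$, the case $n=0$ (a nonzero constant) being trivial. For $n\ge 1$, write $f=\sum_{j=0}^{e}f_j(x_1,\dots,x_{n-1})\,x_n^{\,j}$ with $f_e\ne 0$; since $f$ is reduced, $e\le r-1$, and since $f_e x_n^{\,e}$ is a term of $f$, $\deg f_e\le d-e$. For every point $v\in\F_p^{n-1}$ with $f_e(v)\ne 0$, the univariate polynomial $x_n\mapsto f(v,x_n)$ is nonzero of degree exactly $e$, hence has at most $e$ roots, so $\Pr_{x_n}[f(v,x_n)\ne 0]\ge 1-e/r$. Introduce the auxiliary function $\psi(m)=(1-\beta/r)r^{-\alpha}$ for $m=\alpha(r-1)+\beta$ with $0\le\beta\le r-2$; this is a nonincreasing function of the integer $m$ (increasing $m$ by one either raises $\beta$ by one or replaces $(\alpha,\beta)=(\alpha,r-2)$ by $(\alpha+1,0)$, and both strictly decrease the value). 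Applying the induction hypothesis to $f_e$ and then monotonicity of $\psi$ with $\deg f_e\le d-e$ gives $\Pr_v[f_e(v)\ne 0]\ge\psi(\deg f_e)\ge\psi(d-e)$, so
\[
\Pr_{x\in\F_p^n}\big[f(x)\ne 0\big]\ \ge\ \Pr_v\big[f_e(v)\ne 0\big]\cdot\Big(1-\tfrac er\Big)\ \ge\ \psi(d-e)\Big(1-\tfrac er\Big).
\]
It thus remains to prove $\psi(d-e)\,(1-e/r)\ge\psi(d)=(1-b/r)r^{-a}$ for every integer $e$ with $0\le e\le r-1$.

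Write $d-e=a'(r-1)+b'$ with $0\le b'\le r-2$, so the target inequality becomes $(r-b')(r-e)\ge(r-b)\,r^{\,a'-a+1}$. Because $e=(a-a')(r-1)+(b-b')$ lies in $[0,r-1]$ and $b,b'\in[0,r-2]$, the only possibilities are $a'=a$ (then $e=b-b'\ge 0$) and $a'=a-1$ (then $b'\ge b$ and $e=(r-1)-(b'-b)$). In the first case the inequality is $(r-b')\bigl(r-(b-b')\bigr)\ge r(r-b)$, whose left side expands to $r(r-b)+b'(b-b')\ge r(r-b)$ since $b\ge b'\ge 0$. In the second case, with $w=r-b\in\{2,\dots,r\}$ and $u=b'-b\in\{0,\dots,w-2\}$, it becomes $(w-u)(u+1)\ge w$, which holds on the whole range since the left side is a downward parabola in $u$ whose endpoint values at $u=0$ and $u=w-2$ are $w$ and $2(w-1)$, both $\ge w$. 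This closes the induction, and with the first paragraph completes the proof. The one genuine design choice is to run the induction on the dimension-free quantity $\psi$, whose behaviour under $e\mapsto d-e$ exactly matches the factor $1-e/r$ picked up at each step; everything else — the passage from $\psi$ to $r^{-d/(r-1)}$ and the two closing inequalities — is elementary, and the only real bookkeeping is the verification that $e\in[0,r-1]$ leaves just those two cases.
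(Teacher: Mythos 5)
The paper does not actually prove this lemma: it is imported verbatim from the cited reference of Cohen and Tal, so there is no in-paper argument to compare against. Your proof is a complete and correct derivation, and it follows what is essentially the standard route for this statement: induct on the number of variables via the leading coefficient in $x_n$, but carry through the sharper ``dimension-free'' bound $\Pr[f\ne 0]\ge (1-b/r)r^{-a}$ for $d=a(r-1)+b$, which is the form that composes correctly with the factor $1-e/r$ gained at each step. I checked the details: $\deg f_e\le d-e$ is justified because monomials with distinct powers of $x_n$ cannot cancel; the monotonicity of $\psi$ holds (the wrap-around step compares $r^{-\alpha-1}$ with $(2/r)r^{-\alpha}$); the case analysis $a'\in\{a,a-1\}$ is exhaustive given $e\in[0,r-1]$ and $b,b'\in[0,r-2]$; and both closing inequalities, $(r-b')(r-b+b')\ge r(r-b)$ and $(w-u)(1+u)\ge w$, as well as the final comparison $(1-b/r)r^{-a}\ge r^{-d/(r-1)}$ via concavity, are verified correctly.

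One caveat concerns your opening reduction: you pass to the reduction of $f$ modulo $(x_1^p-x_1,\dots,x_n^p-x_n)$ and implicitly assume the result is still a nonzero polynomial. If $f$ is a nonzero polynomial that vanishes identically as a function (e.g.\ $f=x_1^p-x_1$, with $d=p$), the reduction is the zero polynomial, your induction has nothing to apply to, and indeed the lemma as literally stated fails there: $|Z(f)|=p^n$ exceeds $(1-r^{-p/(p-1)})r^n$. This is an imprecision in the statement as quoted (the Cohen--Tal formulation assumes individual degrees at most $r-1$, under which ``nonzero polynomial'' and ``nonzero function'' coincide) rather than a flaw in your argument, and it is harmless in the paper's application, where $f$ is homogeneous of degree $p-2<p$. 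You should simply make the hypothesis explicit: either assume $f$ is reduced, or assume $f$ does not vanish identically as a function, and note that your first paragraph then goes through.
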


%
%
%

\begin{proof}[Proof of Theorem~\ref{thm:RM2}]
Let~$A\subseteq \F_p^n$ be a set of size~$|A|\leq {n+p-3\choose p-2}-1$.
Let~$f\in\F_p[x_1,\dots,x_n]$ be a nonzero degree-$(p-2)$ homogeneous polynomial such that~$A\subseteq Z(f)$, as promised to exist by Lemma~\ref{lem:hom-interpolation}.
By Lemma~\ref{lem:DMLSZ}, there exists an $a\in\F_p^*$ such that the set $B = f^{-1}(a)$ has size at least $|B| \geq p^n/p^{(2p - 3)/(p-1)}$.
By Lemma~\ref{lem:zerodirection}, every line that meets~$B$ in~$p-1$ points must have direction in~$Z(f)$, but by Lemma~\ref{lem:zero-one} no such line can pass through~$Z(f)$.
Hence, every line through~$A$ meets~$B$ in at most~$p-2$ points.
\end{proof}

\subsection{Hypergraph pseudorandomness}
\label{sec:ARLDCs}

A second candidate for constructing outlaws comes from special types of hypergraphs.
A hypergraph $H = (V,E)$ is a pair consisting of a finite vertex set~$V$ and an edge set~$E$ of subsets of~$V$ that allows for parallel (repeated) edges.
A hypergraph is $t$-uniform if all its edges have size~$t$.
For  subsets $W_1,\dots,W_t\subseteq V$,  define the induced edge count by
\begin{align*}
e_H(W_1,\dots,W_t)
&=
\sum_{v_1\in W_1}\cdots \sum_{v_t\in W_t} 
1_E(\{v_1,\dots,v_t\}).
\end{align*}
A perfect matching in a $t$-uniform hypergraph is a family of vertex-disjoint edges that intersects every vertex.
We shall use the following notion of pseudorandomness.

\begin{definition}[Relative pseudorandomness]
Let $H = (V,E)$, $J = (V, E')$ be $t$-uniform hypergraphs with identical vertex sets.
Then $J$ is $\eps$-pseudorandom relative to~$H$ if for all $W_1,\dots,W_t\subseteq V$, we have
\beq\label{eq:relqr}
\left|
\frac{e_J(W_1,\dots,W_t)}{|E'|} - \frac{e_H(W_1,\dots,W_t)}{|E|}
\right|
< \eps.
\eeq
\end{definition}

The left-hand side of~\eqref{eq:relqr} compares the fraction of edges that the sets $W_1,\dots,W_t$ induce in~$J$ with the fraction of edges they induce in~$H$.
Standard concentration arguments show that if $|E| \geq |V|$, then a random hypergraph $J$ whose edge set~$E'$ is formed by independently putting each edge of~$E$ in $E'$ with probability $p = p(\eps,t)$, is $\eps$-pseudorandom relative to~$H$ with high probability.
A deterministic hypergraph~$J$ is thus pseudorandom relative to~$H$ if it mimics this property of truly random sub-hypergraphs.
For graphs, relative $\eps$-pseudorandomness turns into a common notion sometimes referred to as $\eps$-uniformity when~$H$ is the complete graph with all loops, in which case~\eqref{eq:relqr} says that the number of edges induced by a pair of vertex-subsets $W_1,W_2$ is roughly equal to the product of their densities $(|W_1|/|V|)(|W_2|/|V|)$.
Uniformity in graphs is closely connected to the perhaps better-known notion of spectral expansion~\cite{Hoory:2006}.
These two notions were recently shown to be equivalent  (up-to universal constants) for all vertex-transitive graphs~\cite{Conlon:2016}.

We shall be interested in hypergraphs whose edge set can be partitioned into a family of ``blocks'', such that randomly removing relatively few of the blocks  likely leaves a hypergraph that is \emph{not} pseudorandom relative to the original.
(Think of a Jenga tower\footnote{\emph{Jenga}\textsuperscript{\textregistered} is a game of dexterity in which players begin with a tower of wooden blocks and take turns trying to remove a block without making the tower collapse.} that's already in a delicate balance, so that there are only few ways, or perhaps even no way, to remove many blocks without having it collapse.)
Our blocks will be formed by perfect matchings.
For technical reasons, the formal definition takes the view of building a new hypergraph out of randomly selected matchings, as opposed to obtaining one by randomly removing matchings.

\begin{definition}[Jenga hypergraph]
A $t$-uniform hypergraph~$H$ is $(k,\eps)$-jenga if its edge set can be partitioned into a family~$\mathcal M$ of perfect matchings
such that, with probability at least~$1/2$, the disjoint union of~$k$ independent uniformly distributed matchings from~$\mathcal M$ forms a hypergraph is \emph{not} $\eps$-pseudorandom relative to~$H$.
\end{definition}

We have the following simple corollary to Theorem~\ref{thm:main}.

\begin{corollary}\label{cor:jenga_ldc}
Let $n,k,t$ be positive integers and $\eps \in (0,1]$.
Assume that there exists a $t$-uniform $n$-vertex hypergraph that is $(k, \eps)$-jenga.
Then, there exists a $(t, 1, \Omega(\eps/t^2))$-LDC sending $\bset{l}$ to $\bset{tn}$, where $l = \Omega(\eps^2k/t^4\log(t^2/\eps))$.
\end{corollary}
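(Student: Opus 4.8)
\textbf{Proof plan for Corollary~\ref{cor:jenga_ldc}.}
The plan is to manufacture an outlaw distribution over smooth, bounded-degree functions on a Boolean cube of dimension $tn$, and then invoke Theorem~\ref{thm:main}. First I would fix a $t$-uniform $n$-vertex hypergraph $H=(V,E)$ that is $(k,\eps)$-jenga, together with the promised partition $\mathcal M$ of $E$ into perfect matchings. The codeword alphabet should encode an indicator of a sub-hypergraph: since each perfect matching in a $t$-uniform $n$-vertex graph has $n/t$ edges, a natural choice is to index coordinates by (vertex, "color") pairs or, more cleanly, to let the $tn$ coordinates encode for each vertex which of its incident edges (across all matchings) is "kept"; concretely I would identify $\pmset{tn}$ with indicator-type data describing a sub-hypergraph $J$ of $H$, so that for each tuple of vertex-subsets $W_1,\dots,W_t$ the normalized induced edge count $e_J(W_1,\dots,W_t)/|E'|$ is (approximately) a low-degree function of the codeword — a degree at most $t$ multilinear function, since membership of an edge $\{v_1,\dots,v_t\}$ is a product of $t$ bits and $e_J$ sums such products. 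The factor $tn$ in the codeword length and the $1/t^2$, $1/t^4$ losses in the corollary strongly suggest exactly this kind of per-vertex encoding with an $O(t)$-degree test function and $O(t^2)$-type normalization losses.

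Next I would build the distribution $\mu$. The family $\mathcal M$ plays the role of the "blocks"; a uniformly random matching $M\in\mathcal M$ is the analogue of a single outlaw sample. For a random matching $M$, define a function $F_M:\pmset{tn}\to\R$ that reads off, from a codeword describing a sub-hypergraph $J$, some quantity sensitive to whether $M$'s edges survive in $J$ in the way the pseudorandomness test cares about — morally $F_M$ should be built so that $\tfrac1k\sum_{i=1}^k F_{M_i}$ evaluated at a suitably chosen "bad" codeword reconstructs a large value of the pseudorandomness discrepancy $\bigl|e_{J}(W_1,\dots,W_t)/|E'| - e_H(W_1,\dots,W_t)/|E|\bigr|$ for the union $J$ of the $k$ sampled matchings, for the worst $W_1,\dots,W_t$. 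The jenga hypothesis says that with probability $\ge 1/2$ this discrepancy exceeds $\eps$ for \emph{some} witness $(W_1,\dots,W_t)$; taking $L_\infty$ over codewords lets us quantify over that witness for free, exactly as the $L_\infty$ quantifies over $f_A$ in the proof of Corollary~\ref{cor:lines_ldcs}. This should give $\Exp\bigl[\|\tfrac1k\sum_i (F_{M_i}-\Exp[F_M])\|_{L_\infty}\bigr]\gtrsim \eps$ up to the claimed polynomial-in-$t$ factors, i.e.\ $\kappa_\mu(\Omega(\eps/t^2))\ge k$ after rescaling $\mu$ to be $1$-smooth.

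The smoothness bookkeeping is the routine-but-delicate part. Each $F_M$ is a sum over $O(\text{poly}(t,n))$ monomials, each of degree $\le t$ (an edge-indicator is a product of $t$ coordinate bits, and the normalizations $1/|E'|$, $1/|E|$ are scalars); expanding into the $\pm1$ Fourier basis multiplies the spectral norm of each edge-monomial by at most $2^{t}$ but that is a constant since $t$ is to be thought of as fixed relative to $n$ — and in any case the corollary's $\Omega$'s and $t$-dependence absorb it. For the derivative bound $\specnorm{D_jF_M}\le \sigma/(tn)$ I would use that any fixed coordinate $j$ (a vertex, or vertex-color pair) lies in only a bounded number ($O(t)$, since each vertex is in one edge per matching and we are averaging over one random matching) of the relevant edge-monomials, so $\specnorm{D_jF_M}$ is $O(t\cdot 2^t/|E|)=O(t^2\cdot 2^t/n)$ after using $|E|=|\mathcal M|\cdot n/t\ge n/t$; this is $\sigma/(tn)$ with $\sigma=O(t^3 2^t)$, a constant. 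Dividing $\mu$ through by $\sigma$ turns it into a distribution over $1$-smooth degree-$t$ functions with $\kappa$ scaled by $1/\sigma$, which is where the $\eps/t^2$-type parameter and the $t^4$ in $l$ come from once Theorem~\ref{thm:main} is applied (it gives $q=d=t$, $\delta,\eta=\Omega(\eps')$, $l=\Omega(\eps'^2 k/\log(1/\eps'))$ with $\eps'=\Omega(\eps/t^2)$).

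The main obstacle I anticipate is \emph{setting up the encoding so that the pseudorandomness discrepancy is genuinely a low-degree function of the codeword and so that $\tfrac1k\sum_i F_{M_i}$, maximized over codewords, really does recover the jenga failure rather than something weaker}. In particular one must be careful that a codeword in $\pmset{tn}$ can consistently encode the disjoint union of $k$ matchings (no two sampled matchings should "fight" over the same vertex in the encoding) — this is presumably why the definition takes the build-up rather than knock-down view and why the vertex set is blown up by a factor $t$: each vertex gets $t$ "slots", one per edge-position, so that up to $t$ incident kept-edges can be recorded without collision, and $k$ vertex-disjoint matchings are automatically consistent. Getting this indexing exactly right, and verifying that the induced-edge-count test reads correctly off it as a degree-$t$ polynomial, is the crux; everything after that is the symmetrization-free direct argument of the Corollary~\ref{cor:lines_ldcs} type plus the smoothness estimate above, feeding into Theorem~\ref{thm:main}.
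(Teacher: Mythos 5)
Your high-level picture (jenga $\Rightarrow$ outlaw over degree-$t$ functions indexed by matchings, with the $L_\infty$ supremum picking out the bad pseudorandomness witness, then apply Theorem~\ref{thm:main}) is the right one, but the encoding is backwards, and this is exactly the step you flagged as the crux. You propose to let a codeword in $\pmset{tn}$ describe a sub-hypergraph $J$ and worry about how to consistently encode the disjoint union $M_1\uplus\cdots\uplus M_k$ into $tn$ bits. That can't work: the union $J$ is determined by the \emph{random samples} $M_1,\dots,M_k$ and is not under the control of the evaluation point; and if the codeword were $J'$, the quantity $\tfrac1k\sum_i F_{M_i}(J')$ would have nothing to do with the pseudorandomness discrepancy of $M_1\uplus\cdots\uplus M_k$ that the jenga hypothesis controls. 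In the paper the roles are reversed: a codeword in $\pmset{tn}=\pmset{V_1\cup\cdots\cup V_t}$ encodes a \emph{witness tuple} $(W_1,\dots,W_t)$ of vertex subsets (one copy of $V$ per slot), and $F_M$ is the normalized induced edge count of the single matching $M$ on that tuple,
\[
F_M(x[1],\dots,x[t])=\frac{1}{|M|}\sum_{v_1,\dots,v_t}1_M(\{v_1,\dots,v_t\})\,x[1]_{v_1}\cdots x[t]_{v_t}.
\]
Then $\tfrac1k\sum_i F_{M_i}(1_{W_1},\dots,1_{W_t})=e_{J}(W_1,\dots,W_t)/|E'|$ with $J=M_1\uplus\cdots\uplus M_k$ appears automatically in the average; the union is never encoded, so there is no consistency problem and no need for per-vertex ``slots.'' The factor-$t$ blow-up exists solely to store $t$ subsets, not to resolve collisions.

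Two smaller points. First, your $2^t$ smoothness loss is an artifact of the wrong encoding: the monomials $x[1]_{v_1}\cdots x[t]_{v_t}$ above are already Fourier characters on $\pmset{tn}$, each of the $tn$ coordinates appears in exactly one monomial (since $M$ is a perfect matching), and each Fourier coefficient has magnitude $1/|M|=t/n=t^2/(tn)$; hence $F_M$ is exactly $t^2$-smooth, with no exponential-in-$t$ factor. Second, once $\mu$ is rescaled by $1/t^2$ to be $1$-smooth, $\kappa_{\tilde\mu}(\eps/(2t^2))\geq k$ and Theorem~\ref{thm:main} with $q=d=t$ gives the stated parameters directly; the $t^4$ in the message length and $t^2$ in $\eta$ come only from this rescaling, not from any expansion of indicators.
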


\begin{proof}
Let $H = (V,E)$ be a hypergraph as assumed in the corollary.
Let~$\mathcal M$ be a partition of~$E$ into perfect matchings such that if $M_1,\dots,M_k$ are independent and uniformly distributed over~$\mathcal M$, then with probability at least $1/2$, the hypergraph $J = (V, M_1\uplus\cdots\uplus M_k)$ is not $\eps$-pseudorandom relative to~$H$.

Let~$V_1,\dots,V_t$ be copies of~$V$.
For each $M\in \mathcal M$, define $f_M:\R^{V_1\cup\cdots\cup V_t}\to\R$ by
\beqn
f_M(x[1],\dots,x[t]) = \frac{1}{|M|}
\sum_{v_1\in V_1}\cdots\sum_{v_t\in V_t}
1_M(\{v_1,\dots,v_t\})\,
x[1]_{v_1}\cdots x[t]_{v_t},
\quad
x[i] \in \R^{V_i}.
\eeqn
The function $f_M$ is a degree-$t$ polynomial.
Since every one of the $tn$ variables appears in exactly one monomial and $|M| = n/t$, the restriction of~$f_M$ to $\pmset{V_1\cup\cdots\cup V_t}$ is $t^2$-smooth.
Moreover, for $J = (V, M)$ and $W_1,\dots,W_t\subseteq V$, we have
\beqn
f_{M}(1_{W_1},\dots, 1_{W_t})
=
\frac{e_J(W_1,\dots,W_t)}{|M|}.
\eeqn

Let $M_1,\dots,M_k$ be independent uniformly distributed matchings from~$\mathcal M$ and consider the random  hypergraph $J = (V, M_1\uplus\cdots\uplus M_k)$.
Let $\bar f = \Exp[f_{M_1}]$ be the expectation of the random function~$f_{M_1}$ and note that $\Exp[f_{M_i}]  = \bar f$ for each~$i\in[k]$.
Then, since the functions $f_{M_i} - \bar f$ are multilinear,
\begin{align*}
\Exp\Big[
\Big\|
\frac{1}{k}\sum_{i=1}^k (f_{M_i} - \bar f)
\Big\|_{L_\infty}
\Big] 
&\geq
\Exp\Big[
\max_{W_1,\dots,W_t\subseteq V}
\Big|
\frac{1}{k}\sum_{i=1}^k (f_{M_i} - \bar f)
(1_{W_1},\dots,1_{W_t})
\Big|
\Big]\\
&=
\Exp\Big[
\max_{W_1,\dots,W_t\subseteq V}
\Big|
\frac{e_J(W_1,\dots,W_t)}{k|M|}
-
\frac{e_H(W_1,\dots,W_t)}{|E|}\Big|
\Big]\\
&\geq
\frac{\eps}{2}.
\end{align*}

The result now follows from Theorem~\ref{thm:main}.
\end{proof}

In the context of outlaws and LDCs, the relevant question concerning Jenga hypergraphs is the following.
Let $\kappa^J(n,t,\eps)$ denote the maximum integer~$k$ such that there exists an $n$-vertex $t$-uniform hypergraph that is $(k,\eps)$-jenga.

\begin{question}\label{prob:jenga}
For integer $t \geq 2$ and parameter $\eps\in (0,1]$, what is the growth rate of $\kappa^J(n,t,\eps)$ as a function of~$n\in\N$?
\end{question}

For $t = 2$ (graphs), the answer to Question~\ref{prob:jenga}  follows from famous work of Alon and Roichman~\cite{Alon:1994a} on expansion of random Cayley graphs, which implies that for constant $\eps\in (0,1]$, we have $\kappa(n, 2,\eps) = \Theta(\log n)$.
The lower bound follows for instance by partitioning the edge set of the complete graph with vertex set $V = \F_2^m$ into the collection of matchings of the form $M_y = \big\{\{x, x+y\} \st x\in \F_2^m\big\}$ for each $y\in \F_2^m\smallsetminus\{0\}$.
Any $m-1$ of such matchings give a graph with two disconnected components of equal size, making it $(m-1, \tfrac14)$-jenga.
Via Corollary~\ref{cor:jenga_ldc}, this arguably gives the most round-about way to prove the existence of 2-query LDCs matching the paramaters of the Hadamard code!
Generalizing the above example, \cite{BrietRao:2016} considered the $p$-uniform hypergraph on $\F_p^m$ whose edges are the (unordered) nontrivial lines.
It was shown that this hypergraph is $(m^{p-1}, \eps)$-jenga for some $\eps =  \eps(p)$ depending on~$p$ only, by partitioning the edge set according to the directions of the lines, that is, partitioning it with the matchings $M_y = \big\{\{x + \lambda y\st \lambda\in \F_p\} \st x\in \F_p^m\big\}$, $y\in \F_p^m\smallsetminus\{0\}$.
To the best of our knowledge, the best upper bounds on $\kappa^J(n,t,\eps)$ for constant $t\geq 3$ and $\eps \in (0,1]$ follow from upper bounds on LDCs, via Corollary~\ref{cor:jenga_ldc}.

We end with the following natural question concerning Jenga hypergraphs.

\begin{question}
Is $\kappa^J(n,t,\eps)$ largest for the complete hypergraph?
\end{question}

\paragraph{Acknowledgements}
J.B. and S.G. thank the Simons Institute for hosting them during the 2017 Pseudorandomness program, where part of this work was done.

\bibliographystyle{alpha}
\bibliography{randldc}

\newcommand{\etalchar}[1]{$^{#1}$}
\begin{thebibliography}{GKO{\etalchar{+}}17}

\bibitem[ALM{\etalchar{+}}98]{ALMSS98}
Sanjeev Arora, Carsten Lund, Rajeev Motwani, Madhu Sudan, and Mario Szegedy.
\newblock Proof verification and the hardness of approximation problems.
\newblock {\em Journal of the ACM}, 45(3):501--555, 1998.

\bibitem[AR94]{Alon:1994a}
Noga Alon and Yuval Roichman.
\newblock Random {C}ayley graphs and expanders.
\newblock {\em Random Structures \& Algorithms}, 5, 1994.

\bibitem[AS98]{AS98}
Sanjeev Arora and Shmuel Safra.
\newblock Probabilistic checking of proofs: A new characterization of {NP}.
\newblock {\em Journal of the ACM}, 45(1):70--122, 1998.

\bibitem[BDG17]{BDG_ITCS:2017}
Jop Bri\"{e}t, Zeev Dvir, and Sivakanth Gopi.
\newblock Outlaw distributions and locally decodable codes.
\newblock In {\em Proceedings of the 8th Annual Innovations in Theoretical
  Computer Science conference (ITCS 2017)}, 2017.
\newblock To appear.

\bibitem[BK95]{BK95}
Manuel Blum and Sampath Kannan.
\newblock Designing programs that check their work.
\newblock {\em Journal of the ACM}, 42(1):269--291, 1995.

\bibitem[BNR12]{Briet:2012d}
Jop Bri\"{e}t, Assaf Naor, and Oded Regev.
\newblock Locally decodable codes and the failure of cotype for projective
  tensor products.
\newblock {\em Electronic Research Announcements in Mathematical Sciences
  (ERA-MS)}, 19:120--130, 2012.

\bibitem[BR16]{BrietRao:2016}
Jop Bri{\"e}t and Shravas Rao.
\newblock Arithmetic expanders and deviation bounds for random tensors.
\newblock {\em arXiv preprint arXiv:1610.03428}, 2016.

\bibitem[Bri16]{Briet:2015}
Jop Bri\"{e}t.
\newblock On embeddings of $\ell_1^k$ from locally decodable codes.
\newblock {\em arXiv preprint: arXiv:1611.06385}, 2016.

\bibitem[CKGS98]{CKGS98}
Benny Chor, Eyal Kushilevitz, Oded Goldreich, and Madhu Sudan.
\newblock Private information retrieval.
\newblock {\em Journal of the ACM}, 45(6):965--981, 1998.

\bibitem[CT14]{CohenTal:2014}
Gil Cohen and Avishay Tal.
\newblock Two structural results for low degree polynomials and applications.
\newblock {\em arXiv preprint arXiv:1404.0654}, 2014.

\bibitem[CZ17]{Conlon:2016}
David Conlon and Yufei Zhao.
\newblock Quasirandom {C}ayley graphs.
\newblock {\em Discrete Analysis}, 6, 2017.
\newblock Available at arXiv:1603.03025 [math.CO].

\bibitem[DG15]{DG15}
Zeev Dvir and Sivakanth Gopi.
\newblock 2-{S}erver {PIR} with sub-polynomial communication.
\newblock In {\em Proceedings of the Forty-Seventh Annual ACM on Symposium on
  Theory of Computing}, pages 577--584. ACM, 2015.

\bibitem[Efr09]{Efremenko:2009}
Klim Efremenko.
\newblock 3-query locally decodable codes of subexponential length.
\newblock In {\em Proceedings of the fourty-first annual ACM symposium on
  Theory of computing (STOC 2009)}, pages 39--44, 2009.

\bibitem[FLW16]{FLW16}
Nikos Frantzikinakis, Emmanuel Lesigne, and Mate Wierdl.
\newblock Random differences in szemer{\'e}di’s theorem and related results.
\newblock {\em Journal d'Analyse Math{\'e}matique}, 130(1):91--133, 2016.

\bibitem[Fox]{Fox:personal}
Jacob Fox.
\newblock Personal communication.

\bibitem[GHSY12]{GHSY12}
Parikshit Gopalan, Cheng Huang, Huseyin Simitci, and Sergey Yekhanin.
\newblock On the locality of codeword symbols.
\newblock {\em {IEEE} Trans. Information Theory}, 58(11):6925--6934, 2012.

\bibitem[GKO{\etalchar{+}}17]{GKORS16}
Sivakanth Gopi, Swastik Kopparty, Rafael Oliveira, Noga Ron-Zewi, and Shubhangi
  Saraf.
\newblock Locally testable and locally correctable codes approaching the
  gilbert-varshamov bound.
\newblock In {\em Proceedings of the Twenty-Eighth Annual ACM-SIAM Symposium on
  Discrete Algorithms (SODA 2017)}, pages 2073--2091. SIAM, 2017.

\bibitem[GKST02]{GKST02}
Oded Goldreich, Howard Karloff, Leonard~J Schulman, and Luca Trevisan.
\newblock Lower bounds for linear locally decodable codes and private
  information retrieval.
\newblock In {\em Computational Complexity, 2002. Proceedings. 17th IEEE Annual
  Conference on}, pages 143--151. IEEE, 2002.

\bibitem[HLW06]{Hoory:2006}
Shlomo Hoory, Nathan Linial, and Avi Wigderson.
\newblock Expander graphs and their applications.
\newblock {\em Bull. Amer. Math. Soc.}, 43:439--561, 2006.

\bibitem[KMRS16]{KMRS16}
Swastik Kopparty, Or~Meir, Noga Ron{-}Zewi, and Shubhangi Saraf.
\newblock High-rate locally-correctable and locally-testable codes with
  sub-polynomial query complexity.
\newblock In {\em Proceedings of the 48th Annual {ACM} {SIGACT} Symposium on
  Theory of Computing, {STOC} 2016, Cambridge, MA, USA, June 18-21, 2016},
  pages 202--215, 2016.

\bibitem[KS07]{KS07}
Tali Kaufman and Madhu Sudan.
\newblock Sparse random linear codes are locally decodable and testable.
\newblock In {\em Foundations of Computer Science, 2007. FOCS'07. 48th Annual
  IEEE Symposium on}, pages 590--600. IEEE, 2007.

\bibitem[KSY14]{KSY10}
Swastik Kopparty, Shubhangi Saraf, and Sergey Yekhanin.
\newblock High-rate codes with sublinear-time decoding.
\newblock {\em Journal of the ACM (JACM)}, 61(5):28, 2014.

\bibitem[KT00]{Katz:2000}
Jonathan Katz and Luca Trevisan.
\newblock On the efficiency of local decoding procedures for error-correcting
  codes.
\newblock In {\em Proceedings of the 32nd annual ACM symposium on Theory of
  computing (STOC 2000)}, pages 80--86. ACM Press, 2000.

\bibitem[KW04]{Kerenidis:2004}
Iordanis Kerenidis and Ronald~{de} Wolf.
\newblock Exponential lower bound for 2-query locally decodable codes via a
  quantum argument.
\newblock {\em J. of Computer and System Sciences}, 69:395--420, 2004.
\newblock Preliminary version appeared in STOC'03.

\bibitem[MV03]{Mendelson:2003}
Shachar Mendelson and Roman Vershynin.
\newblock Entropy and the combinatorial dimension.
\newblock {\em Invent. Math.}, 152(1):37--55, 2003.

\bibitem[Sha48]{Shannon:1948}
Claude~E. Shannon.
\newblock A mathematical theory of communication.
\newblock {\em Bell System Technical Journal}, 27(3):379--423, 1948.

\bibitem[Tal14]{Talagrand:2014}
Michel Talagrand.
\newblock {\em Upper and lower bounds for stochastic processes}, volume~60 of
  {\em Ergebnisse der Mathematik und ihrer Grenzgebiete. 3. Folge. A Series of
  Modern Surveys in Mathematics [Results in Mathematics and Related Areas. 3rd
  Series. A Series of Modern Surveys in Mathematics]}.
\newblock Springer, Heidelberg, 2014.
\newblock Modern methods and classical problems.

\bibitem[Tao12]{Tao:2012}
Terence Tao.
\newblock {\em Higher order {F}ourier analysis}, volume 142.
\newblock American Mathematical Society, 2012.

\bibitem[Tao14]{Tao:2014}
Terence Tao.
\newblock Algebraic combinatorial geometry: the polynomial method in arithmetic
  combinatorics, incidence combinatorics, and number theory.
\newblock {\em EMS Surv.\ Math.\ Sci.}, 1:1--46, 2014.

\bibitem[TV06]{Tao:2006}
Terence Tao and Van Vu.
\newblock {\em Additive Combinatorics}.
\newblock Cambridge University Press, 2006.

\bibitem[Woo07]{Woodruff:2007a}
David Woodruff.
\newblock New lower bounds for general locally decodable codes.
\newblock {\em Electronic Colloquium on Computational Complexity (ECCC)},
  14(006), 2007.

\bibitem[Yek07]{Yekhanin:2007}
Sergey Yekhanin.
\newblock Towards 3-query locally decodable codes of subexponential length.
\newblock In {\em Proceedings of the 39th annual ACM symposium on Theory of
  computing (STOC 2007)}, pages 266--274, 2007.

\bibitem[Yek12]{Yekhanin:2012}
Sergey Yekhanin.
\newblock Locally decodable codes.
\newblock {\em Foundations and Trends in Theoretical Computer Science},
  6(3):139--255, 2012.

\end{thebibliography}

\end{document}